\newtheorem{theorem}{Theorem}[section]
\newtheorem{lemma}{Lemma}[section]
\newtheorem{claim}{Claim}[section]
\newtheorem{proposition}{Proposition}[section]
\newtheorem{observation}{Observation}
\newcommand{\qed}{\hfill $\Box$ \bigbreak}
\newenvironment{proof}{\noindent {\bf Proof.}}{\qed}
\newcommand{\cA}{{\cal A}}
\newcommand{\remove}[1]{}
\def\idtt#1{\ensuremath{\mathtt{#1}}}
\def\len{\idtt{len}}
\def\leftFast{\idtt{left\_fast}}
\def\rightSlow{\idtt{right\_slow}}
\def\rightFast{\idtt{right\_fast}}
\def\negFast{\idtt{(-1)-fast}}
\def\plusFast{\idtt{(+1)-fast}}
\def\negSlow{\idtt{(-1)-slow}}
\def\plusSlow{\idtt{(+1)-slow}}
\def\idle{\idtt{idle}}
\def\senti{\idtt{sentinel}}
\def\hiber{\idtt{hibernate}}
\def\negColl{\idtt{(-1)-collect}}
\def\plusColl{\idtt{(+1)-collect}}
\def\maxDist{\idtt{max\_dist}}
\begin{document}

\baselineskip  0.19in 
\parskip     0.05in 
\parindent   0.3in 

\title{{\bf Gathering Teams of Bounded Memory Agents \\ on a Line
\footnote{A preliminary version of this paper appeared at the 28th International Conference on Principles of Distributed Systems (OPODIS 2024).}
 }}
\date{}
\newcommand{\inst}[1]{$^{#1}$}

\author{
Younan Gao\inst{1}$^,$\footnote{The work was done while Younan Gao was a postdoctoral researcher at the Universit\'e du Qu\'{e}bec en Outaouais. Younan Gao was partially supported by the Research Chair in Distributed Computing at the Université du Québec en Outaouais and MUR 2022YRB97K, PINC, Pangenome Informatics: from Theory to Applications.},
Andrzej Pelc\inst{2}$^,$\footnote{Partially supported by NSERC discovery grant RGPIN-2024-03767 and by the Research Chair in Distributed Computing at the Universit\'e du Qu\'{e}bec en Outaouais.}\\
\inst{1} University of Milano-Bicocca, Milan, Italy.\\
\inst{2} Universit\'{e} du Qu\'{e}bec en Outaouais, Gatineau, Canada.\\
E-mails: \url{younan.gao@unimib.it}, \url{ pelc@uqo.ca}\\
}

\date{ }
\maketitle

\begin{abstract}
 Several mobile agents, modelled as deterministic automata, navigate in an infinite line in synchronous rounds.
 All agents start in the same round. In each round, an agent can move to one of the two neighboring nodes, or stay idle.  Agents have distinct labels which are integers from the set $\{1,\dots, L\}$. They start in teams, and all agents in a team have the same starting node. The adversary decides the compositions of teams, and their starting nodes. Whenever an agent enters a node, it sees the entry port number and the states of all collocated agents; this information forms the input of the agent on the basis of which it transits to the next state and decides the current action. The aim is for all agents to gather at the same node and stop. Gathering is feasible, if this task can be accomplished for any decisions of the adversary, and its time is the worst-case number of rounds from the start till gathering.
 
 We consider the feasibility and time complexity of gathering teams of agents, and give a complete solution of this problem. It turns out that both feasibility and complexity of gathering depend on the sizes of teams. We first  concentrate on the case when all teams have the same size $x$. For the oriented line, gathering is impossible if $x=1$, and it can be accomplished in time $O(D)$, for $x>1$, where $D$ is the distance between the starting nodes of the most distant teams. This complexity is of course optimal. For the unoriented line, the situation is different. For $x=1$, gathering is also impossible, but for $x=2$, the optimal time of gathering is $\Theta(D\log L)$, and for $x\geq 3$, the optimal time of gathering is $\Theta(D)$.
 In the case when there are teams of different sizes, we show that gathering is always possible in time $O(D)$, even for the unoriented line. This complexity is of course optimal.

\vspace{2ex}

\noindent {\bf Keywords:}  Gathering, Deterministic automaton, Mobile agent, Team of agents, Line, Time. 
\end{abstract}

\vfill

\vfill

\thispagestyle{empty}
\setcounter{page}{0}
\pagebreak

\section{Introduction}\label{sec1}

Several mobile agents, navigating in a network, have to meet at the same node. This basic task, known as gathering, has been thoroughly studied in the literature. It has many applications, both in everyday life and in computer science. Mobile agents can be people who have to meet in a city whose streets form a network. In computer science applications, agents can represent either software agents in computer networks, or mobile robots in networks formed by, e.g.,  corridors of a contaminated mine, where the access is dangerous for humans.
The purpose of gathering may be to exchange data previously collected by software agents that accessed a distributed database, or to coordinate some future task of mobile robots, such as network decontamination, based on previously collected samples of the ground or of air. 
Since, in view of cost reasons, mobile agents are often simple devices with limited memory, sensor capacity and computation power, it is natural to model them as deterministic automata that can interact only when they meet. 

We consider the feasibility and time complexity of gathering mobile agents, modelled as automata, in one of the simplest networks that is the infinite line, i.e. the infinite graph all of whose nodes have degree 2. While the gathering problem in networks has a long research history, mobile agents in this context were usually modelled either as machines with unbounded memory or it was assumed that agents can ``see'' other agents at a distance. To the best of our knowledge, this classic problem has never been investigated before for agents modelled as deterministic automata with bounded memory, navigating in an infinite environment and interacting with other agents only at meetings.\footnote{It could seem that the recent work \cite{PP2} is a counterexample to this rule. However, the focus of  \cite{PP2} is on computing functions by teams of automata, and the environment in which agents navigate is the rooted oriented half-line. In this graph, gathering is trivial: all agents go toward the root and stop.}

\subsection{The model}
The environment in which agents navigate is modelled as an infinite line, i.e., the infinite graph all of whose nodes have degree 2. Nodes of the line do not have labels, and ports at each node are labeled $-1$ and $1$. We consider two variations of this port-labeled graph: in the {\em oriented line}, ports corresponding to any edge at both extremities of it are different; in the {\em unoriented line}, port labeling at any node is arbitrary.

There are several agents navigating in a line. They are
modelled as deterministic Mealy automata. Agents move in synchronous rounds. 
All agents start in the same round 0.
In each round, an agent can choose one of the following actions: take the port $-1$, take the port 1, or stay idle.
Agents have distinct labels which are integers from the set $\{1,\dots, L\}$. 
Agents start in $R$ teams, and the total number of agents is at most $L$. Agents in a team have the same starting node, called the {\em base} of the team. The sizes $x_{max}$ and $x_{min}$ of the largest and smallest teams,
the size $L$ of the space of labels, and the total number $R$ of teams are known in advance and can be used in the design of the agents. Throughout this article, we assume that $R>1$. If there is only one team, gathering is trivially achieved in the wake-up round.
The adversary knows all the agents and decides the composition of teams and their bases. In the case of the unoriented line, the adversary also decides the port labeling at each node.

When entering a node, an agent sees the entry port number, and 
sees the set of states of all currently collocated agents. If in the current round the agent stays idle, it also sees this set of states. (This is how agents interact). This information, together with the size $L$ of the label space and  the number $R$ of teams, forms the input of the agent in a given round. ($L$ and $R$ form the fixed part of the input that never changes).
Based on its state and this input, the agent transits to some state, and produces an output action belonging to the set $\{-1,1,0\}$, to be executed in the next round. Output action $-1$ means taking port $-1$, output action $1$ means taking port $1$, and output action $0$ means staying idle in the next round. 
The sequence of the above actions forms the {\em trajectory} of an agent. 
Each agent has a special state STOP, upon transition to which, it stays idle forever.

We now formalize the above intuitive description. All agents are represented by the same deterministic Mealy automaton
${\cA}=({\cal I},O,Q,\delta,\lambda$).
The agent with label $\ell\in \{1,\dots ,L\}$ has the set of states $Q^{\ell}$, where all sets $Q^{\ell}$ are pairwise disjoint. Thus the label of an agent can be recognized from each of its states.
$Q=Q^{1} \cup \cdots \cup Q^{L}$ denotes the union of sets of states of all the agents.
The size of $Q$, i.e., the number of states of the automaton, is denoted by $H$. It is a function of the size $L$ of the label space.
Let $\cal Q$ be the set of all subsets of $Q$. 
The common input alphabet for all agents is the set ${\cal I}=\{(L,R)\}\times\{-1,1,0\} \times {\cal Q} $. This corresponds to the intuition that an input $I\in {\cal I}$ is a triple whose first term is the {\em fixed part of the input} consisting of the pair of integers $(L,R)$,  whose second term is the {\em move part of the input} which is either the entry port number in the current round or 0 if the agent is idle in the current round (it is also 0 in round 0), and whose third term is the {\em states part of the input} which is the set of states of other currently collocated agents. Note that this set may be empty, if the agent is currently alone. The common output alphabet $O=\{-1,1,0\}$ corresponds to the output actions intuitively described above.

It remains to define the transition function $\delta$ and the output function $\lambda$.
The transition function $\delta:Q \times {\cal I} \to Q$ takes the current state of an agent and its current input, and produces the state to which this agent transits in the next round. The fact that $Q^{\ell}$ is the set of states of agent $\ell$ is reflected by the following restriction: 
if $q\in Q^{\ell}$ then $\delta(q, I) \in Q^{\ell}$, for any input $I$, i.e., under any input, the transition function maps a state of a given agent to a state of the same agent.
The output function $\lambda:Q \times {\cal I} \to O$ takes the current state of an agent and its current input, and produces the output action to be executed in the next round.

According to the established custom in the literature on automata navigating in graphs, we present the behavior of our automata by designing procedures that need only remember a bounded number of bits, and thus can be executed by deterministic automata, rather than formally describing the construction of a Mealy automaton by defining its output and state transition functions.

The aim is for all agents to gather at the same node and transit to state STOP. The first round in which all the agents are at the same node in state STOP is called the {\em gathering round}.  Gathering is feasible, if this task can be accomplished for any decisions of the adversary, and its time is the worst-case (largest) gathering round, over all decisions of the adversary.


\subsection{Our results}

We give a complete solution of the problem of feasibility and time complexity of gathering teams of agents on the infinite line. 
It turns out that both feasibility and complexity of gathering depend on the sizes of teams. We first  concentrate on the case $x_{max}=x_{min}=x$, i.e.,  when all teams have the same size $x$. For the oriented line, gathering is impossible if $x=1$, and it can be accomplished in time $O(D)$, for any $x>1$, where $D$ is the distance between the bases of the most distant teams. The first fact means that, for $x=1$ and for arbitrary agents, the adversary can place them in the oriented line (at distinct nodes) in such a way that they will never gather. The second fact means that, for any $x>1$ and any $R>1$, there exist $xR$ agents, each assigned a distinct label from
$\{1,\dots,L\}$, where $L\geq xR$, such that if the adversary composes them in arbitrary $R$ teams of size $x$ and selects the $R$ bases of the teams as arbitrary nodes of the oriented line with the most distant nodes at distance $D$, then the agents will gather in time $O(D)$.
This complexity is of course optimal. 

For the unoriented line, the situation is different. For $x=1$, gathering is also impossible which means that, for $x=1$ and for arbitrary agents,  the adversary can choose a port labeling of the line, and can place the agents at distinct nodes in such a way that they will never gather. This directly follows from the above impossibility result on the oriented line, as the adversary can choose the port labeling as in the oriented line. 

However, 
for $x=2$, the optimal time of gathering in the unoriented line turns out to be $\Theta(D\log L)$. To show this we prove two facts. First, we show that 
there exist $2R$ agents, each assigned a distinct label from
$\{1,\dots,L\}$, where $L\geq 2R$,
with the property that if the adversary chooses an arbitrary port labeling of the line, composes the agents in arbitrary $R$ teams of size $2$ and selects the $R$ bases of the teams as arbitrary nodes of the line with the most distant nodes at distance $D$, then the agents will gather in time $O(D\log L)$. Second, we prove that this complexity is optimal, even for two teams of agents, each of size 2.
In fact, we show that the ``difficult'' port labeling of the line can be chosen the same for any agents: it is the {\em homogeneous} port labeling in which, for every edge $e$, port numbers at both extremities of $e$ are equal.  
More precisely, we show that for any agents, the adversary can select two teams of agents of size 2 and choose bases of these teams as nodes at an arbitrarily large distance $D$ on the line with homogeneous port labeling, so that the gathering time will be at least   $cD\log L$, for some constant $c$. This shows that the complexity $O(D\log L)$ is tight.

We also prove that, for any $x>2$, the optimal time of gathering in the unoriented line is $\Theta(D)$.
This means that, for any $x>2$ and any $R>1$, there exist $xR$ agents with the property that if the adversary arbitrarily chooses the port labeling of the line,  composes the agents in arbitrary $R$ teams of size $x$ and selects the $R$ bases of the teams as arbitrary nodes with the most distant nodes at distance $D$, then the agents will gather in time $O(D)$. This complexity is of course optimal. 

The above described results show the solutions in the case when all teams of agents have the same size $x$. To complete the picture, we show that,
in the case when $x_{max}>x_{min}$, i.e., when there are teams of different sizes, gathering is always possible in time $O(D)$, even for the unoriented line. This complexity is of course optimal. 

Solving the gathering problem for agents that are deterministic automata navigating in an infinite environment requires new methodological tools. Traditional gathering techniques in graphs are {\em count driven}: agents make decisions based on counting steps. Since distances between agents may be unbounded, agents have to count unbounded numbers of steps.
When agents are modelled by automata, counting unbounded numbers of steps is impossible\footnote{This is the reason for our impossibility result in case of teams of size 1.}, hence we must use different methods. In all our gathering algorithms, changes of the agents' behavior are triggered not by counting steps but by events which are meetings between agents during which they interact. Hence our new technique is {\em event driven}. Designing the behavior of the agents based on meeting events, so as to guarantee gathering regardless of the adversary's decisions is our main methodological contribution.


\subsection{Related work}

Gathering mobile agents in graphs, also called {\em rendezvous} if there are only two agents, is a well studied topic in the distributed computing literature. Many scenarios have been adopted, concerning the capabilities and the behavior of the agents.

In the majority of the works on gathering, it is assumed that nodes do not have distinct identities, and agents cannot mark nodes: we follow these assumptions in the present article. However, departures from this model exist:
rendezvous was also considered in graphs whose nodes are labeled \cite{CCGKM,MP}, or when marking nodes by agents using tokens is allowed \cite{KKSS}.
For a survey of  randomized rendezvous we refer to the book
\cite{alpern02b}. 
Deterministic rendezvous in graphs was surveyed in \cite{Pe2}.

Most of the literature on rendezvous considered finite graphs and assumed the synchronous scenario, where
agents move in rounds.
In \cite{DFKP}, it was assumed that agents have distinct identities, and the  authors studied rendezvous time in trees, rings and arbitrary graphs.
In particular, they showed a rendezvous algorithm working in rings in time $O(D\log L)$, where $D$ is the initial distance between the agents and $L$ is the size of the label space. The algorithm from \cite{DFKP}, based on agents' labels transformation, was an inspiration of our procedure {\tt Dance}.

In \cite{TSZ07}, the authors presented rendezvous algorithms with time polynomial in the size of the graph and the length of agents' labels.
Gathering many agents in the presence of Byzantine agents that can behave arbitrarily was discussed in \cite{BDD,BDL}.
In this case only gathering among fault-free agents is required but Byzantine agents make it more difficult.

In the above cited works, agents are modeled as Turing machines and their memory is unbounded. Other studies concern 
the minimum amount of memory that agents must have in order to accomplish rendezvous \cite{CKP,FP}. In this case, agents are modeled as state machines and the number of states is a function of the size of graphs in which they operate.


%

Several authors studied synchronous rendezvous in infinite graphs. In all cases, agents had distinct identities and were modeled as Turing machines.
In \cite{CCGKM}, the authors considered rendezvous in infinite trees and grids,  under the assumption that the agents know their location in the graph (then the initial location can serve as a label). 
Rendezvous in the oriented grid was investigated heavily under various scenarios \cite{BCGIL,BBDDP,CCGKM}. 

In several works, asynchronous gathering was studied in the plane \cite{CFPS,fpsw} and 
in graphs
\cite{BBDDP,BCGIL,DPV,Stach}. In the plane, agents are modeled as moving points, and it is usually assumed that they can see the positions of other agents.
In graphs, an agent chooses the edge to traverse, but the adversary controls the speed of the agent. Then rendezvous
at a node cannot be guaranteed even in the two-node graph,  hence the agents are permitted to meet inside an edge.
For asynchronous rendezvous, the optimization criterion is the cost, i.e.,  the total number of edge traversals.
In \cite{BCGIL}, the authors designed almost optimal algorithms for asynchronous rendezvous in infinite multidimensional grids, assuming that an agent knows its position in the grid. In \cite{BBDDP,DPV, Stach} this assumption was replaced by a weaker assumption that agents have distinct identities. In \cite{BBDDP}, a polynomial-cost algorithm was designed for the infinite oriented two-dimensional grid. \cite{Stach} considered asynchronous rendezvous in the infinite line and  \cite{DPV} -- in arbitrary finite graphs. Tradeoffs between cost of  asynchronous gathering and memory size of the agents operating in arbitrary trees were investigated in \cite{BIOKM}.

\section{Preliminaries}\label{sec:prelim}

For any port labeling of a line, we will use two notions describing the two directions of the line.  For any node, we define the {\em plus direction} and the {\em minus direction} as the directions corresponding to port $1$ and port $-1$, respectively.
For the oriented line, these directions do not depend on the node, and for an arbitrary port labeling, they do. Moreover, we define the {\em left} and {\em right} direction of any line. For the oriented line, the left (resp. right) direction is the minus (resp. plus) direction. For any other port labeling, these directions are chosen arbitrarily. Hence, in the oriented line,
agents can identify directions left and right. For an arbitrary port labeling,  agents cannot identify them, so we will use these expressions only in comments and in the analysis. For any two nodes $v$ and $v'$ in a line, we define their distance $dist(v,v')$ as the number of edges between them.

Agents are identified with their labels. For any port labeling of a line, and for any base of a single agent navigating in the line, we define the {\em trajectory} of this agent as the infinite sequence of terms $-1,0,1$ with the following meaning. The $i$th term of the trajectory is 0 if the agent stays put in the $i$th  round, it is $-1$ if the agent takes port $-1$ in the $i$th round,  and  it is 1 if the agent takes port 1 in the $i$th round.

We say that a trajectory of an agent has a {\em period} of length $\tau$ if there exists an integer $t \geq 1$, such that, for any fixed $0\leq i <\tau$, the $(t+j\tau+i)$th term of the trajectory is the same for all $j\geq 0$.
The sequence of terms $-1,0,1$ corresponding to indices
$t+j\tau,t+j\tau+1,\dots,t+j\tau +\tau-1$ is called a period of this trajectory, and the sequence of terms $-1,0,1$ corresponding to indices $1,\dots ,t-1$ is called a {\em prefix} corresponding to this period.
A trajectory that has a period is called {\em periodic}.

Apart from the port labeling that yields the oriented line (port numbers at both extremities of each edge are different), we consider 
another important port labeling, called 
{\em homogeneous}, in which, for every edge $e$, port numbers at both extremities of $e$ are equal.  A line with this port labeling will be called homogeneous.

\begin{proposition}\label{periodic}
	The trajectory of any agent navigating either in the oriented or in the homogeneous line and starting at any node of it is periodic.
\end{proposition}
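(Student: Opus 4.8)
The plan is to recognize that a single agent navigating alone is driven by a finite-state deterministic process whose input stream is generated by the agent itself, and then to close the argument with a pigeonhole (finite-orbit) observation. First I would note that, since the proposition concerns one agent navigating in the line, there are never any collocated agents, so the states part of every input is the empty set; combined with the fact that the fixed part $(L,R)$ never changes, this means the input $I_i$ received in round $i$ is completely determined by its move part $m_i\in\{-1,0,1\}$, i.e. by the entry port in round $i$ (or $0$ if the agent was idle, and also $0$ in round $0$). The initial configuration is thus $(q_0,0)$, with $q_0$ the starting state, and the starting node plays no role because, as the next step shows, the dynamics are position independent.

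The crucial step, and the only place the hypothesis on the port labeling is used, is to show that $m_i$ is a deterministic function of the action $a_{i-1}$ taken in round $i-1$. In the oriented line the two ports of every edge differ, so moving along port $1$ (resp.\ $-1$) makes the agent enter the next node through port $-1$ (resp.\ $1$); in the homogeneous line the two ports of every edge coincide, so moving along port $1$ (resp.\ $-1$) causes entry through port $1$ (resp.\ $-1$); and in both labelings an idle round produces move part $0$. Hence there is a fixed map $f\colon\{-1,0,1\}\to\{-1,0,1\}$, namely negation in the oriented case and the identity in the homogeneous case (both fixing $0$), with $m_i=f(a_{i-1})$. This is precisely where an arbitrary labeling would break the argument: there the entry port could depend on the current node rather than only on the previous action, so the input could carry unbounded positional information.

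With this in hand I would track the configuration $c_i:=(q_i,m_i)$, where $q_i$ is the state at the start of round $i$ and $m_i$ the move part of its input. The functions $\lambda,\delta,f$ determine $c_{i+1}$ from $c_i$ alone: reconstruct $I_i$ from $m_i$, set $a_i=\lambda(q_i,I_i)$, $q_{i+1}=\delta(q_i,I_i)$, and $m_{i+1}=f(a_i)$. Thus $c_0,c_1,c_2,\dots$ is the orbit of a single deterministic map on the finite set $Q\times\{-1,0,1\}$, whose size is at most $3H$.

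Finally I would invoke the pigeonhole principle: among the first $3H+1$ configurations two must coincide, say $c_s=c_t$ with $s<t\le 3H$, and determinism then forces $c_i=c_{i+(t-s)}$ for every $i\ge s$. Since each trajectory term $a_i=\lambda(q_i,I_i)$ is itself a function of $c_i$, we get $a_i=a_{i+(t-s)}$ for all $i\ge s$, which is exactly periodicity with period $\tau=t-s$ and a suitable prefix, in the sense defined above. The main obstacle is the second step, justifying that the entry port depends only on the previous action, as this is the sole point requiring the special structure of the oriented and homogeneous labelings; once the evolution is cast as a finite deterministic dynamical system, periodicity follows immediately. The STOP state is the degenerate case $a_i\equiv 0$, a period of length $1$, and is covered automatically.
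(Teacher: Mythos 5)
Your proposal is correct and follows essentially the same argument as the paper's proof: both rely on the observation that in the oriented and homogeneous lines the port taken in one round determines the entry port at the next node (negation resp.\ identity), so a lone agent's evolution is a deterministic map on the finite set of (state, move-part) pairs, and the pigeonhole principle yields periodicity. Your write-up merely makes the finite dynamical system explicit, which the paper leaves implicit.
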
 

\begin{proof}
	The proof is the same for the oriented and for the homogeneous line.
	Consider the trajectory of any agent starting at any node of one of these lines. 
	The number of states of the agent is bounded and, for a single agent,  there are three possible inputs in each round, 
	corresponding to three possible move parts of the input,
	$-1$, 1 or 0 (the other parts are fixed for a single agent). 
	Hence, there must exist two rounds $t_2>t_1$ such that in both of them the agent is in the same state and has the same input. Both in the oriented and in the homogeneous line, the port number $q$ that an agent takes in some round determines the entry port number $q'$ at the adjacent node ($q'=-q$ in the case of the oriented line, and $q'=q$ in the case of the homogeneous line).
	Let $\tau=t_2-t_1$. By induction on $i<\tau$ we conclude that,  
	for any $0\leq i <\tau$, the $t_1+(j\tau+i)$th term of the trajectory is the same for any $j\geq 0$. Hence the trajectory has a period of length $\tau$.
\end{proof}

The notion of a period permits us to define three important notions concerning the trajectory of an agent navigating in the oriented or in the homogeneous line: boundedness, the progress direction and the speed. We first define these notions for an agent in the oriented line. Consider any starting node (base) of the agent, and consider a period of length $\tau$ of its trajectory $\alpha$. Let $v$ and $v'$ be nodes where the agent is situated at the beginning of two consecutive periods. There are three cases: $v'$ is left of $v$, $v'=v$, and $v'$ is right of $v$. For the oriented line, this is independent of the base of the agent. It is easy to see that:
\begin{enumerate}[label=(\roman*)]
	\item 
	if $v'$ is left (resp. right) of $v$ then the agent visits all nodes of the line left (resp. right) of the base and only a bounded number $r(\alpha)$ (resp. $l(\alpha)$) of nodes right (resp. left) of the base
	($r(\alpha)$ and $l(\alpha)$ are independent of the base);
	in this case we say that the trajectory of the agent is {\em left-progressing} (resp. {\em right-progressing});
	\item
	if $v'=v$ then the agent visits only a bounded number of nodes: at most $b(\alpha)$ nodes on each side of the base ($b(\alpha)$ is independent of the base); in this case we say that the trajectory of the agent is {\em bounded};
\end{enumerate}
If the trajectory of the agent is left- or right-progressing, its {\em speed} is defined as $dist(v,v')/\tau$.
Notice that the definition of the speed does not depend on the choice of the period of the trajectory.

In the case of the homogeneous line, the situation is slightly different for two reasons. First, the progress direction
of a trajectory depends on the base.  Hence we will use notions {\em minus progressing} and {\em plus progressing}, where the directions are defined with respect to the base.
Second (unlike in the oriented line), it is possible that the node $v'$ at the beginning of a period is different from the node $v$ at the beginning of the preceding period but after two consecutive periods these nodes are the same. This happens, e.g., when the period is
$(1,1,-1)$. Hence, in order to define  progress direction and boundedness properly, we
consider {\em double periods}: a period $Q$ is double, if it is the concatenation $P*P$, for some period $P$.
For double periods, the above issue disappears. 

Consider any starting node $u$ (base) of the agent, and consider a double period of length $\tau$ of its trajectory $\alpha$. Let $v$ and $v'$ be nodes where the agent is situated at the beginning of two consecutive periods. We use the plus and minus direction with respect to $u$.
There are three cases: $v'$ is in minus direction from $v$, $v'=v$, and $v'$ is  in plus direction from $v$. Now we have:

\begin{enumerate}[label=(\roman*)]
	\item 
	if $v'$ is in minus (resp. plus) direction from $v$ then the agent visits all nodes of the line in minus (resp. plus) direction from $u$ and only a bounded number $r(\alpha)$ (resp. $l(\alpha)$) of nodes in plus (resp. minus) direction from $u$
	($r(\alpha)$ and $l(\alpha)$ are independent of the base);
	in this case we say that the trajectory of the agent is {\em minus-progressing} (resp. {\em plus-progressing});
	\item
	if $v'=v$ then the agent visits only a bounded number of nodes: at most $b(\alpha)$ nodes on each side of the base ($b(\alpha)$ is independent of the base); in this case we say that the trajectory of the agent is {\em bounded};
\end{enumerate}
If the trajectory of the agent is minus- or plus-progressing, its {\em speed} is defined as $dist(v,v')/\tau$, for any double period.
Similarly as before, the definition of the speed does not depend on the choice of the double period of the trajectory.
Notice that in the case of the homogeneous line, for a fixed base of an agent, we can still use the expression ``left-'' or ``right-progressing'' with respect to its trajectory, although, unlike for the oriented line, these notions also depend on the base and not only on the trajectory.

The following proposition  bounds the length of  a prefix and of a period of trajectories. 

\begin{proposition}\label{pi plus tau}
	For any periodic trajectory $\alpha$, denote by $\sigma$ the length of the shortest period of $\alpha$ and 
	by $\pi$ the length of the prefix preceding the first period of length $\sigma$. Then $\pi+\sigma \leq 3H$.
\end{proposition}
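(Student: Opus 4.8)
The plan is to recast the trajectory of a single agent as the forward orbit of a deterministic map on a small set of \emph{configurations}, and then to recover $\pi$ and $\sigma$ from the shape of that orbit. For a single agent navigating alone in the oriented or homogeneous line, the only part of the input that can change is the move part, which is one of $-1,1,0$: the fixed part $(L,R)$ is constant, and the states part is always empty because a lone agent is never collocated with another agent. I would therefore call $C_i=(s_i,m_i)$ the \emph{configuration} of the agent in round $i$, where $s_i\in Q$ is its current state and $m_i\in\{-1,1,0\}$ is the move part of its current input. Since $|Q|=H$, there are at most $3H$ distinct configurations.

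The crucial observation, which is exactly the content of the argument in the proof of Proposition~\ref{periodic}, is that $C_{i+1}$ is determined by $C_i$: the next state is $\delta(s_i,I_i)$, a function of $C_i$, and the next move part is determined as well, because the port taken in round $i$ (the output action $\lambda(s_i,I_i)$, again a function of $C_i$) determines the entry port at the adjacent node ($m_{i+1}=-\lambda(s_i,I_i)$ in the oriented line, $m_{i+1}=\lambda(s_i,I_i)$ in the homogeneous line, and $m_{i+1}=0$ if the action is $0$). Hence there is a map $f$ on the set of configurations with $C_{i+1}=f(C_i)$, and the sequence $C_0,C_1,\dots$ is the forward orbit of $C_0$ under $f$.

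Because $f$ acts on a set of at most $3H$ elements, its orbit is eventually periodic: let $\mu\geq 0$ be the least index and $\kappa\geq 1$ the least integer with $C_\mu=C_{\mu+\kappa}$. Then $C_0,\dots,C_{\mu+\kappa-1}$ are pairwise distinct, so $\mu+\kappa\leq 3H$, and from round $\mu$ on the configuration sequence is purely periodic with period $\kappa$. Since the $i$-th term of the trajectory is the output action $\lambda(s_i,I_i)$, which the proof of Proposition~\ref{periodic} treats as a fixed function of the round-$i$ configuration, the trajectory is the coordinatewise image of the configuration orbit under this function. Consequently the trajectory is purely periodic from index $\mu$ on, its shortest period $\sigma$ divides $\kappa$, and there is a period of length $\sigma$ beginning no later than where the configuration cycle begins. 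This gives $\sigma\leq\kappa$ and a prefix of length at most $\mu$ before the first period of length $\sigma$, so $\pi\leq\mu$, and therefore $\pi+\sigma\leq\mu+\kappa\leq 3H$.

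I expect the main obstacle to be the translation between the pre-period and cycle length $(\mu,\kappa)$ of the configuration orbit and the trajectory's minimal prefix and shortest period $(\pi,\sigma)$. The point to be careful about is that applying the output function coordinatewise can only shrink the period, to a divisor of $\kappa$, and the prefix, to at most $\mu$, and never enlarge either; together with the counting bound $\mu+\kappa\leq 3H$, this is what forces $\pi+\sigma\leq 3H$. One should also check that the unit shift between the round in which an action is computed and the round in which it is executed does not spoil the bound, which it does not, since it changes the starting index of a period by at most one. The determinism of $f$ itself, and the emptiness of the states part on which it relies, are inherited directly from the setup used in Proposition~\ref{periodic} and require no new work.
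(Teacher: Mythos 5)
Your proposal is correct and is essentially the paper's own argument: the paper likewise pigeonholes on the at most $3H$ (state, move-part) configurations to find two rounds $t_1<t_2\leq 3H$ with equal configuration (determinism of the configuration transition on the oriented/homogeneous line being inherited from Proposition~\ref{periodic}), obtains a period of length $t_2-t_1$ after a prefix of length $t_1-1$, and then passes—just as tersely as you do—to the shortest period $\sigma$ and its first prefix $\pi$. Your formulation via the minimal pre-period and cycle length $(\mu,\kappa)$ of the orbit of a deterministic map, together with the transfer step $\pi+\sigma\leq\mu+\kappa\leq 3H$, is a slightly more structured packaging of exactly the same steps, leaving the same (true) transfer lemma semi-informal as the paper does.
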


\begin{proof}
	Observe that there are three possible inputs (corresponding to move parts $-1$, 1, 0 of the input) and the agent can be in less than $H$ possible states other than STOP. Hence, there are two rounds $t_1<t_2 \leq 3H$, where the agent is in the same state and has the same input. Since $t_2$ is after the end of some period, round $t_2$ must be at least $\pi+\sigma$ which concludes the proof. 
\end{proof}

We will use the following fact that holds both for the oriented and for the homogeneous line.
Consider two agents $a$ and $b$ with bases $u$ and $v$, respectively.
We say that agent $b$ {\em follows} agent $a$, if either the trajectories of both of them are left-progressing and $u$ is left of $v$, or the trajectories of both of them are right-progressing and $u$ is right of $v$. The following proposition says intuitively that, if the initial distance  between the agents is sufficiently large, then:\\
(i) if the follower is not faster than the followed agent then agents can never meet, and\\
(ii) if the difference between the speeds of the agents is small then they cannot meet soon.

\begin{proposition}\label{follow}
	Consider two agents, $a$ and $b$, with bases $u$ and $v$, respectively, either in the oriented line or in the homogeneous line, such that $D=dist(u, v)>72H^2+6H$ and agent $b$ follows agent $a$.
	Let $z=V(b)-V(a)$, where $V(a)$ and $V(b)$ denote the speeds of the trajectories of $a$ and $b$, respectively.\\
	i) If $z\le 0$, then agents $a$ and $b$ can never meet;\\ 
	ii) If $z>0$, then agents $a$ and $b$ cannot meet before round $3H+36H^2+\lceil (D-6H-72H^2)/z\rceil$.
\end{proposition}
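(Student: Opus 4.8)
The plan is to collapse the two-dimensional pursuit to a one-dimensional analysis of the \emph{gap} between the two agents. Orient the line so that the common progress direction of the two (left- or right-progressing) trajectories is the positive one, and let $P_a(t),P_b(t)$ be the integer coordinates of $a$ and $b$ along this axis. Because $b$ follows $a$, the followed agent starts ahead of the follower in the progress direction, so the gap $g(t):=P_a(t)-P_b(t)$ satisfies $g(0)=D>0$. In this model two agents meet in round $t$ precisely when they occupy the same node, i.e.\ when $g(t)=0$ (a crossing on the interior of an edge leaves the agents on distinct nodes and is therefore not a meeting), so the whole statement reduces to bounding the first round in which $g$ can vanish.

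The engine driving the argument is periodicity. By Propositions~\ref{periodic} and~\ref{pi plus tau}, each trajectory is periodic after a prefix of length at most $3H$, with shortest period of length at most $3H$. On the oriented line the sequence of one-step \emph{physical} displacements is just the periodic sequence of port-moves. On the homogeneous line this is subtler: the homogeneous labelling forces the edge labels to alternate along the line, so a port-move $q$ realises the physical displacement $q$ at even coordinates and $-q$ at odd coordinates; consequently the physical displacement of a fixed port-move is not intrinsic but depends on the parity of the current node. This is exactly the reason the paper uses \emph{double} periods: over a double period the parity realigns, and in both models the sequence of physical displacements is therefore periodic, with period $\tau_a\le 6H$ for $a$ and $\tau_b\le 6H$ for $b$. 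Fix a common multiple $T\le 36H^{2}$ of $\tau_a$ and $\tau_b$ (say $T=\tau_a\tau_b$). The crucial observation is that, once we are past the prefixes, the sum of physical displacements over \emph{any} window of length $T$ is independent of the phase of that window and equals $T\,V(a)$ for $a$ (resp.\ $T\,V(b)$ for $b$), since such a window always contains an integral number of full periods of the displacement sequence. Writing $\pi^{*}\le 3H$ for a round past both prefixes and $t_k=\pi^{*}+kT$, we thus get the exact linear decrease
\[
g(t_k)=g(\pi^{*})-kTz\qquad(k\ge 0).
\]
Two coarse bounds finish the toolkit: during the prefix each agent moves at most $\pi^{*}\le 3H$ nodes, so $g(\pi^{*})\ge D-6H$; and inside one window each agent moves at most one node per round, so $g$ changes by at most $2T\le 72H^{2}$ there. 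Finally, since the net displacement over a period is an integer, $V(a)$ and $V(b)$ have denominators $\tau_a,\tau_b$, whence $z$ has denominator dividing $\tau_a\tau_b\le 36H^{2}$; in particular $z>0$ forces $z\ge 1/(36H^{2})$, which is what makes the bound in (ii) meaningful when the speeds are nearly equal.

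Part (i) is now immediate. If $z\le 0$ then $g(t_k)=g(\pi^{*})-kTz\ge g(\pi^{*})\ge D-6H$ is nondecreasing in $k$, and losing at most $72H^{2}$ inside each window gives $g(t)\ge D-6H-72H^{2}$ for every $t\ge\pi^{*}$; for $t<\pi^{*}$ one has $g(t)\ge D-6H$ directly. The hypothesis $D>72H^{2}+6H$ is used precisely here to conclude $g(t)>0$ for all $t$, so $g$ never vanishes and the agents never meet.

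For part (ii), with $z>0$, the synchronisation-round gap strictly decreases at rate $Tz$ per window, and I would track the guaranteed lower bound $g(t)\ge D-6H-72H^{2}-kTz$ valid on the $k$-th window: the agents cannot be at the same node while this quantity is positive. Locating the first window in which it can reach $0$ and converting the window index back into a round number, then bounding $\pi^{*}\le 3H$ and $T\le 36H^{2}$, yields a first possible meeting no earlier than $3H+36H^{2}+\lceil (D-6H-72H^{2})/z\rceil$, the additive $3H+36H^{2}$ accounting for the prefix and one window length and the $-6H-72H^{2}$ inside the ceiling accounting for the prefix displacement and the within-window fluctuation. The step I expect to be the genuine obstacle is the homogeneous line: making the per-window displacement identity $g(t_k)=g(\pi^{*})-kTz$ hold \emph{independently of the window's phase} despite the parity-dependent meaning of port-moves is the real content of the proof, and it is precisely the double-period device that secures it; once this phase-independence is in hand, both parts reduce to bookkeeping the prefix length, the common period $T$, and the within-window fluctuation against the given separation $D$.
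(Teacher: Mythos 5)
Your proposal is correct and takes essentially the same route as the paper's own proof: reduce to the evolution of the distance (your gap function) between the two agents, align both trajectories to a common prefix of length at most $3H$ and a common double period of length at most $36H^2$ (double periods handling the parity issue of the homogeneous line exactly as the paper does), then combine the exact linear decrease of the gap at period boundaries with the bound of at most $2$ per round on how fast the gap can shrink inside a prefix or a window. The only differences are cosmetic --- you phrase the paper's distances $\delta_i$ as a gap function sampled at synchronization rounds and add the unneeded remark that $z>0$ forces $z\ge 1/(36H^2)$ --- and your part (ii) is left at the same bookkeeping level as the paper's own computation.
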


\begin{proof}
	Without loss of generality, assume that both agents $a$ and $b$ have left-progressing trajectories $\alpha(a)$ and $\alpha(b)$.
	Node $u$ is left of $v$, in view of the assumption.

	Let $\tau(a)$ be the length of the shortest double period of the trajectory $\alpha(a)$ and let
	$\tau(b)$ be the length of the shortest double period of the trajectory $\alpha(b)$. In view of Proposition \ref{pi plus tau}, we have
	$\tau(a)\leq 6H$  and $\tau(b)\leq 6H$.
	Each of the  trajectories $\alpha(a)$ and $\alpha(b)$ has a period of length
	$\tau=\tau(a) \cdot \tau (b)$. Hence $\tau \leq 36H^2$. Let $\pi(a)$ be the length of the prefix preceding the first period of length $\tau$ in the trajectory $\alpha(a)$, and let $\pi(b)$ be the length of the prefix preceding the first period of length $\tau$ in the trajectory $\alpha(b)$. Without loss of generality, suppose that $\pi(b) \geq \pi(a)$. The segment of the trajectory $\alpha(a)$ corresponding to rounds 
	$\pi(b)+1,\dots, \pi(b) +\tau$ is also a period of $\alpha(a)$. Hence both trajectories $\alpha(a)$ and $\alpha(b)$ can be viewed as having a prefix of length $\pi(b)$ and a double period of length $\tau$. In view of Proposition \ref{pi plus tau}, we have 
	$\pi(b)\leq 3H$. In any round $t\leq \pi(b)$, the agents can decrease their initial distance by at most 2, hence they cannot meet by round $\pi(b)$.
	
	Let $\delta_i$, for $i \geq 1$, denote the distance between the agents at the beginning of the $i$-th double period of length $\tau$ of their trajectories. Since, $\pi(b)\leq 3H$, the agents can decrease their initial distance until round  $\pi(b)$ by at most $2 \pi(b) \leq 6H$, and hence $\delta_1 \ge D-2\pi(b)>72H^2+6H-6H=72H^2$. 
	
	Case 1. $z\le 0$.
	In this case, $V(a)\geq V(b)$. This implies that $\delta_i$ cannot increase as $i$ grows, and hence $\delta_i >72H^2$, for any $i\geq 1$. During any double period of length $\tau$, the agents can decrease the distance separating them at the beginning of this period by at most $2\tau \leq 72H^2$. Hence, in any round $t>\pi(b)$, their distance is strictly positive which implies that they can never meet.
	
	Case 2. $z>0$. Observe that $\delta_i-\delta_{i+1}=z\cdot\tau$ for $i\ge 1$.
	As $\delta_1\ge D-2\pi(b)\ge D-6H$, we have $\delta_i\ge D-6H-z\cdot\tau\cdot (i-1)>72H^2$, for $i<1+(D-6H-72H^2)/(z\cdot \tau)$.
	Notice that if $i<1+(D-6H-72H^2)/(z\cdot \tau)$, then the $i$-th double period of length $\tau$ of their trajectories ends in round 
	\begin{align*}
		\pi(b)+\tau\cdot i&<\pi(b)+\tau(1+(D-6H-72H^2)/(z\cdot \tau))\\
		&=\pi(b)+\tau+(D-6H-72H^2)/z \\
		&\le 3H+36H^2+\lceil (D-6H-72H^2)/z \rceil.
	\end{align*}
	
	During any double period of length $\tau$, the agents can decrease the distance separating them at the beginning of this period by at most $2\tau \leq 72H^2$.
	Hence, in any round $\pi(b)<t< 3H+36H^2+\lceil (D-6H-72H^2)/z \rceil$, the distance between the agents is strictly positive.
	Hence, they can not meet before round $3H+36H^2+\lceil (D-6H-72H^2)/z\rceil$.
\end{proof}

\section{The Impossibility Result}\label{sec:impos}

In this section, we show that if all teams are of size  $x=1$, then gathering is impossible for some port labeling of the line.
This port labeling is that of the oriented line.

\begin{theorem}\label{impos}
	Consider an arbitrary set of agents. Then the adversary can place these agents at distinct nodes of the oriented line in such a way that no pair of agents will ever meet.
\end{theorem}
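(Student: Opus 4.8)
The plan is to reduce the problem to the behaviour of each agent in isolation. Observe that, in our model, an agent deviates from its \emph{solo trajectory} (the trajectory it would follow if it were alone on the line forever) only in a round in which it is collocated with another agent, since this is the only way its states-part of the input becomes nonempty. Hence it suffices to find a placement at distinct nodes such that, \emph{if every agent followed its solo trajectory}, no two agents would ever be collocated. Indeed, suppose such a placement is found and, for contradiction, let $t^\ast$ be the first round in which some two agents are collocated in the real execution. In all rounds before $t^\ast$ no agent ever sees a collocated agent, so each agent behaves exactly as in its solo trajectory up to round $t^\ast$; consequently the positions at round $t^\ast$ are the solo positions, contradicting the choice of the placement. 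Thus the real and solo executions coincide and no meeting ever occurs. By Proposition~\ref{periodic}, every solo trajectory on the oriented line is periodic, so by the classification following that proposition each agent is either \emph{bounded}, \emph{left-progressing}, or \emph{right-progressing}, with a well-defined speed in the latter two cases.

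First I would construct the placement. Arrange the agents on the oriented line from left to right in three consecutive blocks: all left-progressing agents, then all bounded agents, then all right-progressing agents. Within the left-progressing block I place them in order of non-increasing speed from left to right, and within the right-progressing block in order of non-decreasing speed from left to right; bounded agents are placed in any order. Finally I make every gap between consecutive agents larger than $72H^2+6H$ and also larger than twice the maximum, over all agents, of the constants $r(\alpha)$, $l(\alpha)$, $b(\alpha)$ that bound the excursions to the ``wrong'' side of a base (these are finite and independent of the base, and by Proposition~\ref{pi plus tau} are in fact $O(H)$). Since there are finitely many agents, such a placement on distinct nodes exists.

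Next I would verify, by a case analysis over pairs of agents, that no two solo trajectories ever put two agents at the same node. For two agents progressing in the same direction, the monotone ordering by speed guarantees that the follower (the agent behind in the common direction of motion) has speed at most that of the followed agent, i.e. $z\le 0$ in the notation of Proposition~\ref{follow}; since every pairwise initial distance exceeds $72H^2+6H$, part~(i) of Proposition~\ref{follow} gives that they never meet. For a left-progressing agent paired with a bounded or right-progressing agent, the former lies to the left and reaches at most $r(\alpha)$ nodes to the right of its base, while the latter reaches at most $l(\alpha)$ (resp. $b(\alpha)$) nodes to the left of its base; as the gap exceeds the sum of these two bounds, the sets of nodes the two agents ever visit are disjoint. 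The symmetric argument handles right-progressing agents paired with bounded ones, and two bounded agents occupy disjoint bounded neighbourhoods of their bases. Hence no collocation ever occurs in the solo execution.

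The main obstacle is the bootstrapping step of the first paragraph: a priori the agents may interact, and the periodicity and ``follow'' properties (Propositions~\ref{periodic} and~\ref{follow}) are statements about solo trajectories only, so they do not directly constrain the real, interactive execution. The first-meeting argument is what licenses transferring the solo analysis to the real execution, and it works precisely because a change of behaviour can be triggered only by a meeting. A secondary technical point is to check that a single left-to-right ordering simultaneously satisfies the speed condition of Proposition~\ref{follow} for \emph{all} same-direction pairs, not merely consecutive ones; this follows because sorting each block monotonically by speed makes the inequality $z\le 0$ hold for every pair within the block.
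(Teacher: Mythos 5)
Your proposal is correct and follows essentially the same route as the paper's proof: the same three-block placement (left-progressing sorted by non-increasing speed, then bounded, then right-progressing sorted by non-decreasing speed) with gaps large enough to invoke Proposition~\ref{follow}~(i) for same-direction pairs and disjointness of visited node sets for the remaining pairs. The only difference is that you spell out the bootstrapping step (that the real execution coincides with the solo trajectories up to any hypothetical first meeting), which the paper leaves implicit; this is a welcome clarification, not a divergence in method.
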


\begin{proof}
	Partition the set $T$ of  trajectories of all considered agents into three sets: the set $A$ of left-progressing trajectories, the set $B$ of bounded trajectories, and the set $C$ of right-progressing trajectories. Define $M$ as the largest among the following integers: $36H^2+3H$, the maximum of $r(\alpha)$ over all $\alpha\in A$, the maximum of $b(\alpha)$ over all $\alpha\in B$, and
	the maximum of $l(\alpha)$ over all $\alpha\in C$.
	Recall that $r(\alpha)$, $b(\alpha)$ and $l(\alpha)$ were defined in Section \ref{sec:prelim}.
	
	Order all trajectories in $A$ in non-increasing order $(\alpha_1,\dots, \alpha_k)$ of their speeds.
	Order all trajectories $(\beta_1,\dots, \beta_m)$ in $B$ arbitrarily.
	Order all trajectories in $C$ in non-decreasing order $(\gamma_1,\dots, \gamma_n)$ of their speeds.
	Choose a sequence of nodes $(u_1,\dots,u_k, v_1,\dots,v_m,w_1,\dots,w_n)$ from left to right, in such a way that consecutive nodes are at distance $2M+1$. Place the base of the agent with trajectory $\alpha_i$ at node $u_i$, for $1\leq i \leq k$,
	place the base of the agent with trajectory $\beta_i$ at node $v_i$, for $1\leq i \leq m$,
	and place the base of the agent with trajectory $\gamma_i$ at node $w_i$, for $1\leq i \leq n$.
	
	We claim that, for this choice of bases, no meeting can occur between any agents.
	Consider any pair of agents placed as above and consider their trajectories $\xi_1, \xi_2$.
	First observe that if at least one of these trajectories is from the set $B$ then the sets of nodes visited by both agents are disjoint, in view of the definition of $M$, and hence the agents cannot meet. The same argument holds if one of the trajectories is from the set $A$ and the other is from the set $C$.
	Hence, we may assume that either both trajectories are from $A$ or both trajectories are from $C$.
	We give the argument in the first case; the other is similar. Suppose, without loss of generality, that $\xi_1=\alpha_i$ and $\xi_2=\alpha_j$, for $i<j$. Then the agent $a_j$ with trajectory $\alpha_j$ follows the agent $a_i$ with trajectory $\alpha_i$. The speed of $\alpha_j$ is not larger than the speed of $\alpha_i$. In view of the definition of $M$ and of Proposition \ref{follow} (i), the agents never meet.
\end{proof}

\section{The Oriented Line}\label{sec:oriented}

In this section, we consider gathering of $R$ teams of agents of equal size $x$ in the oriented line.
In view of Theorem \ref{impos}, if the size $x$ of each team is one, then the adversary can prevent gathering.
So, we assume that $x>1$, throughout this section.
The algorithm is presented  in Section \ref{sect-oriented-alg}, while the proof of its correctness and complexity is deferred to Section \ref{sect-oriented-correctness}.

\subsection{The Algorithm}
\label{sect-oriented-alg}

We now describe Algorithm {\tt Oriented} that accomplishes gathering of arbitrary $R$ teams of $x>1$ agents in the oriented line, in optimal time $O(D)$, where $D$ is the distance between the bases of the most distant teams.

\noindent
{\bf Modes.} In each round of the algorithm execution, each agent is in some mode, encoded in the state of the agent. In each mode, an agent performs some action. 
All modes and their corresponding actions, used in Algorithm {\tt Oriented}, are listed in Table \ref{tab-modes-oriented-simple}.

\begin{table*}[t]
	\centering
	\caption{\label{tab-modes-oriented-simple}The four modes used in Algorithm {\tt Oriented} and their corresponding actions. }
	\begin{tabular}{ |c| c| } 
		\hline
		Mode & Action		\\
		\hline
		\hline
		$\rightSlow$ & Move one step right, stay put for one round, and repeat\\
		\hline
		$\rightFast$ & Move one step right in each round\\
		\hline
		$\leftFast$ & Move one step left in each round\\
		\hline
		$\senti$ & 	Stay put \\
		\hline
	\end{tabular}
\end{table*}

\noindent
\textbf{The high-level idea of the algorithm.}
After waking up, the agent with the smallest label in each team assigns itself mode $\rightSlow$ which means that it moves right with speed one-half, and the other $x-1$ agents in the team assign themselves mode $\senti$ and stay put at their base.
An agent $a$ in mode $\rightSlow$ (except the agent from the rightmost team) will meet agents in mode $\senti$.
Each time such a meeting happens, agent $a$ counts the total number of agents in mode $\senti$ it has seen so far.
Notice that among all the agents in mode $\rightSlow$, only the agent $b$ from the leftmost team will meet $(x-1)(R-1)$ agents in mode $\senti$ (except $\senti$ agents from its own team).
After meeting $(x-1)(R-1)$ agents in mode $\senti$, agent $b$ switches to mode $\rightFast$ which means that it moves right with speed 1.
It is the only agent that ever switches to mode $\rightFast$.
Speed 1 allows it to meet agents in mode $\rightSlow$ that it follows.
Each time agent $b$ meets an agent in mode $\rightSlow$, agent $b$ counts the total number of agents in mode $\rightSlow$ it has seen so far, and the agent in mode $\rightSlow$ switches to mode $\senti$.
After meeting $R-1$ agents in mode $\rightSlow$, agent  $b$ switches to mode $\leftFast$. 
Let $c$ be the $(R-1)th$ agent in mode $\rightSlow$ that agent $b$ met. At this meeting, agent $c$ also switches to mode $\leftFast$.

So, both agents $b$ and $c$ move together left  with speed 1.
At this time, all the other $(x\cdot R-2)$ agents are in mode $\senti$, left of agents $b$ and $c$.
Then, each time a meeting between agents in mode $\leftFast$ and agents in mode $\senti$ happens, all the agents in mode $\senti$ switch to mode $\leftFast$ and move together with the other agents in mode $\leftFast$.
In the end, all $x\cdot R$ agents gather at the base $u$ of the leftmost team. 
They know that this happened, by counting the number of agents at $u$,
and transit to state {\tt STOP}.

\noindent
\textbf{Detailed description of the algorithm.}
As explained above, in each round of the
algorithm, each agent is in some mode, and modes may change at meetings.
We regard meetings as events and there are four types of events in the algorithm.
In particular,
{\tt Event A} is the wake-up event; {\tt Event B} is a meeting between agents in modes $\rightSlow$ and $\senti$; {\tt Event C} is a meeting between agents in modes $\rightFast$ and $\rightSlow$; and {\tt Event D} is a meeting between agents in modes $\leftFast$ and $\senti$.
At each event, agents may change their modes which implies taking the action specified in Table \ref{tab-modes-oriented-simple}. 
Obviously, all four events are pairwise exclusive, hence the choice of the action is unambiguous. 
Algorithm {\tt Oriented} is described as a series of actions corresponding to these four events.
For an agent $a$ in mode $\rightSlow$, we define the variable $a.countSenti$ that stores the total number of agents in mode $\senti$ that it has seen.
For an agent $a$ in mode $\rightFast$, we define the variable $a.countSlow$ that stores the total number of agents in mode $\rightSlow$ that it has seen.
The details of Algorithm {\tt Oriented} are as follows:

\noindent
{\bf Event A:} All $R$ teams are woken up. \\

\noindent
$mov := $ the agent with smallest label in the team; \\
$mov$ assigns itself mode $\rightSlow$; \\
$mov.countSenti := 0$; \\
All agents in the team, other than $mov$, assign themselves mode $\senti$;\\

\noindent
{\bf Event B:} An agent $e$ in mode $\rightSlow$ meets $x-1$ agents in mode $\senti$.\\

\noindent
$e.countSenti := e.countSenti+(x-1);$\\
{\bf if} $e.countSenti = (x-1)(R-1)$ {\bf then}\\ 
\hspace*{1cm} $e$ switches to mode $\rightFast$; \\
\hspace*{1cm} $e.countSlow := 0$; \\

\noindent
{\bf Event C:} An agent $e$ in mode $\rightFast$ meets an agent $f$ in mode $\rightSlow$.\\

\noindent
$e.countSlow := e.countSlow+1;$\\
{\bf if} $e.countSlow = R-1$ {\bf then}\\ 
\hspace*{1cm} $e$ switches to mode $\leftFast$; \\
\hspace*{1cm} $f$ switches to mode $\leftFast$; \\

\noindent
{\bf Event D:} A set $G$ of agents in mode $\leftFast$ meet a set $G'$ of agents in mode $\senti$.\\

\noindent
{\bf if} $|G|+|G'|=xR$ {\bf then}\\ 
\hspace*{1cm} each agent transits to state {\tt STOP}; \\
{\bf else} \\
\hspace*{1cm} each $f\in G'$ switches to mode $\leftFast$; \\

This completes the detailed description of Algorithm {\tt Oriented}. 
We will have to prove that {\tt Events B,C,D} are the only types of meetings that can occur during the execution of the algorithm.

\subsection{Correctness and complexity}
\label{sect-oriented-correctness}
In this section, we present the proof of correctness of Algorithm {\tt Oriented}, and
establish its complexity.

\begin{theorem}
	Algorithm {\tt Oriented} gathers $R$ teams of $x>1$ agents each, in the oriented line in $7D$ rounds, where $D$ denotes the distance between the bases of the most distant teams.
\end{theorem}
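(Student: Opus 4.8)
The plan is to fix coordinates on the oriented line with rightward positive, translate so the leftmost base is at $0$, and relabel the teams $1,\dots,R$ by increasing base position $0=p_1<p_2<\cdots<p_R=D$ (the bases are distinct, so the order and the leftmost/rightmost teams are well defined). By Event A each team's minimum-label agent becomes its unique \emph{mover} (mode $\rightSlow$) and the other $x-1$ agents become \emph{sentinels} (mode $\senti$); this is well defined because teammates are exactly the agents collocated in round $0$ and every state encodes its label. I would then follow the execution through three phases and prove, for each, (a) that the only meetings that occur are instances of Events B, C, D, arising exactly as the high-level description intends, and (b) a position-versus-round bookkeeping giving the time bound. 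The engine of the argument is to write each agent's position as an explicit function of the round; note that Proposition~\ref{follow} is unusable here, since base gaps can be arbitrarily small, so all non-meetings must be certified by direct synchrony and position computations.

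For Phase 1, all movers enter $\rightSlow$ in round $0$ and the $\rightSlow$ action is fixed, so they advance in lockstep: at round $t$ every mover has the same rightward offset $\lceil t/2\rceil$ from its base. Hence all pairwise distances are frozen and no two movers ever meet. When team $i$'s mover has offset $p_k-p_i$ it occupies base $p_k$, while team $k$'s mover sits at $2p_k-p_i>p_k$; so team $i$'s mover meets exactly the sentinel groups of the teams $k>i$, in increasing order, and nothing else. Its final count is therefore $(x-1)(R-i)$, which meets the threshold $(x-1)(R-1)$ only when $i=1$. Thus the leftmost mover $b$ is the unique mover to switch to $\rightFast$, and it does so exactly upon reaching $p_R=D$, in round $2D-1$, by which time every mover has already passed every sentinel, cleanly separating the phases.

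For Phases 2 and 3, from the switch (round $2D-1$, position $D$) $b$ runs right at speed $1$ while the remaining $R-1$ movers continue at speed $1/2$, closing each gap at rate $1/2$; a direct computation shows $b$ catches them in increasing order of base, converting each into a sentinel at the meeting node via Event C. The same computation shows $b$ meets no sentinel here (the original sentinels lie at positions $\le D$, behind $b$, and each newly created sentinel is immediately left behind) and that no uncaught mover meets a sentinel. The last ($(R-1)$-st) catch is team $R$'s mover $c$, at position $3D-1$ in round $4D-2$, where $b$ and $c$ switch to $\leftFast$. At that instant every other agent is a stationary $\senti$ agent in $[0,3D-1]$: the original sentinels at the bases $p_1,\dots,p_R$ and one former mover per team $2,\dots,R-1$ at a node strictly between $D$ and $3D-1$. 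In Phase 3 the $\leftFast$ group moves left at speed $1$ from $3D-1$; since all remaining agents are sentinels, the only possible meetings are Events D, and the group sweeps up each sentinel group exactly once (the pickup nodes being pairwise distinct) until, on reaching the leftmost base $0=p_1$, the accumulated count first equals $xR$ and all agents execute {\tt STOP}. The gathering round is $(4D-2)+(3D-1)=7D-3\le 7D$, at the base of the leftmost team.

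The main obstacle is item (a): certifying that across all three phases no meeting other than the intended Events B, C, D ever occurs, so that no count is corrupted and no mode flips prematurely. This is delicate precisely in Phase 2, where a single $\rightFast$ agent, several $\rightSlow$ movers, and many stationary sentinels coexist, and it forces one to express every agent's position as an exact function of the round and to check each pair of coexisting modes. Once these position formulas are in hand, the rest---summing the three phase lengths under the discrete move/stay timing---is routine, and the discrete corrections only shorten the bound (accounting for the $7D-3$), so no additive term ever pushes it past $7D$.
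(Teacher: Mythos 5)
Your proposal is correct and follows essentially the same route as the paper's own proof: the same three-phase structure (lockstep $\rightSlow$ movers so only the leftmost one reaches the count $(x-1)(R-1)$ and becomes the unique $\rightFast$ agent; it then catches the remaining movers in order, turning $\leftFast$ with the rightmost one; finally the left sweep collects all sentinels at the leftmost base), together with the same phase-by-phase argument that only Events B, C, D can occur. Your explicit position-versus-round bookkeeping is just a sharper version of the paper's accounting $t_1=2D$, $t_2=4D$, $t_3=7D$, yielding the slightly tighter $7D-3\le 7D$.
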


\begin{proof}
	First, we prove the correctness of Algorithm {\tt Oriented}.
	Consider all $R$ teams on the oriented line from left to right.
	Let $a_i$ denote the smallest-labeled agent in the $i$-th team, for $1\le i\le R$.
	Notice that in the wake-up round, there are $(x-1)(R-i)$ agents in mode $\senti$, right of agent $a_i$, for each $1\le i\le R$.
	After arriving at the base of each team right of its own base, agent $a_i$ meets $(x-1)$ new agents in mode $\senti$ and increases its variable $a_i.countSenti$ by $(x-1)$.
	
	Let $t_1$ denote the round when agent $a_1$ arrives at the base of the rightmost team.
	By round $t_1$, agent $a_1$ has met all the $(R-1)(x-1)$ agents in mode $\senti$, right of its base, and the variable $a_1.countSenti$ has been increased to $(R-1)(x-1)$.
	So, $a_1$ switches to mode $\rightFast$ in round $t_1$, in view of the algorithm.
	Note that only agent $a_1$ will meet $(x-1)(R-1)$ agents in mode $\senti$, which means that it is the only agent that ever switches to mode $\rightFast$.
	As any two agents $a_i$ and $a_j$ in mode $\rightSlow$ have the same trajectory, if $a_i$ follows $a_j$, then the distance between both agents is always the same, so $a_i$ cannot meet $a_j$.
	In round $t_1$, the only agent in mode $\rightFast$, i.e., $a_1$, appears.
	In this round, there are $(R-i)$ agents in mode $\rightSlow$, right of agent $a_i$, for each $i\ge 1$.
	After round $t_1$, agent $a_1$ follows all the agents $a_2, \dots, a_{R}$ (that are in mode $\rightSlow$), and has speed larger than theirs.
	Hence, $a_1$ meets each of these agents at some point.
	Each time $a_1$ meets an agent in mode $\rightSlow$, the variable $a_1.countSlow$ is increased by one.
	
	Let $t_2$ denote the round when $a_1$ meets $a_R$.
	By round $t_2$, agent $a_1$ has met $(R-1)$ agents in mode $\rightSlow$, and the variable $a_1.countSlow$ has been increased to $(R-1)$.
	So, in round $t_2$, both agents $a_1$ and $a_R$ switch to mode $\leftFast$, in view of Algorithm {\tt Oriented}.
	Between round $t_1+1$ and $t_2-1$, each time agent $a_1$ meets an agent $a_i$, for $1<i<R$,  in mode $\rightSlow$, $a_i$ switches to mode $\senti$.
	Agent $a_1$ maintains its mode $\rightFast$ and leaves $a_i$ behind.
	In round $t_2$, all $(x\cdot R-2)$ agents, other than $a_1$ and $a_R$, are in mode $\senti$, left of $a_1$ and $a_R$.
	After round $t_2$, agents $a_1$ and $a_R$ move left with speed one.
	Each time they meet a set of agents in mode $\senti$, these agents in mode $\senti$ switch to mode $\leftFast$; as a result, 
	after any meeting after round $t_2$ at a node $w$, all the agents at $w$ move  together left with speed one.
	
	Let $t_3$ denote the round when $a_1$ reaches its base $u$ again.
	In round $t_3$, all the $x\cdot R$ agents gather at $u$ and transit to state STOP.
	Therefore, gathering is achieved in round $t_3$.
	
	In order to complete the proof of correctness, we show that no meetings other than {\tt Event B}, {\tt Event C}, or {\tt Event D} can happen.
	Before round $t_1$, each agent is in one of the modes $\senti$ or $\rightSlow$, so meetings can happen only between an agent in mode $\rightSlow$ and agents in mode $\senti$ (Agents in mode $\rightSlow$ cannot meet because they follow each other and have the same trajectory).
	Hence, only {\tt Event B} can be triggered before round $t_1$.
	Between rounds $t_1$ and $t_2-1$, none of the agents in mode $\senti$ is right of agent $a_1$, and none of the agents in mode $\rightSlow$ is left of agent $a_1$.
	Thus, meetings can happen only between an agent in mode $\rightFast$ and an agent in mode $\rightSlow$.
	Hence, only {\tt Event C} can be triggered between rounds $t_1$ and $t_2-1$.
	From round $t_2$ on, each agent is in one of the modes $\senti$ or $\leftFast$, and none of the agents in mode $\senti$ is right of agents in mode $\leftFast$.
	Thus, meetings can happen only between agents in mode $\leftFast$ and agents in mode $\senti$.
	Hence, only {\tt Event D} can be triggered after round $t_2$.
	Therefore, during the execution of Algorithm {\tt Oriented}, no meetings other than {\tt Event B}, {\tt Event C}, and {\tt Event D} can happen. 
	This completes the proof of the correctness of Algorithm {\tt Oriented}.
	
	It remains to analyze the complexity of Algorithm {\tt Oriented}.
	The distance between the base $u$ of the leftmost team and the base $v$ of the rightmost team is $D$.
	Agent $a_1$ moves with speed one half from $u$ to $v$.
	So, agent $a_1$ reaches $v$ in $2D$ rounds, hence $t_1=2D$.
	Recall that both agents $a_1$ and $a_R$ have the same trajectory in the first $t_1$ rounds.
	Initially, their distance is $D$; hence, their distance is also $D$ in round $t_1$.
	From round $t_1$ on, agent $a_1$ follows agent $a_R$ and moves twice as fast as agent $a_R$.
	More precisely, agent $a_1$ goes one step right every round, and agent $a_R$ goes one step right every second round.
	So, agent $a_1$ meets $a_R$ in additional $2D$ rounds, hence $t_2=4D$.
	In round $t_2$, the distance between agent $a_1$ and its base $u$ is $3D$.
	After round $t_2$, agent $a_1$ moves towards $u$ with speed one, so it reaches $u$ again in additional $3D$ rounds; hence, $t_3=7D$.
	Therefore, gathering happens in round $7D$.
	This concludes the proof of the theorem.
\end{proof}

\section{The Unoriented Line}\label{sec:unoriented}

In this section, we consider gathering teams of the same size $x$ in the unoriented line, in which the port labeling at each node is arbitrary. We will use the expressions ``left'' and ``right'', to denote the two directions of the line but we need to keep in mind that agents cannot identify these directions, so we will use these expressions only in comments and in the analysis. 

It follows from Section \ref{sec:impos} that, if the team size is $x=1$, then gathering is impossible for some port labeling, namely that of the oriented line. This section is devoted to showing that, in the unoriented line, the optimal gathering time is  $\Theta(D\log L)$, for $x=2$, and it is $\Theta(D)$, for $x>2$, where $D$ is the distance between the bases of the most distant teams.  We first consider the case $x=2$.

\subsection{Teams of size 2}

Results of this section are by far the most technically difficult. The section is organized as follows. First we prove that, regardless of the deterministic automaton used, and even for two teams of size 2, the adversary can choose a particular port labeling of the line, namely the homogeneous port labeling, and it can choose the bases at an arbitrarily large distance $D$ and a composition of teams in such a way that the time of gathering is at least $cD\log L$, for some positive constant $c$.  Then we show, for any number $R>1$ of teams of size 2,  an algorithm always guaranteeing gathering  in time $O(D\log L)$. In view of the above lower bound, this complexity is optimal.

\subsubsection{The lower bound}

Consider the homogeneous line. Let $C=\lfloor L/2\rfloor$.
We will consider the following $C$ teams of size 2: $\{1,2\}$, $\{3,4\}$, ... , $\{2C-1,2C\}$. For each of those teams $\{a,a+1\}$ there are fixed trajectories $\alpha(a)$ and $\alpha(a+1)$ corresponding to agents $a$ and $a+1$, respectively, yielded by the automaton used. Each of the above teams is called {\em one-way} if either both corresponding trajectories are plus-progressing, or both are minus-progressing, or at least one of them is bounded.

\begin{lemma}\label{excl1}
	If there exist two one-way teams $\{a,a+1\}$ and $\{b,b+1\}$, for odd $a,b \leq 2C-1$, then there exist arbitrarily large  integers $D$ such that the adversary can place these teams at two bases at distance $D$, so that gathering will never happen.
\end{lemma}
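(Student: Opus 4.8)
The plan is to exploit the defining feature of the homogeneous line -- that the absolute direction in which a fixed port points alternates with the parity of the node -- in order to orient each team's unbounded ``escape'' direction away from the other team. Then the two teams occupy disjoint regions of the line, no cross-team meeting ever occurs, and hence the four agents are never simultaneously at one node.

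First I would establish that every one-way team has a \emph{safe side}: a port direction $s\in\{-1,+1\}$ at its base such that both agents of the team stay within a bounded distance $M$ of the base on the side reached through $s$. This follows by a short case analysis from the classification of periodic trajectories in Section~\ref{sec:prelim}. If both trajectories of the team are plus-progressing, each reaches only $l(\cdot)$ nodes in the minus direction, so port $-1$ is safe; symmetrically port $+1$ is safe if both are minus-progressing. If at least one trajectory, say $\alpha(a)$, is bounded, then $\alpha(a)$ reaches at most $b(\alpha(a))$ nodes on each side, while the teammate is bounded on at least one side as well (the minus side if it is plus-progressing, the plus side if minus-progressing, both sides if bounded), which yields a common safe side. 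In every case I take $M$ to be the maximum of $r(\cdot),l(\cdot),b(\cdot)$ over the four trajectories $\alpha(a),\alpha(a+1),\alpha(b),\alpha(b+1)$. Next I would record the parity structure of the homogeneous line: placing nodes at integer positions and letting $p(i)\in\{-1,+1\}$ denote the port at node $i$ pointing toward $i+1$, homogeneity of the edge $\{i,i+1\}$ forces $p(i)=-p(i+1)$, so $p(i)=(-1)^i p(0)$. Thus the absolute direction of any fixed port alternates with the parity of the node, and by choosing the parity of a base I can make either of its ports point in either absolute direction.

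Then I would place the two teams and conclude. Let $s_1,s_2$ be safe sides of the two teams. I put the first team at a base $u_1$ whose parity makes port $s_1$ point right, and the second team at $u_2=u_1+D$ whose parity makes port $s_2$ point left; the second condition fixes the required parity of $D$, and since we only need arbitrarily large $D$, infinitely many admissible values remain, so I may take any such $D>2M$. To show no cross-team meeting occurs I consider the earliest putative one: before it, each team evolves exactly as in isolation, since agents interact only when collocated. By the safe-side property the first team's agents then remain in $(-\infty,u_1+M]$ and the second team's in $[u_2-M,+\infty)$, two disjoint sets because $D>2M$; hence no agent of one team can be collocated with an agent of the other, contradicting the assumed meeting. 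Consequently the four agents are never simultaneously at a single node and gathering never happens, for arbitrarily large $D$.

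The step I expect to be the main obstacle is making the safe-side property fully rigorous for \emph{teams} rather than single agents. Because teammates reset one another's states whenever they meet, $\alpha(a)$ is the trajectory of $a$ within the isolated team, so before the reach quantities $r,l,b$ (and hence $M$) are well defined I must argue that this team behavior is eventually periodic; this reduces to Propositions~\ref{periodic} and~\ref{pi plus tau} once one observes that the joint state together with the inter-agent displacement, which stays bounded between successive meetings, ranges over a finite configuration space. The placement and disjointness argument is then routine, and the apparent circularity of assuming isolation while proving isolation is dissolved cleanly by arguing about the earliest cross-team meeting.
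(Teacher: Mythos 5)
Your proof is correct and takes essentially the same approach as the paper's: both use the classification of one-way teams to find a bounded (``safe'') side for each team, exploit the alternating port parity of the homogeneous line to point each team's unbounded progress direction away from the other team, and place the bases at distance greater than twice the maximal reach $M$ so that the two teams' territories are disjoint and no cross-team meeting can ever occur. Your additional refinements --- the explicit parity formula, the earliest-cross-meeting argument dissolving the apparent circularity, and the remark that team trajectories (with intra-team interactions) must be shown eventually periodic before the reach bounds $r,l,b$ apply --- only make rigorous steps that the paper's proof leaves implicit.
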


\begin{proof}
	Consider any two one-way teams $\{a,a+1\}$ and $\{b,b+1\}$, for odd $a,b \leq 2C-1$.
	Let $T$ be the set of the four trajectories of agents in these teams.
	Let $A$ be an upper bound on the number of nodes visited by any of the minus-progressing trajectories from $T$ in the plus-direction from its base.
	Let $B$ be an upper bound on the number of nodes visited by any of the plus-progressing trajectories from $T$ in the minus-direction from its base.
	Let $C$ be an upper bound on the number of nodes visited in either direction from its base by any of the bounded trajectories from $T$.
	Let $M=\max \{2A,2B,2C\}$.
	
	The adversary chooses as the base of the team $\{a,a+1\}$ a node $u$ such that all unbounded trajectories of agents in this team (there can be 0, 1, or 2 of them) are left-progressing. Let $v$ be the node at distance $M+1$ right of $u$ and let $v'$ be the node at distance $M+2$ right of $u$.
	Then the adversary chooses as the base of the team $\{b,b+1\}$ one of the nodes $v$ or $v'$ satisfying the condition that  
	all unbounded trajectories of agents in this team (there can be 0, 1, or 2 of them) are right-progressing. In view of the homogeneous port numbering of the line, one of the nodes $v$ or $v'$ must satisfy this condition. Since $dist(u,v)>M$ and $dist(u,v')>M$, the sets of nodes of all agents from different teams are pairwise disjoint, in view of the definition of $M$. Hence the agents of different teams can never meet.
\end{proof}

In view of Lemma \ref{excl1}, we may assume that there is at most one  one-way team. Hence there exist $C-1$ teams $\{a,a+1\}$, for odd $a \leq 2C-1$  that are not one-way. Call these teams {\em canonical}. By definition, for every canonical team, one of the corresponding trajectories is plus-progressing and the other is minus-progressing. We call the plus-progressing (resp. minus-progressing) trajectory
the {\em plus-trajectory} (resp. the {\em minus-trajectory}) of the team. The agents corresponding to these trajectories are called the {\em plus-agent} (resp. {\em minus-agent}) of the team.

Consider two agents $a$ and $b$ with bases $u$ and $v$, respectively, such that $u$ is left of $v$. We say that the agents are {\em diverging}, if the trajectory of $a$ is left-progressing and the trajectory of $b$ is right-progressing. 

\begin{lemma}\label{lem-diverging}
	If the distance between the bases $u$ and $v$ of diverging agents is larger than $6H$ then these agents can never meet.
\end{lemma}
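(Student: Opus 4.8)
The plan is to show that the set of nodes ever visited by the left-progressing agent $a$ is disjoint from the set of nodes ever visited by the right-progressing agent $b$ whenever $dist(u,v)>6H$; since two agents meet only when they simultaneously occupy a common node, disjointness of the visited sets immediately yields that $a$ and $b$ never meet. Placing $u$ at coordinate $0$ and orienting coordinates so that ``right'' is the increasing direction (so $v$ sits at coordinate $dist(u,v)>6H$), it suffices to prove the two one-sided excursion bounds: $a$ never reaches a node to the right of coordinate $3H$, and, symmetrically, $b$ never reaches a node to the left of coordinate $dist(u,v)-3H$. These two ranges are disjoint exactly when $dist(u,v)>6H$.

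First I would establish the excursion bound for $a$. By Proposition \ref{periodic} its trajectory is periodic; let $\sigma$ be the length of its shortest period and $\pi$ the length of the prefix preceding it, so that $\pi+\sigma\le 3H$ by Proposition \ref{pi plus tau}. During the prefix the agent makes at most $\pi\le 3H$ steps, hence stays within coordinate $\pi$ of $u$. Let $P_1,P_2,\dots$ denote the coordinates of $a$ at the starts of the successive shortest periods; within any single period the position can deviate from its start by at most $\sigma$, so every visited coordinate is at most $\max(\pi,\,P_1+\sigma)$. Thus it suffices to prove $P_k\le\pi$ for all $k$, which gives the bound $\pi+\sigma\le 3H$.

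The main obstacle is precisely the monotonicity $P_1\ge P_2\ge\cdots$, which is subtle in the homogeneous line because the physical direction of a taken port depends on the parity of the current node: by the alternating port structure, a fixed port label leads rightward at $n_x$ only for one parity of $x$, so executing the same block of output ports from a start node of the opposite parity produces the \emph{negated} sequence of displacements. Consequently one shortest period produces net displacement $d$ from an even-parity start and $-d$ from an odd-parity start. I would argue that, for a left-progressing trajectory, the parity of the period-start coordinate must be preserved by each period: if it flipped, the net displacements of two consecutive periods would cancel, making the net displacement over a double period zero, i.e. the trajectory bounded, contradicting left-progression (this is exactly why double periods were introduced in Section \ref{sec:prelim}). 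Hence all $P_k$ have the same parity, the per-period net displacement is a single negative constant, so $P_1>P_2>\cdots$ and in particular $P_k\le P_1\le\pi$, as required.

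Finally, applying the mirror-image of this argument to $b$ (right-progressing with base $v$) shows that $b$ visits only coordinates at least $dist(u,v)-3H$. Combining the two bounds, $a$ visits only coordinates at most $3H$ while $b$ visits only coordinates at least $dist(u,v)-3H>3H$, so the two visited sets are disjoint and the agents can never meet. I expect the parity-preservation step to be the only delicate point; in the oriented line the same conclusion is even more immediate, since there a port's direction is independent of the node and the parity discussion is unnecessary.
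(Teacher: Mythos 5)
Your proof is correct and takes essentially the same route as the paper's: both reduce the lemma to the one-sided excursion bound that a left-progressing (resp. right-progressing) agent never goes more than $3H$ to the right of (resp. left of) its base, via Proposition~\ref{pi plus tau}, and then conclude that the two sets of visited nodes are disjoint when $dist(u,v)>6H$. The only difference is one of detail: the paper invokes Proposition~\ref{pi plus tau} without elaboration, whereas you explicitly justify the excursion bound on the homogeneous line via the parity-preservation argument for period-start nodes, which correctly fills in the step the paper treats as immediate.
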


\begin{proof}
	Without loss of generality, assume $u$ is left of $v$.
	Let $u'$ denote the node at distance $3H$ of $u$, right of $u$.
	Let $v'$ denote the node at distance $3H$ of $v$, left of $v$.
	As $dist(u, v)>6 H$, it follows that $u'$ is left of $v'$.
	As the agent based at $u$ has a left-progressing trajectory, it cannot reach any nodes that are right of $u'$, in view of Proposition \ref{pi plus tau}.
	As the agent based at $v$ has a right-progressing trajectory, it cannot reach any nodes that are left of $v'$, in view of Proposition \ref{pi plus tau}.
	Therefore, no meeting can occur between the agents.
\end{proof}

We will use the following lemma that is a direct consequence of Theorem 3.1 from \cite{DFKP}.\footnote{Theorem 3.1 from \cite{DFKP} holds even in a ring and even if agents are Turing machines.}

\begin{lemma}\label{dfkp}
	Consider a set of $g$ agents, each of which is assigned a trajectory. There exist two agents in this set, such that if they start at two nodes of a line at distance $D$ and follow their trajectories then the first meeting between them occurs after at least $\frac{1}{6} D \log g$ rounds.
	
\end{lemma}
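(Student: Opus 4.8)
The plan is to obtain this statement directly from Theorem~3.1 of \cite{DFKP}, rather than to prove it from scratch. As the footnote records, that theorem is already established in a setting at least as general as ours: its lower bound applies to agents moving on a line (indeed even on a ring), and to agents that are Turing machines, of which our bounded-memory automata are a special case. So the only real content is a translation of vocabulary. In \cite{DFKP} the $g$ agents run an interactive rendezvous algorithm, whereas here each agent is simply handed a fixed trajectory; these are reconciled by the observation that, up to the very first meeting, every deterministic agent necessarily follows one deterministic path, which is exactly what we call its trajectory. Hence the ``first meeting time'' bounded in \cite{DFKP} coincides with the first round at which two of the prescribed trajectories place their agents at a common node. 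I would therefore phrase the proof as: apply Theorem~3.1 to the $g$ given trajectories to extract the promised slow pair, taking the distance-$D$ placement that realizes the bound to be the one furnished by that theorem.

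Should a self-contained argument be wanted, I would reconstruct the combinatorial core behind Theorem~3.1, which is a hierarchical pigeonhole on displacement traces. For agent $i$ let $x_i(t)$ denote its signed displacement from its base after $t$ rounds, so $x_i(0)=0$ and $|x_i(t)-x_i(t-1)|\le 1$. If one agent of a pair is placed at node $0$ and the other at distance $D$ with the same orientation, they are collocated in round $t$ exactly when $x_i(t)-x_j(t)=D$; hence it suffices to find two trajectories whose relative displacement stays strictly below $D$ in absolute value for $\Omega(D\log g)$ rounds. To produce such a pair I would fix checkpoints spaced $s=\Theta(D)$ rounds apart and maintain a shrinking family of agents that are pairwise clustered within $D/2$ at every checkpoint seen so far. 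At each new checkpoint the cluster can spread by at most $2s$, so it splits into at most two subclusters of diameter $\le D/2$; keeping the larger one at least halves the family. After about $\log_2 g$ checkpoints, hence after $\Theta(D\log g)$ rounds, at least two agents survive, and since between consecutive checkpoints each moves by at most $s$, their relative displacement never reaches $D$ across the whole horizon — so the corresponding placement yields no meeting before that time.

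The genuinely fiddly part, and the one I expect to absorb most of the effort, is tuning the two free parameters — the checkpoint spacing $s$ and the cluster radius — so that simultaneously (i) each checkpoint splits the family into \emph{at most two} groups, (ii) the between-checkpoint drift never lets a surviving pair reach relative displacement $D$ at any intermediate round, and (iii) the family still has size at least two after enough checkpoints to push the time horizon up to the target $\tfrac{1}{6}D\log g$ with the stated constant. All of this is routine once the parameters are chosen, and the reduction to \cite{DFKP} bypasses the bookkeeping entirely, which is why I would present the short reduction as the actual proof and keep the hierarchical argument only in reserve.
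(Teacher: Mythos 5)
Your primary proof---reducing to Theorem~3.1 of \cite{DFKP} by observing that a deterministic agent's behaviour up to its first meeting is exactly a fixed trajectory---is precisely what the paper does: it presents the lemma as a direct consequence of that theorem (with a footnote noting the theorem holds even on a ring and for Turing machines) and gives no further proof. Your reserve clustering argument is a plausible reconstruction of the underlying lower bound, but since you present the reduction as the actual proof, your approach and the paper's coincide.
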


The main result of this subsection is the following theorem.

\begin{theorem}
	For any  automaton formalizing the agents, there exists a positive constant $c$ such that, for arbitrarily large $D$, the adversary can choose two canonical teams and their bases at distance $D$ on the homogeneous line, so that gathering takes time at least
	$cD\log L$. 
\end{theorem}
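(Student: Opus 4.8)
The plan is to use that gathering all four agents at a single node forces, in particular, an \emph{inter-team meeting} (two agents belonging to different teams sitting at a common node), so the gathering round is at least the round of the first inter-team meeting. I would therefore choose the two canonical teams and their bases so that \emph{no} inter-team meeting can occur before round $cD\log L$. The point of this reformulation is that, up to the first inter-team meeting, every agent still follows the trajectory assigned to it; hence Proposition~\ref{follow}, the diverging estimate of Lemma~\ref{lem-diverging}, and the ``missing pair'' bound of Lemma~\ref{dfkp} may all be applied to the fixed trajectories, and the argument is self-consistent on the whole interval $[0,cD\log L)$.

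First I would record that, by Lemma~\ref{excl1}, at most one team is one-way, so at least $C-1=\Theta(L)$ teams are canonical; each such team splits into a plus-agent and a minus-agent whose trajectories are unbounded, with opposite progress directions and well-defined speeds $P_k,N_k$. The $\log L$ factor must come from Lemma~\ref{dfkp} applied to a $\Theta(L)$-sized family of trajectories, while Proposition~\ref{follow} is what keeps the other pairs apart; the whole difficulty is making these two requirements compatible. To reconcile them I would bucket the canonical teams by their speed pair $(P_k,N_k)$ into a grid of cells of side $\varepsilon=\Theta(1/\log L)$. As all speeds lie in $(0,1]$ there are $O(\log^2 L)$ cells, so some cell $S$ contains $\Theta(L/\log^2 L)$ teams; crucially $\log|S|=\Theta(\log L)$, so the bucketing does not cost the logarithm. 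Inside $S$ all plus-speeds agree up to $\varepsilon$ and all minus-speeds agree up to $\varepsilon$, and after relabelling I may assume the plus-coordinate is the smaller one (the minus case is symmetric, and the case where the two coordinates coincide up to $\varepsilon$ is easier still).

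I would then feed Lemma~\ref{dfkp} the \emph{slower} (plus-) trajectories of the teams in $S$, obtaining two teams $i,j$ whose plus-agents, placed at distance $D$, cannot meet before $\tfrac16 D\log|S|=\Omega(D\log L)$ rounds. I would place team $i$ at a base $u$ and team $j$ at a base $v$ with $\dist(u,v)=D>6H$, using the parity freedom of the homogeneous line — the $v$-versus-$v'$ unit-shift device from the proof of Lemma~\ref{excl1}, together with the freedom to take $D$ arbitrarily large of the parity I need — to orient the two teams so that their two slow plus-agents move \emph{toward} each other. Since each canonical team splits, this orientation automatically sends the two faster minus-agents \emph{away} from each other and turns each remaining inter-team pair into a following pair in which the slow plus-agent is the follower. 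Checking the four inter-team pairs: the two plus-agents are precisely the pair supplied by Lemma~\ref{dfkp}, so they do not meet before $\Omega(D\log L)$; the two minus-agents are diverging with $\dist(u,v)>6H$, hence never meet by Lemma~\ref{lem-diverging}; and in each following pair the follower (a plus-agent) exceeds the followed minus-agent's speed by at most $2\varepsilon$, so by Proposition~\ref{follow} they either never meet or cannot meet before $\Omega(D/\varepsilon)=\Omega(D\log L)$. Thus no inter-team meeting precedes $cD\log L$, and gathering takes at least that long.

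The step I expect to be the main obstacle is the joint arrangement in the previous paragraph: choosing \emph{which} coordinate to hand to Lemma~\ref{dfkp} and simultaneously fixing the two base parities so that the Lemma~\ref{dfkp} pair becomes the \emph{unique} approaching pair, while every other pair is made either diverging or a slow-follower pair. This is exactly where the homogeneous labeling is indispensable — it is what lets the adversary reverse a team's progress direction by a unit shift of its base — and it is where one must verify carefully that the speed bucketing introduced to tame the following pairs is compatible with (and does not shrink below $\Theta(\log L)$ the logarithm extracted by) Lemma~\ref{dfkp}.
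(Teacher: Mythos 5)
Your proposal is correct and follows essentially the same route as the paper's proof: reduce gathering to the first inter-team meeting, pigeonhole the canonical teams into cells of a speed-pair grid, hand Lemma~\ref{dfkp} the slower-coordinate family of the populous cell, and use the parity of the homogeneous line to make that pair approach while the opposite pair diverges (Lemma~\ref{lem-diverging}) and the two cross pairs become follower pairs with nonpositive or tiny speed advantage (Proposition~\ref{follow}). The only difference is your cell side $\Theta(1/\log L)$ versus the paper's $L^{-1/3}$ (giving $g=\Theta(L^{1/3})$ teams per cell and an even larger $\Omega(DL^{1/3})$ cross-pair bound), which is an immaterial parameter choice.
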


\begin{proof}
	Let $p=\lfloor L^{1/3}\rfloor$.
	We consider the partition of the interval $(0,1]$ into subintervals $I_1,\dots,I_p$, where $I_i=(\frac{i-1}{p}, \frac{i}{p}]$, for $i=1, \dots ,p$.
	For any $i, j \in \{1,\dots, p\}$, denote $\Sigma_{i,j}= I_i \times I_j$. Hence, $\Sigma_{i,j}$ form a partition of the square $(0,1]\times (0,1]$ into $p^2$ squares. We assign each canonical team to one of these squares as follows: a canonical team is assigned to square $\Sigma_{i,j}$, if the speed of the minus-trajectory of the team is in $I_i$ and the speed of the plus-trajectory of the team is in $I_j$. Since there are $p^2$ squares and $C-1=\lfloor L/2\rfloor-1$ canonical teams, there is at least one square $\Sigma_{i,j}$ to which at least $g=p/2$ canonical teams are assigned. Let $\Sigma=\Sigma_{i,j}$ denote any such square (there can be many of them).
	
	We now describe the decisions of the adversary. Let $D$ be any odd integer larger than $2\cdot(72H^2+6H)$. There are two cases.
	
	Case 1. $i \geq j$.
	
	Consider the plus-agents of the teams assigned to $\Sigma$. There are at least $g$ of them.
	By Lemma \ref{dfkp}, there exist two of them, $p_1$ and $p_2$,  such that  if they are placed at any two nodes at distance $D$ and follow their trajectories, then they cannot meet before $\frac{1}{6}D\log g$ rounds. Now the adversary makes its first choice: it chooses canonical teams to which  $p_1$ and $p_2$ belong. These are teams $\{p_1,q_1\}$ and $\{p_2,q_2\}$, where $q_i$ denote the minus agents in each team. (Note that we do not know which of the agents has an even label and which has an odd label in each team). It remains to choose the bases. As $u$, the adversary chooses any node in the line, such that port $1$ is in the right direction, and chooses as $v$ the (unique) node at distance $D$ right  of $u$. Finally the adversary chooses $u$ as the base of team $\{p_1,q_1\}$  and chooses $v$ as 
	the base of team $\{p_2,q_2\}$.  
	
	Case 2. $i < j$.
	
	Now the decisions of the adversary are symmetric with respect to Case 1.
	This time, consider the minus-agents of the teams assigned to $\Sigma$. There are at least $g$ of them.
	By Lemma \ref{dfkp}, there exist two of them, $q_1$ and $q_2$,  such that  if they are placed at any two nodes at distance $D$ and follow their trajectories, then they cannot meet before $\frac{1}{6}D\log g$ rounds. The adversary chooses canonical teams to which  $q_1$ and $q_2$ belong. These are teams $\{p_1,q_1\}$ and $\{p_2,q_2\}$, where $p_i$ denote the plus agents in each team.
	It remains to choose the bases. As $u$ the adversary chooses any node in the line such that port $-1$ is in the right direction, and chooses as $v$ the (unique) node at distance $D$ right  of $u$. Finally the adversary chooses $u$ as the base of team $\{p_1,q_1\}$  and chooses $v$ as the base of team $\{p_2,q_2\}$ (here nothing changes). 
	
	\begin{claim}\label{claim:lb}
		There exists a positive constant $c$ such that, for the above choices of the adversary, the first meeting between agents of different teams occurs after at least $cD\log L$ rounds.
	\end{claim} 
	
	In order to prove the claim, we consider Case 1. The argument in Case 2 is similar.
	The choice of teams $\{p_1, q_1\}$ and $\{p_2, q_2\}$ guarantees that the meeting between agents $p_1$ and $p_2$ requires at least $\frac{1}{6}D\log g= \frac{1}{6}D\log \frac{p}{2}$ rounds, as $g=p/2$.
	
	Agent $q_1$, based at $u$, is a minus-agent and its trajectory is left-progressing; agent $q_2$, based at $v$, is a minus-agent and its trajectory is right-progressing.
	Hence, agents $q_1$ and $q_2$ are diverging.
	As $D> 6H$, agents $q_1$ and $q_2$ can never meet, in view of Lemma \ref{lem-diverging}.
	
	Observe that agent $p_2$ follows agent $q_1$.
	First, suppose that $i>j$.
	The speed $V(q_1)$ of the trajectory of $q_1$ is larger than the speed $V(p_2)$ of the trajectory of $p_2$.
	As $D>2\cdot(72H^2+6H)$, agent $p_2$ can never meet agent $q_1$, in view of Proposition \ref{follow} (i).
	Next, suppose that $i=j$.
	Then $|V(p_2)-V(q_1)|< 1/p$ because $V(q_1)$ and  $V(p_2)$ are both in the interval $I_i$.
	If $V(p_2)-V(q_1)\le 0$, then agent $p_2$ can never meet agent $q_1$, in view of Proposition \ref{follow} (i).
	Otherwise, as $D>2\cdot(72H^2+6H)$, Proposition \ref{follow} (ii) implies that the meeting between $p_2$ and $q_1$
	cannot be achieved before round $3H+36H^2+\lceil (D-6H-72H^2)p\rceil>\frac{D}{2}p \in \omega(D\log p)$.
	Symmetrically, agent $p_1$ follows agent $q_2$.
	Using similar arguments, agents  $p_1$ and $q_2$ either never meet or meet after time $\omega(D\log p)$.
	It follows that for a sufficiently small constant $c'$, any meeting between agents $p_1$ and $q_2$ or $p_2$ and $q_1$ cannot happen before round $c'D\log p$. Hence, for a sufficiently small constant $c$, the first meeting between agents of different teams occurs after at least $cD\log L$ rounds. $\diamond$
	
	
	Claim \ref{claim:lb} concludes the proof of the theorem.
\end{proof}

\subsubsection{The algorithm}

We now describe Algorithm {\tt Small Teams Unoriented} that guarantees gathering teams of size 2 in an unoriented line, in time $O(D\log L)$. More precisely, the algorithm accomplishes gathering of arbitrary $R$ teams of size 2 in time $O(D\log L)$, where $D$ is the distance between the bases of the most distant teams, and the port labeling of the line is arbitrary.

In our algorithm, we will use the following procedure {\tt Dance} $(string, p)$. Its parameter $string$ is a binary sequence, and its parameter $p$ is one of the possible ports $-1$ or $1$. Intuitively, procedure {\tt Dance} $(string, p)$ is an infinite procedure divided into phases of $k$ rounds each, where $k$ is the length of the binary sequence $string$.
In each phase, the agent first chooses the edge $e$ on which the dance will be executed, and then performs the dance itself on edge $e$.

In the first phase, $e$ corresponds to port $p$, and in each subsequent phase, $e$ is the edge incident to the current node, which is different from the edge on which the dance in the previous phase was executed.   
The agent processes $k$ bits of the sequence $string$ in $k$ consecutive rounds as follows: if the bit is 1 then traverse edge $e$, and if the bit is 0 then stay idle. 
Here is the pseudo-code of this procedure.\\

	\noindent
	{\bf Procedure} {\tt Dance} $((b_1,\dots,b_k), p)$
	
	\noindent
	$FirstPhase:= $ {\bf true};\\
	{\bf Repeat forever}\\
	\hspace*{1cm}{\bf if} $FirstPhase$ {\bf then}\\
	\hspace*{2cm}$e:=$ the edge corresponding to port $p$ at the current node\\
	\hspace*{2cm}$FirstPhase:=$ {\bf false};\\
	\hspace*{1cm}{\bf else}\\
	\hspace*{2cm}$q:=$ the port by which the agent entered the current node most recently;\\
	\hspace*{2cm}$e:=$ the edge corresponding to port $-q$;\\  
	\hspace*{1cm}{\bf for} $i:=1$ {\bf to} $k$ {\bf do}\\
	\hspace*{2cm}{\bf if} $b_i=1$ {\bf then}\\
	\hspace*{3cm}traverse $e$\\
	\hspace*{2cm}{\bf else}\\
	\hspace*{3cm}stay idle;\\

%

While procedure {\tt Dance} is formulated as an infinite procedure, it will be interrupted in some round of the execution of the algorithm, and a new procedure {\tt Dance}  with different parameters will be started.

\noindent
{\bf Label transformation.}
Recall that labels of all agents are different integers from the set $\{1,\dots, L\}$.
Let $\len$ denote $\lfloor \log L \rfloor+1$.
Consider a label $\ell \in \{1,\dots, L\}$. We represent it by the unique binary sequence $Bin(\ell)=(b_1b_2\cdots b_{\len})$ which is the binary representation of $\ell$, with a string of zeroes possibly added as a prefix, to get length $\len$.

Let $z=2\cdot\len+4$. The following label transformation is inspired by \cite{DFKP}.
We define the transformed label obtained from $\ell$. This is the binary sequence $Tr(\ell)$ of length $z$ defined as follows:
In $Bin(\ell)=(b_1b_2\cdots b_{\len})$, replace each bit 1 by the string 11, replace each bit 0 by the string 00, add the string 10 as a prefix and the string 00 as a suffix.
Due to the added prefix string 10, $Tr(\ell)$ always contains an odd number of bits 1.
For example, if $L$ is $15$ and $\ell$ is $3$, then $Bin(\ell)$ is the binary sequence $(0011)$, and $Tr(\ell)$ is the binary sequence $(100000111100)$.

\noindent
{\bf Modes.} In each round of the algorithm execution, each agent is in some {\em mode}, encoded in the state of the agent. In each mode, an agent performs some action. Four of these modes instruct the agent to execute procedure {\tt Dance} with various parameters, and three other modes instruct it to stay put. Modes are changed at meetings of the agents, called events.
In Table \ref{tab-modes-unoriented-simple}, we list seven modes in which an agent can be.

\begin{table}[h]
	\centering
	\caption{\label{tab-modes-unoriented-simple}The seven modes in which an agent with label $\ell$ can be, where  $\alpha=Tr(\ell)$ and $\beta=(1110)$. }
	\begin{tabular}{ |c| c| } 
		\hline
		Mode & Action		\\
		\hline
		\hline
		$\negSlow$ & {\tt Dance}		$(\alpha, -1)$\\
		\hline
		$\plusSlow$ & {\tt Dance}		$(\alpha, +1)$\\
		\hline
		$\negFast$ & {\tt Dance}		$(\beta, -1)$\\
		\hline
		$\plusFast$ & {\tt Dance}		$(\beta, +1)$\\
		\hline
		$\idle$ & 	Stay put\\
		\hline
		$\hiber$ & 	Stay put\\
		\hline
		$\senti$ & 	Stay put\\
		\hline
	\end{tabular}
\end{table}

We now explain the roles that are played by the seven modes. 
An agent in any of the modes $\idle$, $\hiber$, or $\senti$ stays at the current node until a new event happens at this node.
Mode $\hiber$ is used only in the first $z$ rounds of the algorithm, while modes $\idle$ and $\senti$ are used in the remaining rounds.
In particular, mode $\senti$ is only assigned to the leftmost and rightmost agents, once these agents identify themselves as such.

Modes $\negSlow$ and $\plusSlow$ will be called {\em slow} modes, and modes $\negFast$ and $\plusFast$ will be called {\em fast} modes.
For simplicity, we call an agent in slow (resp. fast) mode a slow (resp. fast) agent.
In view of procedure {\tt Dance}, a slow or fast agent dances along the same edge in each phase.
We call the edge endpoint, at which a phase starts, the {\em dancing source} of the agent, in this phase.
Observe that the sequences $\alpha$ and $\beta$ used in modes slow and fast have an odd number of bits $1$.
Hence, after each phase, the dancing source of a slow or fast agent is changed to be the other endpoint of this edge.
During an execution of procedure {\tt Dance}, the way that the dancing source changes looks like an object moving on the unoriented line exactly one step in the same direction in each phase.
We call the direction, in which the dancing source of an agent executing {\tt Dance} moves, the {\em progress direction} of the agent.
Although the port numbers at each node are assigned arbitrarily by the adversary and a slow or fast agent cannot tell which direction is left or right, it is always aware of its progress direction.
Indeed, consider an agent $g$ that arrives at node $u$ in some round of the execution of {\tt Dance}. The agent knows if, in this phase,  $u$ is the dancing source or not.
If $u$ is the dancing source of $g$, then the port via which $g$ arrives at $u$, indicates its progress direction;
otherwise, its progress direction is indicated by the other port at $u$.
This is true, because sequences $\alpha$ and $\beta$ start with bit $1$, end with bit $0$, and contain an odd number of bits $1$. 

An agent in mode $\negSlow$ or mode $\negFast$  (resp. mode $\plusSlow$ or mode $\plusFast$) leaves the current node via port $-1$ (resp. $1$), in the first round of the execution of procedure {\tt Dance}.
This is because the second parameter in the procedure is $-1$ (resp. $1$) and sequences $\alpha$ and $\beta$ both start with bit $1$.
We will prove that two slow agents, moving towards each other while executing procedure {\tt Dance}, will meet in a phase, when their dancing sources are at a distance either $1$ or $2$.

Meetings of a fast and a slow agent happen for a different reason.
In view of definitions of sequences $\alpha$ and $\beta$, each phase of procedure {\tt Dance} in a slow mode lasts $z>4$ rounds, while each phase in a fast mode lasts only $4$ rounds.
This means that the dancing source of a slow agent changes every $z$ rounds, while the dancing source of a fast agent changes every four rounds.
Observe that any $4$-bit sub-segment of $\alpha$ is different from $\beta$.

When a fast agent meets a slow agent progressing in the same direction, we will say that the fast agent {\em catches} the slow agent.
Due to the difference of the dancing source changing frequency for slow and fast agents we will show that a fast agent following a slow agent progressing in the same direction always catches it.

The ideas of the label transformation and the procedure \texttt{Dance} are originally from \cite[Procedure \texttt{Extend-Labels}]{DFKP}. The goal of the procedure \texttt{Extend-Labels} is to let two agents meet on graph $K_2$, that is, a graph consisting of two nodes joined by an edge.
During the execution of \texttt{Extend-Labels}, the dancing source of each agent does not change.
We generalize the technique from \cite{DFKP} in two different ways. By performing the procedure \text{Dance},
\begin{itemize}
	\item two slow agents with different labels, moving towards each other and starting with an arbitrary delay at two nodes of a line at an arbitrary distance, meet at the same node,
	\item a slow agent is caught by a fast agent if the former is ahead of the latter and both agents progress in the same direction.
\end{itemize}
As opposed to \cite[Procedure \texttt{Extend-Labels}]{DFKP}, our procedure \texttt{Dance} works in phases, and the dancing source changes in each phase.
Moreover, to guarantee that any $4$-bit sub-segment of $\alpha$ (i.e., a binary sequence used by a slow agent) is different from $\beta$ (i.e., ``1110''), we append a suffix ``00'' in the label transformation for every slow agent.

\noindent
{\bf The high-level idea of the algorithm.}
In the wake-up round, agents are at their bases in teams of size 2. Since they can see each other's labels, they can compare them and assign modes as follows: the agent with smaller label assigns itself mode $\negSlow$ and the agent with larger label assigns itself mode
$\plusSlow$. Conceptually, we number agents $g_1,\dots, g_{2R}$ as follows (see Fig. \ref{fig:2R-agents}). 

\begin{figure}[t]
	\centering
	\includegraphics[width=\textwidth]{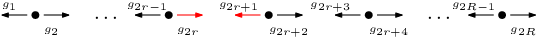}
	\caption{Agents $g_1,\dots, g_{2R}$ on the unoriented line. Agents in red indicate a pair of neighbors.}
	\label{fig:2R-agents}
\end{figure}

Consider the $r$-th team from the left, where $r=1,\dots, R$.
Agent $g_{2r-1}$ is the agent from the $r$-th team that first moves left, and agent $g_{2r}$ is the agent from the $r$-th team that first moves right. Agent $g_{2r-1}$ will be called the left agent of the $r$-th team, and agent $g_{2r}$ will be called the right agent of the $r$-th team.  Notice that an agent does not know in which team it is, and it does not know if it is the left or the right agent in a team,
due to the adversarial  port labeling of the undirected line. Let's call agents $g_{2r}$ and $g_{2r+1}$ {\em neighbors}, for any $r=1,\dots, R-1$.

Our algorithm guarantees that in some round of its execution, all neighbors will meet in pairs.
After such a meeting, both agents switch to mode $\plusFast$, however this change might not happen immediately at the meeting.
If the bases of the $r$-th team and the $(r+1)$-th team are at a distance exactly one, then both neighbors from these teams switch to mode $\hiber$ at their meeting and then switch to mode $\plusFast$ in round $z$.
If the distance between their bases is exactly two, then both neighbors maintain their slow modes until round $z$ and switch to mode $\plusFast$ in this round.
(Note that, in both above cases, the earliest round when both neighbors meet could be earlier than round  $z$). 

Otherwise, when the distance between bases is more than two, then both neighbors switch to mode $\plusFast$ immediately after the meeting.
After changing to a fast mode, an agent swings (possibly switching from mode $\plusFast$ to $\negFast$ or vice versa) between agent $g_1$ and agent $g_{2R}$  both of which are still in slow modes.
Therefore, it is crucial to identify these agents. 
To this end, each agent $e$ in a slow mode keeps a {\em bag} to which it adds labels of fast agents that have caught it.
We will show that only bags of $g_1$ and $g_{2R}$ can have size $(2R-2)$ and that,  in some round,  they will have this size. This is how the pair of agents $g_1$ and $g_{2R}$ is identified. At this time these agents switch mode to  $\senti$.

However, an agent in mode $\senti$ does not know whether it is $g_1$ or $g_{2R}$. We will distinguish $g_1$ from $g_{2R}$ indirectly, using fast agents. The $2R-2$ fast agents travel from one agent in mode $\senti$ to the other, gather their labels, compare them, go to the agent with the smaller label and change state to $\idle$. Without loss of generality, suppose that the label of agent $g_1$ is smaller than that of $g_{2R}$.
As a result, all $2R-2$ fast agents meet $g_1$ and stay with it in mode $\idle$; in other words, $2R-1$ agents gather at the same node in some round.
When this happens, each of these $2R-1$ agents switches to a fast mode and progresses in the direction opposite to their last move.
Hence they will meet agent $g_{2R}$ which stays put in mode $\senti$.
In the end, all $2R$ agents gather at the same node where $g_{2R}$ stays, and transit to state STOP.

We now proceed to the detailed description of Algorithm {\tt Small Teams Unoriented}.

\noindent
{\bf Detailed description of the algorithm.}
As explained in the high-level idea, in each round of the algorithm, each agent is in some mode, and modes may change at {\em events} 
that happen when some agents meet at some node $u$ in a round $t$.
There are four types of events in the algorithm.
Among them, {\tt Event A} is the wake-up event that happens in round 0; {\tt Event B}, {\tt Event C} and {\tt Event D} are meetings before round $z$, in round $z$, and after round $z$, respectively.
Obviously, all four events are pairwise exclusive, hence the choice of the action is unambiguous.
In the sequel, we talk about the actions corresponding to these events.
Essentially, we describe precisely, how modes of agents change at each type of events that happens at a node $u$ in a round $t$. 

For an agent $g$, we define the following variables:
\begin{enumerate}
	\item[] $g.bag$ is used for a slow agent $g$ to store the labels of all fast agents that have caught it;
	\item[] $g.compLabel$ is used for a fast agent to store the label of the recently caught slow agent or the label of the recently met {\tt sentinel};
	\item[] $g.countSenti$ is used for a fast agent to store the number of times it met a {\tt sentinel}.
\end{enumerate}




%
\noindent
{\bf Event A:} All $R$ teams are woken up by the adversary (in round 0). \\

\noindent
$neg := $ the agent with smaller label in the team; \\
$pos := $ the agent with larger label in the team; \\
$neg.bag:=\emptyset$;\\
$pos.bag:=\emptyset$;\\
$neg$ assigns itself mode $\negSlow$;\\
$pos$ assigns itself mode $\plusSlow$; \\

\noindent
{\bf Event B:} The set $\Sigma$ of agents at node $u$ has size larger than 1 in a  round $0<t<z$.\\

\noindent
{\bf for} each agent $g\in \Sigma$ \\
\hspace*{1cm}{\bf if} $g$ is slow and there is another slow agent $g'\in \Sigma$ such that the bases of $g$ and $g'$ are at distance $1$, and $g$ and $g'$ have different progress directions {\bf then}\\
\hspace*{2cm} $g$ switches to mode $\hiber$; \\

\noindent
{\bf Event C:} The set $\Sigma$ of agents at node $u$ has size larger than 1 in round $z$ \\

\noindent
{\bf for} each agent $g\in \Sigma$ \\
\hspace*{1cm} $g.compLabel:=$ empty string; \\
\hspace*{1cm} $g.countSenti:=0$; \\
\hspace*{1cm} $g$ switches to mode ${\plusFast}$; \\

%

	\noindent
	{\bf Event D}: The set $\Sigma$ of agents at node $u$ has size larger than 1 in a round $t>z$.
	
	\noindent
	{\bf case 1:} there are two slow agents $e$ and $f$,  and a set $F$ (possibly empty) of fast agents, in $\Sigma$ \\
	\hspace*{1cm} {\bf for} each agent $g \in \{e \cup f\}$ \\
	\hspace*{2cm} $g.compLabel:=$ empty string; \\
	\hspace*{2cm} $g.countSenti:=0$; \\
	\hspace*{2cm} $g$ switches to mode $\plusFast$;\\
	
	\noindent
	{\bf case 2:} there is a single slow agent $e$ and a (non-empty) set $F$ of fast agents, in $\Sigma$ \\
	\hspace*{1cm} {\bf if} there is no agent in $F$ whose progress direction differs from that of $e$ {\bf then}\\
	\hspace*{2cm} $dir:=$ the progress direction of $e$ in the current round (** $dir\in \{-1,+1\}$ **);\\
	\hspace*{2cm} for each agent $f\in F$\\
	\hspace*{3cm} $e.bag:= e.bag \cup \{f.label\}$;\\
	\hspace*{3cm} $f.compLabel:=e.label$;\\
	\hspace*{3cm} $f$ switches to mode $(-dir)$-{\tt fast};\\
	\hspace*{2cm} {\bf if} the size of $e.bag$ is at least $2R-2$ {\bf then}\\
	\hspace*{3cm} $e$ switches to mode $\senti$; \\

	\noindent
	{\bf case 3:} $\Sigma$ contains only fast agents\\
	\hspace*{1cm} each agent $f\in \Sigma$ maintains its fast mode;\\
	
	\noindent
	{\bf case 4:} $\Sigma$ contains a single sentinel $e$, a non-empty set $F$ of fast agents and a (possibly empty) set $I$ of agents in mode $\idle$ \\
	\hspace*{1cm} $dir := $ the common progress direction of agents in $F$ in the current round; \\
	\hspace*{1cm} {\bf if} the size of $\Sigma$ is $2R-1$ {\bf then} \\
	\hspace*{2cm} $e$ switches to mode $(-dir)$-{\tt fast}; \\
	\hspace*{2cm} each agent in $F\cup I$ switches to mode $(-dir)$-{\tt fast}; \\
	\hspace*{1cm} {\bf else}\\
	\hspace*{2cm} {\bf if} the size of $\Sigma$ is $2R$ {\bf then} \\
	\hspace*{3cm} each agent at $u$ transits to state {\tt STOP}; \\
	\hspace*{2cm} {\bf else} \\ 
	\hspace*{3cm} {\bf for} each agent $f\in F$\\
	\hspace*{4cm} $f.countSenti:=f.countSenti+1$;\\
	\hspace*{4cm} {\bf if} $f.countSenti < 2$  {\bf then}\\
	\hspace*{5cm} $f.compLabel:=e.label$;\\
	\hspace*{5cm} $f$ switches to mode $(-dir)$-{\tt fast}; \\
	\hspace*{4cm} {\bf else}\\
	\hspace*{5cm} {\bf if} $f.compLabel>e.label$ {\bf then}\\
	\hspace*{6cm} $f$ switches to mode $\idle$; \\
	\hspace*{5cm} {\bf else}\\
	\hspace*{6cm} $f.compLabel:=e.label$;\\
	\hspace*{6cm} $f$ switches to mode $(-dir)$-{\tt fast};\\

	\noindent
	{\bf case 5:} $\Sigma$ contains a single {\tt sentinel} $e$ and a non-empty set $I$ of agents in mode $\idle$\\
	\hspace*{1cm} $e$ maintains its mode $\senti$; \\ 
	\hspace*{1cm} each agent in $I$ maintains its mode $\idle$;\\


This completes the detailed description of Algorithm {\tt Small Teams Unoriented}. A terminological precision is in order. When we say that an agent switches to some mode, it means that it starts executing procedure {\tt Dance} with appropriate parameters, from its beginning. 
When we say that an agent maintains some mode, it means that the agent continues to execute procedure {\tt Dance} by processing the next bit of the same string without interruption.

We will have to prove that, for Event D, the five cases enumerated in the description are the only ones that can happen during a meeting after round $z$, and that only the left-most and right-most agents become sentinels.
Hence, in case 4,
all fast agents have a common progress direction, as they meet a sentinel, and hence their progress direction before the meeting had to be towards this sentinel.

\subsubsection{Correctness and complexity}

In this section, we present the proof of correctness of Algorithm {\tt Small Teams Unoriented}, and establish its complexity.

\begin{theorem}\label{unoriented-2}
	Suppose that agents have distinct labels drawn from the set $\{1, \cdots, L\}$, that all teams are of size $2$, and that the distance between the bases of the most distant teams is $D$.
	Algorithm {\tt Small Teams Unoriented} gathers all agents at the same node of an unoriented line in time $O(D\log L)$.
\end{theorem}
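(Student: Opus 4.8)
\noindent
The plan is to first establish the two behavioral properties of procedure {\tt Dance} on which everything rests, and then to decompose the execution into three phases, bounding the first by $O(D\log L)$ and the other two by $O(D)$. I would begin by recording the two trajectory speeds: since both $\alpha=Tr(\ell)$ and $\beta=(1110)$ contain an odd number of $1$'s, the dancing source of a slow (resp.\ fast) agent advances exactly one node in its progress direction per phase, so a slow agent drifts at speed $1/z=\Theta(1/\log L)$ and a fast agent at speed $1/4$; in particular a fast agent is strictly faster than a slow one because $z>4$. The technical heart is then two lemmas adapting the label-transformation idea of \cite{DFKP}. The \emph{meeting property}: two slow agents with distinct labels and opposite progress directions, started synchronously at distance $d$, meet when their dancing sources are at distance $1$ or $2$, and this occurs within $O(d\log L)$ rounds. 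The \emph{catching property}: a fast agent that follows a slow agent progressing in the same direction catches it within $O(d)$ rounds, where $d$ is their current gap. For the meeting property, the prefix $10$ and suffix $00$ inserted by $Tr$ make the two transformed labels differ in a way that forces a round in which one agent traverses the connecting edge while the other stays idle, producing coincidence once the sources are at distance $1$ or $2$; the $\log L$ factor is the phase length $z$ times the $O(d)$ phases needed to bring the sources together at combined drift $2/z$. For the catching property, the fact that no $4$-bit sub-segment of $\alpha$ equals $\beta$ prevents the faster agent from stepping over the slower one without meeting it.

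\noindent
Armed with these, Phase~I shows that each neighbor pair $g_{2r},g_{2r+1}$ meets and both of its agents switch to a fast mode. Being the right agent of team $r$ and the left agent of team $r{+}1$, these two move toward each other, so the meeting property applies; I would follow the algorithm's three cases --- bases at distance $1$ (Event~B sends the pair to $\hiber$, Event~C converts it at round $z$), distance $2$ (the pair stays slow until Event~C at round $z$), and distance $>2$ (Event~D case~1 converts it at the meeting) --- and bound each meeting time by $O(d_r\log L)=O(D\log L)$, where $d_r$ is the gap between the two bases. Because two agents of the same team diverge, and at round $z$ every agent sits within one node of its base, no within-team pair is ever collocated at round $z$ and so no extreme agent is wrongly converted; since $g_1$ and $g_{2R}$ have no converging partner, the only agents still in a slow mode after Phase~I are exactly $g_1$ and $g_{2R}$, and there are exactly $2R-2$ fast agents.

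\noindent
Phase~II analyses the swinging of these $2R-2$ fast agents between the two slow walls $g_1$ (progressing left) and $g_{2R}$ (progressing right). By the catching property each fast agent alternately catches $g_1$ and $g_{2R}$, reversing its direction at every catch through Event~D case~2 and depositing its label into the corresponding bag. Since all slow agents stay within $O(D)$ of their bases and a fast round trip costs $O(D)$ rounds, within $O(D)$ additional rounds both $g_1.bag$ and $g_{2R}.bag$ reach size $2R-2$, whereupon $g_1$ and $g_{2R}$ become sentinels. The delicate and, I expect, hardest high-level point is to prove that \emph{only} these two agents ever reach bag size $2R-2$: each intermediate agent is converted to fast by meeting its neighbor before its bag can collect $2R-2$ distinct labels, so the sentinel test isolates precisely the two extremes. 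Establishing this, together with verifying that the five cases of Event~D are exhaustive (no unforeseen meeting type occurs) and that only the two extreme agents become \senti, is the principal obstacle and calls for a global invariant tracking, round by round, which agents are slow, fast, or sentinels and their left-to-right order.

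\noindent
Phase~III is the short $O(D)$ coda. The $2R-2$ fast agents shuttle between the two sentinels, which are $O(D)$ apart; using $g.compLabel$ and $g.countSenti$ they learn and compare the two sentinel labels, all move to the smaller-labelled sentinel and turn $\idle$, so $2R-1$ agents become collocated. At that point Event~D case~4 turns them all fast toward the remaining sentinel, where all $2R$ agents finally meet and transit to {\tt STOP}; each fast agent makes only a constant number of $O(D)$-length trips, so this phase costs $O(D)$. Summing, the total running time is $O(D\log L)+O(D)+O(D)=O(D\log L)$, which matches the lower bound of the previous subsection and proves the theorem.
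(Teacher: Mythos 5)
Your proposal is correct and follows essentially the same route as the paper's proof: the same two key properties of {\tt Dance} (the meeting property for two converging slow agents, and the property that a fast agent always catches a slow agent it follows), the same decomposition into the rounds at which all neighbor pairs have turned fast, both extremes have become sentinels, $2R-1$ agents have collected at the smaller-labelled sentinel, and all $2R$ agents have gathered, and the same accounting in which only the neighbor-meeting phase costs $O(D\log L)$ while the remaining phases cost $O(D)$. The obstacles you single out as the principal difficulty --- that only $g_1$ and $g_{2R}$ can ever reach bag size $2R-2$, and that the five cases of Event D are exhaustive --- are exactly the points the paper discharges via its left-to-right ordering invariants on slow agents.
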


The high-level idea of the proof is the following.
Without loss of generality, assume that the label of $g_1$ is smaller than the label of $g_{2R}$.
First, we prove that agent $g_{2r}$ meets its neighbor $g_{2r+1}$, for each $1\le r< R$.
After the meeting, $g_{2r}$ and $g_{2r+1}$ become fast agents and swing between $g_1$ and $g_{2R}$.
Then, we show that only $g_1$ and $g_{2R}$ will change (possibly in different rounds) to mode $\senti$.
As agents $g_2, \dots, g_{2R-1}$ swing between $g_1$ and $g_{2R}$, and $g_1$ is in mode $\senti$, agents $g_2, \dots, g_{2R-1}$ will meet $g_1$ and stay with it in mode $\idle$.
Next, we prove that by the earliest round when all $2R-2$ fast agents gather at the same node where $g_1$ stays, agent $g_{2R}$ has already changed to mode $\senti$.
As a result, all $2R-1$ agents, apart from $g_{2R}$, become fast agents, progress towards $g_{2R}$ and eventually meet $g_{2R}$, while $g_{2R}$ stays put in mode $\senti$. Then gathering is achieved and all agents transit to state STOP.
More formally, let $t_1$ denote the earliest round in which agent $g_i$ becomes a fast agent after meeting its neighbor, for all $1<i<2R$.
Let $t_2$ denote the earliest round in which both $g_1$ and $g_{2R}$ switched to mode $\senti$.
Let $t_3$ denote the earliest round in which all $2R-1$ agents, apart from $g_{2R}$, have gathered at the same node.
Let $t_4$ denote the earliest round in which all $2R$ agents gather at the same node.
Our goal is to prove that all rounds $t_1, \dots, t_4$ exist; furthermore, we will show that $t_1<t_2\leq  t_3<t_4\in O(D\log L)$.

The rest of the section is devoted to the detailed proof of Theorem \ref{unoriented-2}. The proof is split into several lemmas.
The first lemma shows by which round any pair of neighbors change to mode $\plusFast$.
\begin{lemma}
	\label{lem-t1}
	Consider any pair of neighbors $g_{2r}$ and $g_{2r+1}$ and let $\mathcal{D}_r$ denote the distance between their bases, where $1\le r< R$.
	Agents $g_{2r}$ and $g_{2r+1}$ switch to mode $\plusFast$ by round $\lceil \mathcal{D}_r /2\rceil z$.
\end{lemma}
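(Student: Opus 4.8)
The plan is to follow the two neighbours through the synchronised phases of procedure {\tt Dance} and show they are forced to coincide during phase $k^\ast:=\lceil \mathcal{D}_r/2\rceil$, at which point the appropriate event turns both into $\plusFast$. First I would record the elementary dynamics of a slow agent. Its dance string $\alpha=Tr(\ell)$ has length $z$ and, by construction, an odd number of bits $1$; hence one phase of $z$ rounds carries the agent to the opposite endpoint of its current dancing edge, so its dancing source advances by exactly one step in its progress direction per phase. Because $\alpha$ begins with $10$, has each information bit doubled, and ends with the suffix $00$, the agent's position during a phase is confined to the two endpoints of the current dancing edge, and it rests at the new dancing source during the final rounds of the phase. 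Since both neighbours wake up in round $0$ and both run a slow dance of phase length $z$, their phases are perfectly aligned.

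Next I would exploit that $g_{2r}$ and $g_{2r+1}$ progress towards each other: $g_{2r}$ is the agent of team $r$ moving right, $g_{2r+1}$ the agent of team $r+1$ moving left, and team $r$ lies left of team $r+1$. Their dancing sources therefore approach one another by two steps per phase, so at the start of phase $k$ the distance between the sources is $\mathcal{D}_r-2(k-1)$; at phase $k^\ast$ this equals $1$ if $\mathcal{D}_r$ is odd and $2$ if $\mathcal{D}_r$ is even. For every $k<k^\ast$ the two sources are at distance at least $3$, so the (at most two) nodes reachable by each agent in that phase are disjoint and no meeting can occur beforehand. This also shows the convergence cannot be derailed: a slow agent only alters its dance by meeting another slow agent (Event C or Event D case~1), whereas a fast agent catching a lone slow agent (Event D case~2) merely updates its bag and leaves its dance, and hence its trajectory, unchanged.

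The technical heart is to show that in phase $k^\ast$, with the sources at distance $1$ or $2$, the two slow agents genuinely share a node rather than crossing on an edge. Here I would invoke the meeting property of procedure {\tt Dance} inherited from the label transformation of \cite{DFKP}: two agents with distinct labels dancing along a common edge (source-distance $1$) or through a common middle node (source-distance $2$) must stand on the same node in some round, because $Tr$ guarantees a prefix length at which the parities of the numbers of $1$'s already processed by the two agents differ, and differing parities place both agents on the same endpoint. I expect establishing this guarantee to be the main obstacle, since it is exactly what the doubling of bits and the added prefix and suffix of $Tr$ are engineered to deliver; concretely one checks that the two agents coincide after processing $i$ bits precisely when these prefix parities disagree, and that such an $i$ exists for distinct transformed labels.

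Finally I would convert the meeting into the claimed mode change by splitting on $\mathcal{D}_r$. If $\mathcal{D}_r\ge 3$ then $k^\ast\ge 2$, the meeting happens after round $z$, and Event D case~1 fires (whether or not fast agents are also present), switching both agents to $\plusFast$ within phase $k^\ast$, i.e.\ by round $k^\ast z$. If $\mathcal{D}_r=2$ then $k^\ast=1$ and, by the end-of-phase computation, both agents rest on the shared middle node in round $z$, so Event C converts them to $\plusFast$ in round $z=\lceil 2/2\rceil z$. If $\mathcal{D}_r=1$ then $k^\ast=1$; since the full strings have odd $1$-count the agents have swapped ends by round $z$, so their node-meeting occurs strictly before round $z$ (and no fast agent yet exists, as none is created before round $z$), Event B puts both into $\hiber$, and they stay together until Event C converts them to $\plusFast$ in round $z=\lceil 1/2\rceil z$. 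In every case the switch occurs by round $\lceil \mathcal{D}_r/2\rceil z$, completing the argument.
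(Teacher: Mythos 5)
Your core mechanism matches the paper's: synchronized phases of {\tt Dance}, dancing sources approaching by two steps per phase, the differing-bit/parity argument for the actual node-meeting once the sources are at distance $1$ or $2$, and the split into $\mathcal{D}_r=1$ (Event B, then C), $\mathcal{D}_r=2$ (Event C), and $\mathcal{D}_r\ge 3$ (Event D, case 1). The genuine gap is in the sentence where you dismiss interference: ``the convergence cannot be derailed \dots a fast agent catching a lone slow agent (Event D case~2) merely updates its bag and leaves its dance \dots unchanged.'' This covers only one of the two ways a third party can disrupt the pair, and it misstates even that one. First, Event D case~2 does \emph{not} always leave the slow agent's trajectory unchanged: if the slow agent's bag reaches size $2R-2$, it switches to mode $\senti$ and stops dancing. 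You must show that the bags of $g_{2r}$ and $g_{2r+1}$ (for $1\le r<R$) can never reach that size while they are slow; the paper does this via Claims \ref{claim-g1-g2-slow-senti} and \ref{claim-no-senti-2-2R-1} ($g_1$ and $g_{2R}$ never become fast while any other agent is slow, so a slow $g_i$ with $1<i<2R$ can be caught by at most $2R-3$ distinct fast agents). Second, and more importantly, you never rule out that $g_{2r}$ (or $g_{2r+1}$) meets some \emph{other} slow agent after round $z$ before the two neighbors meet each other. Such a meeting triggers Event D case~1 with the wrong partner: $g_{2r}$ would switch to $\plusFast$ paired with, say, $g_{2r+3}$, and then $g_{2r+1}$ would be left with no approaching slow agent, so neither the claimed bound nor the claimed mode change for it follows from your argument. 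Your distance computation only shows that the two neighbors themselves cannot meet before phase $k^\ast$; it says nothing about third slow agents.

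The paper closes exactly this hole with a block of structural claims: same-direction slow agents preserve their left-to-right order and cannot meet after round $z$ (Claims \ref{claim-left-left-right-right} and \ref{claim-at-most-2-slow}); left-progressing slow agents with index $<2r$ cannot reach nodes right of the base of $g_{2r}$ after round $z$; and left-progressing agents with index $>2r+1$ stay right of $g_{2r+1}$ --- whence the first slow--slow meeting of either neighbor after round $z$ is with the other neighbor (proof of Claim \ref{claim-base-dist-3-or-more}). A related omission: for $\mathcal{D}_r\ge 3$ you also need that neither neighbor is collocated with \emph{anyone} in round $z$ itself, or Event C would fire prematurely; the paper proves this in the remark following Claim \ref{claim-base-dist-two}. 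Finally, your justification of the $\mathcal{D}_r=1$ meeting (``the agents have swapped ends by round $z$, so their node-meeting occurs before round $z$'') is not valid on its own --- two agents running identical strings swap ends without ever sharing a node --- but this is harmless here, since your preceding paragraph supplies the correct argument via a prefix at which the parities of processed $1$'s disagree, which exists because the labels are distinct.
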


\begin{proof}
	We start with the following general claim.
	
	\begin{claim}
		\label{claim-meeting-slow-agents}
		Consider any pair of slow agents $e$ and $f$, starting at two different nodes at a distance $d$, such that the progress direction of each agent is towards the base of the other. Then both agents meet in a phase of procedure {\tt Dance}
		when their dancing sources at a distance either $1$ or $2$. This takes less than $\lceil d/2 \rceil z$ rounds.
	\end{claim}
	
	In order to prove the claim, 
	let $\ell_e$ and $\ell_f$ denote the labels of agents $e$ and $f$, respectively.
	Since both agents are woken up simultaneously and are in slow modes, their dancing sources are always changed in the same round.
	Initially, the distance of their dancing sources is $d$.
	After each $z$ rounds, the distance between their dancing sources is decreased by $2$.
	So, during the execution of {\tt Dance}, their dancing sources' distance changes to be $d, d-2, d-4, \dots$.
	There are two cases to consider: either $d$ is even or $d$ is odd.
	First, suppose that $d$ is even.
	In the first round of the phase when their dancing sources' distance becomes $2$, both agents leave their respective dancing sources and meet at the node between their dancing sources, since both $Tr(\ell_e)$ and $Tr(\ell_f)$ start with bit $1$.
	Therefore, if $d$ is even, then both agents meet in the first round of the phase when the dancing sources' distance is $2$, taking $(d-2)/2\cdot z+1< d\cdot z/2$ rounds.
	Next, suppose that $d$ is odd.
	Observe that there is at least one position at which the bit in $Tr(\ell_e)$ differs from the bit in $Tr(\ell_f)$.
	Let $t$ denote the first round of the phase when the dancing sources' distance becomes one and let $p$ denote the leftmost position at which the bits in $Tr(\ell_e)$ and $Tr(\ell_f)$ differ.
	Notice that $2<p<z-1$.
	Then $e$ and $f$ meet in round $t+p-1$, because in this round one agent traverses the edge of which the endpoints are the dancing sources of $e$ and $f$ and the other agent stays put in this round.
	Therefore, if $d$ is odd, then both agents meet when the dancing sources' distance is $1$, taking $t+p-1$ rounds, where $t+p-1<(d-1)z/2+z\le \lceil d/2\rceil z$, as $t=(d-1)z/2$. This completes the proof of the claim. $\diamond$

	In the first $z$ rounds, agents $g_2, g_4, \dots, g_{2R}$ dance between their respective bases and the right adjacent nodes, while agents $g_{1}, g_{3}, \dots, g_{2R-1}$ dance between their respective bases and the left adajcent nodes.
	Therefore, only neighbors whose bases are at a distance at most $2$ can meet within the first $z$ rounds. In the next two claims we show that neighbors whose bases are at distance at most 2 switch to mode $\plusFast$ in round $z$. First suppose that $\mathcal{D}_r=1$.
	
	\begin{claim}
		\label{claim-base-dist-1}
		Consider any pair of neighbors $g_{2r}$ and $g_{2r+1}$.
		If $\mathcal{D}_r=1$, then $g_{2r}$ and $g_{2r+1}$ are at the same node and change to mode $\plusFast$ in round $z$.
	\end{claim}
	
	For simplicity, we call $g_{2r}$ and $g_{2r+1}$, {\em close neighbors}, if $\mathcal{D}_r=1$.
	Consider agent $g_{2r}$ at some node of the line.
	Let $g_{j}$ denote any agent such that $g_{j}$ and $g_{2r}$ have different progress directions and the bases of both agents are at a distance $1$.
	If $g_j$ exists, then $j$ can be either $2r+1$ or $2r-3$.
	Notice that $g_{2r}$ and $g_{2r-3}$, as slow agents, cannot meet within the first $z$ rounds, while $g_{2r+1}$, as the close neighbor of $g_{2r}$, meets $g_{2r}$ within the first $z-1$ rounds, in view of Claim \ref{claim-meeting-slow-agents}.
	Consider agent $g_{2r+1}$ at some node of the line.
	Let $g_{j'}$ denote any agent such that $g_{j'}$ and $g_{2r+1}$ have different progress directions and the bases of both agents are at a distance $1$.
	If $g_{j'}$ exists, then $j'$ can be either ${2r}$ or $2r+4$.
	Notice that $g_{2r+1}$ and $g_{2r+4}$, as slow agents, cannot meet within the first $z$ rounds, while $g_{2r}$, as the close neighbor of $g_{2r+1}$, meets $g_{2r+1}$ within the first $z-1$ rounds, in view of Claim \ref{claim-meeting-slow-agents}.
	
	If there is a pair of close neighbors, $g_{2r}$ and $g_{2r+1}$, then they meet at some node $u$ in some round $t$, where $t<z$.
	At their meeting, {\tt Event B} is triggered.
	Although there might be more than two agents at node $u$ in round $t$, the above argument implies that each agent at $u$ can be aware of whether its close neighbor is present or not by checking the distances of its base and other agents' bases and comparing its progress direction with other agents' progress directions.
	In view of the action in {\tt Event B}, an agent switches to mode $\hiber$ only if its close neighbor is also present at $u$ in round $t$.
	After the action, all pairs of close neighbors at $u$ switch to mode $\hiber$.
	So, an agent in mode $\hiber$ always stays at the same node with its close neighbor, hence each agent in mode $\hiber$ switches in round $z$ to mode $\plusFast$, in view of the action in {\tt Event C}.
	Therefore, if $\mathcal{D}_r=1$, then both neighbors $g_{2r}$ and $g_{2r+1}$ are at the same node and switch to mode $\plusFast$ in round $z$. 
	This proves the claim. $\diamond$
	
	Next, suppose that $\mathcal{D}_r=2$.
	\begin{claim}
		\label{claim-base-dist-two}
		Consider any pair of neighbors $g_{2r}$ and $g_{2r+1}$.
		If $\mathcal{D}_r=2$, then $g_{2r}$ and $g_{2r+1}$ are at the same node and change to mode $\plusFast$ in round $z$.
	\end{claim} 
	
	In view of Claim \ref{claim-meeting-slow-agents}, $g_{2r}$ and $g_{2r+1}$ meet in the first round.
	This meeting triggers {\tt Event B}, but both agents cannot switch to mode $\hiber$ immediately at the meeting, since they are not close neighbors.
	In round $z$, both agents meet again at the unique node between their bases, and this time {\tt Event C} is triggered.
	So both agents switch to mode $\plusFast$ in round $z$.
	This proves the claim. $\diamond$
	
	Notice that if an agent has not met its neighbor within the first $z$ rounds, then it cannot meet other agents in round $z$.
	Indeed, consider any agent $g_{2r+1}$ that has not met its neighbor $g_{2r}$.
	The distance between the bases of agents $g_{2r+1}$ and $g_{2r}$ has to be at least 3; otherwise, agents $g_{2r+1}$ and $g_{2r}$ would meet in round $z$, in view of Claims \ref{claim-base-dist-1} and \ref{claim-base-dist-two}.
	Therefore, none of the agents $g_{1}, \dots, g_{2r}$ has met $g_{2r+1}$ by round $z$.
	On the other hand, none of the agents $g_{2r+2}, \dots, g_{2R}$ can reach any node that is left of the base of $g_{2r+2}$, while agent $g_{2r+1}$ stays at the left adjacent node of its base in round $z$.
	Therefore, agent $g_{2r+1}$ is alone in round $z$ and {\tt Event C} cannot apply to agent $g_{2r+1}$.
	The argument for any agent $g_{2r}$ that has not met its neighbor $g_{2r+1}$ is similar.
	
	Now, we consider meetings after round $z$. In the next two claims we show that in any round $t>z$, there are at most two slow agents present at each node.
	
	\begin{claim}
		\label{claim-left-left-right-right}
		Let $g_i$ and $g_j$ denote any two slow agents in any round $t>z$ such that $g_i$ and $g_j$ share the same progress direction.
		Then
		\begin{enumerate}[label=(\roman*)]
			\item if $i<j$, then $g_i$ is left of $g_j$ in round $t$, and
			\item if $i>j$, then $g_i$ is right of $g_j$ in round $t$.
		\end{enumerate}
	\end{claim}
	
	In order to prove the claim, 
	observe that no agents switch to either mode $\negSlow$ or mode $\plusSlow$, after round 1.
	So agents $g_i$ and $g_j$ have been slow agents between round 1 and round $t$.
	Since neither $g_i$ nor $g_j$ becomes a fast agent in round $z$, then the bases of $e$ and $f$ are at a distance at least $3$, in view of Claims \ref{claim-base-dist-1} and \ref{claim-base-dist-two}.
	If $i<j$, then the base of $g_i$ is left of the base of $g_j$; if $i>j$, then base of $g_i$ is right of the base of $g_j$.
	The distance of their dancing sources remains the same in the first $t$ rounds: this distance is always at least $3$. 
	So, the relative positions between $g_i$ and $g_j$ are always the same in the first $t$ rounds. $\diamond$
	
	\begin{claim}
		\label{claim-at-most-2-slow}
		After round $z$, there can be at most two slow agents at the same node.
		Furthermore, if there are two slow agents at the same node, then their progress directions have to be different.
	\end{claim}
	
	In view of Claim \ref{claim-left-left-right-right}, any two slow agents that share the progress direction cannot meet after round $z$.
	Notice that if at least three slow agents were at the same node, at least two of them would share the same progress direction.
	Therefore, there cannot be more than two slow agents at the same node after round $z$.
	If there are two slow agents at the same node, then they have different progress directions.
	The claim is proved. $\diamond$
	
	In the next two claims we prove that none of the agents $g_2,\dots, g_{2R-1}$ ever switches to mode $\senti$.
	
	\begin{claim}
		\label{claim-g1-g2-slow-senti}
		After round $z$, neither $g_1$ nor $g_{2R}$ meets a slow agent; $g_1$ and $g_{2R}$ remain slow until they switch to mode $\senti$.
	\end{claim}
	
	We give the proof for $g_1$, while the proof for $g_{2R}$ is similar.
	After round $z$, $g_1$ is still a slow agent.
	Indeed, within the first $z-1$ rounds, the only agents that $g_1$ can meet are $g_2$ and $g_3$. If $g_1$ meets $g_2$, then {\tt Event B} is triggered, but $g_1$ cannot switch to mode $\hiber$, as $g_1$ and $g_2$ share the same base; if $g_1$ meets $g_3$, then {\tt Event B} is triggered as well, but $g_1$ cannot switch to mode $\hiber$, as $g_1$ and $g_3$ share the progress direction.
	In round $z$, agent $g_1$ stays alone at the node that is left of its base, so $g_1$ cannot switch to mode $\plusFast$.
	
	After round $z$, $g_1$ cannot meet any slow agent that has the same progress direction, in view of Claim \ref{claim-left-left-right-right}.
	On the other hand, any slow agent that has a different progress direction than $g_1$ cannot meet $g_1$ after round $z$, due to the fact that $g_1$ is the leftmost agent on the line and it progresses left.
	Therefore, $g_1$ cannot meet any slow agent, after round $z$.
	After round $z$, a slow agent switches to either mode $\plusFast$ or mode $\senti$.
	A slow agent switches to mode $\plusFast$ only if it meets another slow agent, in view of the action in {\tt Event D} (more precisely {\tt case 1}).
	So agent $g_1$ remain slow until it switches to mode $\senti$. $\diamond$
	
	\begin{claim}
		\label{claim-no-senti-2-2R-1}
		None of the agents $g_2,\dots, g_{2R-1}$ ever switches to mode $\senti$.
	\end{claim}
	
	In view of the action in {\tt Event D}, more precisely {\tt case 2}, only a slow agent can switch to mode $\senti$, after it has seen $2R-2$ fast agents, and a $\senti$ only appears after round $z$.
	Consider a slow agent $g_i$ for any $1<i<2R$.
	In view of Claim \ref{claim-g1-g2-slow-senti}, $g_1$ and $g_{2R}$ are slow agents in the first $z$ rounds, and $g_1$ and $g_{2R}$ cannot meet any slow agent after round $z$.
	So $g_i$ would not add their labels to its bag, even if it met $g_1$ or $g_{2R}$ in the first $z$ rounds.
	As a slow agent, $g_i$ can only meet at most $2R-3$ fast agents.
	So it cannot switch to mode $\senti$. $\diamond$
	
	We are now ready to consider meetings of neighbors with bases at distance more than 2.
	
	\begin{claim}
		\label{claim-base-dist-3-or-more}
		Consider any pair of neighbors $g_{2r}$ and $g_{2r+1}$.
		If $\mathcal{D}_r>2$, then $g_{2r}$ and $g_{2r+1}$ are at the same node and switch to mode $\plusFast$ before round $\lceil \mathcal{D}_r/2\rceil z$.
	\end{claim}
	
	The remark following the proof of Claim \ref{claim-base-dist-two} tells that agent $g_{2r}$ stays alone at the 
	right adjacent node of its base in round $z$, and agent $g_{2r+1}$ stays alone at the left adjacent node of its base in round $z$.
	So, their distance in round $z$ is $\mathcal{D}_r-2$. 
	Next, we show that the first meeting that either $g_{2r}$ or $g_{2r+1}$ can have with a slow agent after round $z$ is meeting each other.
	
	In view of the action in {\tt Event D}, a slow agent can switch to mode $\senti$ or mode $\plusFast$ after round $z$.
	In view of Claim \ref{claim-no-senti-2-2R-1}, neither $g_{2r}$ nor $g_{2r+1}$ switches to mode $\senti$.
	Therefore, they can only switch to mode $\plusFast$.
	Agent $g_{2r}$ switches to mode $\plusFast$, after it meets a slow agent, and this slow agent must have a different progress direction, in view of Claim \ref{claim-at-most-2-slow}. 
	
	Consider any slow agent $g_i$.
	If $i$ is even, then $g_i$ cannot meet $g_{2r}$, as both of them share the same progress direction.
	If $i$ is odd and $i<2r$, then $g_i$ as a slow agent cannot reach any node right of the base of $g_{2r}$, after round $z$.
	So, $g_i$ cannot meet $g_{2r}$ after round $z$.
	If $i$ is odd and $i>2r+1$, then $g_{2r}$ would meet $g_{i}$ later than $g_{2r}$ meets $g_{2r+1}$, since $g_{i}$ is right of $g_{2r+1}$, in view of Claim \ref{claim-left-left-right-right}.
	Therefore, after round $z$, $g_{2r}$ meets $g_{2r+1}$ earlier than any other slow agents.
	The claim that $g_{2r+1}$ meets $g_{2r}$ earlier than any other slow agents after round $z$ can be proved in a similar way.
	
	In view of Claim \ref{claim-meeting-slow-agents}, $g_{2r}$ and $g_{2r+1}$ meet before round $\lceil \mathcal{D}_r/2\rceil z$ and they switch to mode $\plusFast$ at the meeting.
	This proves the claim. $\diamond$
	
	Therefore, Claims \ref{claim-base-dist-1}, \ref{claim-base-dist-two} and \ref{claim-base-dist-3-or-more} imply Lemma \ref{lem-t1}.
\end{proof}

In view of Lemma \ref{lem-t1}, we have $t_1 \leq \lceil {D}/2\rceil z$, as ${D}\geq\max_{1\le r <R} \{\mathcal{D}_r\}$.
Next, we consider what happens after round $t_1$.
%
%
The following lemma says that a fast agent always meets a slow agent that it follows.

\begin{lemma}
	\label{lem-catch-slow-fast}
	Consider a slow agent $e$ and a fast agent $f$ sharing the same progress direction in some round $t$.
	Suppose that in this round, $f$ follows $e$ and the distance between $e$ and $f$ is $d$.
	Then, agent $f$ meets $e$ by round $t+48d+7$.
\end{lemma}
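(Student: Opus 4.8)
The plan is to analyze the two trajectories in isolation: agent $e$ executes procedure {\tt Dance} with the slow string $\alpha=Tr(\ell)$, whose phases last $z$ rounds, while agent $f$ executes {\tt Dance} with the fast string $\beta=(1110)$, whose phases last only $4$ rounds. Without loss of generality I assume the common progress direction is the plus direction, so that for a rightward‑progressing agent the dancing source is the left endpoint of the current dancing edge and advances by one node per phase. The first step is to track the two dancing sources: since $z>4$, the source of $f$ advances (one node per $4$ rounds) strictly faster than the source of $e$ (one node per $z$ rounds), so $f$ catches up. Because $z=2\len+4\ge 6$, the relative speed of the sources is at least $1/4-1/z\ge 1/12$, and since at round $t$ the two sources differ by $d\pm O(1)$ (each dancing source lies within distance $1$ of its agent's actual position), after at most $12d+O(1)$ rounds the source of $f$ reaches the source of $e$.

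The heart of the argument is to show that once the sources coincide, a meeting occurs within a bounded, $L$‑independent number of additional rounds. The key structural fact, guaranteed by the construction of $\alpha$, is that every $4$‑bit sub‑segment of $\alpha$ differs from $\beta=(1110)$. I would use it as follows. Suppose that during a full fast phase the source of $f$ equals the source of $e$, call it $p$, and this fast phase is contained in the slow phase of $e$ for source $p$; then both agents dance on the same edge $[p,p+1]$ during these $4$ rounds. If at the start of the phase they occupy the same endpoint they have already met; otherwise they occupy opposite endpoints, and they remain on opposite endpoints exactly as long as their per‑round actions agree. Since the $4$ bits read by $e$ form a sub‑segment of $\alpha$ and therefore differ from $(1110)$ in at least one position, there is a round in which exactly one of the two agents traverses the edge while the other stays put; at the end of that round both occupy the same node, that is, they meet (a round in which both traverse only swaps them across the edge and is not a meeting, which is why the distinctness from $\beta$, rather than mere activity, is needed).

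It then remains to handle the misalignment of the two agents' (unsynchronized) phases, which I expect to be the main obstacle. Because $f$ follows $e$ and catches up monotonically from behind, $f$ cannot reach the edge $[p,p+1]$ before $e$'s phase for source $p$ begins, since that would force the source of $f$ to exceed that of $e$. Hence the only problematic situation is that the fast phase with source $p$ straddles the \emph{end} of $e$'s slow phase for source $p$. In that case I would argue that the immediately following fast phase, which has source $p+1$, is cleanly contained in $e$'s next slow phase (the one with source $p+1$): using $z\ge 6$ one checks the required containment of a $4$‑round window inside a $z$‑round window, and the same‑edge meeting argument above then applies. Thus only $O(1)$ extra rounds beyond the catch‑up time are ever needed, and the same $O(1)$ slack absorbs the fact that round $t$ need not be a phase boundary.

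Putting the pieces together, $f$ meets $e$ within $12d+O(1)$ rounds for the sources to align, plus $O(1)$ rounds for the meeting itself, comfortably below the stated bound $t+48d+7$; the constant $48$ merely reflects crude estimates of the catch‑up rate and of the alignment overhead. The step requiring the most care is the second and third one: verifying, across the possible relative positions of the two endpoints and across the phase‑boundary straddle, that the $4$‑bit distinctness of $\alpha$ from $\beta$ indeed forces a coincidence at a node within a constant number of fast phases.
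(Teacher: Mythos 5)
Your decomposition is genuinely different from the paper's: you work at the first moment the two dancing sources coincide and analyze a fast phase contained in a slow phase on the common dancing edge, whereas the paper works at the first round $\tau$ in which $f$ is strictly past $e$, shows that in round $\tau$ the two agents must have swapped across a common edge, and derives a meeting within three further rounds. The catch-up estimate (the source of $f$ advances one node per $4$ rounds, that of $e$ one node per $z$ rounds, initial source gap at most $d+1$) is common to both and is fine, as is your handling of the phase straddle, though your containment check actually needs $z\geq 7$, which holds because $z=2\cdot\len+4\geq 8$ (since $\len\geq 2$), not merely $z\geq 6$.

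However, there is a genuine gap at the crux. The claim you invoke --- that \emph{every} $4$-bit sub-segment of $\alpha=Tr(\ell)$ differs from $\beta=(1110)$ --- is false. Whenever $Bin(\ell)$ contains two consecutive $1$s followed by a $0$ (or followed by the suffix), the doubling creates a run $1111$ followed by $0$; for example, for $Bin(\ell)=(0011)$ one gets $Tr(\ell)=(100000111100)$, whose sub-segment at positions 8--11 is exactly $1110$. The property holds only for windows starting at \emph{odd} positions of $Tr(\ell)$, by the doubling structure. This alignment is precisely what your argument lacks: if the four bits read by $e$ during the contained fast phase were $1110$, both agents would traverse the edge in each of the first three rounds (swapping endpoints, never meeting) and stay put in the fourth, after which $f$ has overtaken $e$ with no meeting --- exactly the event the lemma must exclude. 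To close the gap you must show that in your configuration (agents at opposite endpoints of the common dancing edge) the window read by $e$ cannot be $1110$: if it began with a $1$, then $e$, being at its non-source endpoint, would be moving back to its dancing source in that round, and a parity count of the $1$s in $Tr(\ell)$ shows this can only happen at an odd position, where $1110$ cannot occur. This is the technical heart of the matter; the paper supplies it by proving that the position $i$ processed by $e$ at the critical round is odd before using the distinctness of the strings (the paper's own wording of the observation is also too strong, but its proof only applies it in the odd-aligned case, where it is valid). Without this parity argument your central step fails.
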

\begin{proof}
	Without loss of generality, we assume that the progress direction shared by both $e$ and $f$ is left.
	So, in round $t$, $f$ is right of $e$ (at distance $d$), and the distance $\delta$ between their dancing sources is at most $d+1$.
	Let $\tau$ denote the earliest round in which $f$ is left of $e$.
	Let $u_f$ denote the node reached by $f$ in round $\tau$, and let $u_e$ denote the node reached by $e$ in round $\tau$.
	So, $u_f$ and $u_e$ are the endpoints of a common edge and $u_f$ is left of $u_e$ (see Fig \ref{fig:u-e-f}).
	
	\begin{figure}[t]
		\centering
		\includegraphics[width=\textwidth]{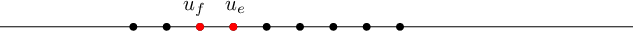}
		\caption{The positions of $u_e$ and $u_f$ on the line.}
		\label{fig:u-e-f}
	\end{figure}
	
	\begin{claim}\label{tau}
		$\tau\le t+48d+4$.
	\end{claim}
	
	In order to prove the claim, notice that as $f$ is a fast agent, its dancing source moves one step left every four rounds; as $e$ is a slow agent, its dancing source moves one step left every $z=2\cdot\len+4$ rounds.
	By round $t+48d+4$, the dancing source of $f$ has moved left at least $12d$ steps, while the dancing source of $e$ can move left at most $6d+1$ steps\footnote{Since $R\ge2$, at least four distinct labels are required and $\len\ge 2$.}, as $\len\ge 2$.
	By round $t+48d+4$, the dancing source of $f$ must be left of the dancing source of $e$ and the distance between these dancing sources is at least 2, since $12d> 6d+1+\delta$, $d\ge 1$ and $\delta\le d+1$.
	Hence, $f$ is left of $e$ by round $t+48d+4$. This proves the claim. $\diamond$
	
	If $e$ and $f$ have met by round $\tau$, then the statement of the lemma follows from Claim \ref{tau}. Hence, we may assume that $e$ and $f$ have not met by round $\tau$.
	
	\begin{claim}\label{moves tau}
		In round $\tau$, $f$ moves from $u_e$ to $u_f$ and $e$ moves from $u_f$ to $u_e$.
	\end{claim}
	
	We prove the claim by contradiction.
	If neither $f$ nor $e$ moves in round $\tau$, then $f$ would be left of $e$ in round $\tau-1$, which is impossible, as $\tau$ is the earliest round in which $f$ is left of $e$.
	Hence, at least one of them is moving in round $\tau$.
	If $f$ stays and $e$ moves in round $\tau$, then $f$ stays at $u_f$ and $e$ moves from $u_f$ to $u_e$.
	However, in this case, agents $e$ and $f$ have met at node $u_f$ in round $\tau-1$, which contradicts our assumption.
	If $f$ moves and $e$ stays in round $\tau$, then $f$ moves from $u_e$ to $u_f$ and $e$ stays at $u_e$.
	In this case, agents $e$ and $f$ have met at $u_e$ in round $\tau-1$, which again contradicts our assumption.
	Therefore, in round $\tau$, $f$ moves from $u_e$ to $u_f$ and $e$ moves from $u_f$ to $u_e$, which proves the claim. $\diamond$

	We will now show that agents $e$ and $f$ have met by round $\tau+3$.
	Let $\ell_e$ denote the label of $e$.
	Recall that the strings processed by $e$ and $f$, while executing procedure {\tt Dance}, are $Tr(\ell_e)$ and $(1110)$, respectively.
	Let $i$ denote the position of the bit in $Tr(\ell_e)$ that agent $e$ is processing in round $\tau$, and let $j$ denote the position of the bit in $(1110)$ that agent $f$ is processing in round $\tau$.
	Since the progress direction of $e$ is left and $e$ moves right from $u_f$ to $u_e$ in round $\tau$, it follows that $u_e$ is the dancing source of $e$ in round $\tau$. Since in round $\tau$, $e$ moves back to its dancing source, $i$ must be an odd number and $1<i\le 2\cdot\len+1$.
	Since the progress direction of $f$ is left and $f$ moves left from $u_e$ to $u_f$ in round $\tau$, it follows that $u_e$ is the dancing source of $f$ in round $\tau$. Since agent $f$ leaves its dancing source in round $\tau$, $j$ must be either $1$ or $3$.
	If $j=1$ then $e$ and $f$ meet by round $\tau+3$.
	This is because any 4-bit sub-segment of $Tr(\ell_e)$ is different from $(1110)$ and there must exist at least one round among rounds $\tau+1$, $\tau+2$ and $\tau+3$, in which either $e$ traverses the edge $(u_f, u_e)$ and $f$ stays put or $f$ traverses the edge and $e$ stays put.
	If $j=3$ then $e$ and $f$ meet in round $\tau+1$.
	This is because in round $\tau+1$, $e$ traverses the edge $(u_f, u_e)$ and $f$ stays put.
	Therefore, by round $\tau+3$, agents $e$ and $f$ have met.
	Since $\tau\le t+48d+4$, agents $e$ and $f$ meet by round $t+48d+7$.
\end{proof}

Next, we prove that both $g_1$ and $g_{2R}$ will switch to mode $\senti$ (possibly in different rounds), which implies the existence of round $t_2$.

\begin{lemma}
	\label{lem-both-switch-to-senti}
	Both $g_1$ and $g_{2R}$ switch (possibly in different rounds) to mode $\senti$.
\end{lemma}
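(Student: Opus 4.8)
The plan is to show that the bags of $g_1$ and $g_{2R}$ each eventually reach size $2R-2$, which by case 2 of Event D forces each of them to switch to mode $\senti$. The starting point is that, by Claim~\ref{claim-no-senti-2-2R-1}, the only candidates for becoming sentinels are $g_1$ and $g_{2R}$, and by Claim~\ref{claim-g1-g2-slow-senti} these two agents remain slow until (if ever) they switch to $\senti$. Hence, from round $t_1$ on (Lemma~\ref{lem-t1}), the only slow agents on the line are $g_1$, the leftmost agent which progresses left, and $g_{2R}$, the rightmost agent which progresses right, while the remaining $2R-2$ agents are fast. Since $g_1$ and $g_{2R}$ diverge and no other slow agent lies between them, a fast agent progressing left can only have $g_1$ ahead of it, and a fast agent progressing right can only have $g_{2R}$ ahead of it.

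First I would establish that, as long as both $g_1$ and $g_{2R}$ are slow, every fast agent catches each of them repeatedly. Indeed, whenever a fast agent reverses its progress direction at a catch (case 2 of Event D), it begins to follow the slow agent on the opposite side, and by Lemma~\ref{lem-catch-slow-fast} it catches that agent within a bounded number of rounds, adding its own label to that agent's bag and bouncing back. Consequently both $g_1.bag$ and $g_{2R}.bag$ are non-decreasing and keep growing while both agents are slow, so at least one of them reaches size $2R-2$ and, by case 2 of Event D, the corresponding agent switches to $\senti$. Let $s$ be the first round at which this happens. If both bags reach size $2R-2$ in round $s$, both agents become sentinels and the lemma holds; so assume, by the symmetry between the two ends, that only $g_1$ becomes a sentinel in round $s$. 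At that moment every fast agent has caught $g_1$ at least once.

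It remains to show that $g_{2R}$ also reaches bag size $2R-2$, i.e.\ that every one of the $2R-2$ fast agents catches $g_{2R}$ at least once. A fast agent whose label is already in $g_{2R}.bag$ by round $s$ is done. For a fast agent $f$ that has not yet caught $g_{2R}$, I would track its behavior after round $s$, when $g_1$ is the unique sentinel and $g_{2R}$ is still slow: if $f$ progresses right it catches $g_{2R}$ directly (case 2), and if it progresses left it reaches the sentinel $g_1$, where case 4 increments $f.countSenti$ from $0$ to $1$ and reverses $f$ toward $g_{2R}$, which it then catches by Lemma~\ref{lem-catch-slow-fast}. The key point to verify is that $f$ cannot be removed from circulation before catching $g_{2R}$: the only way a fast agent leaves the swinging dynamics is to switch to mode $\idle$, which in case 4 requires $f.countSenti=2$; but reaching $f.countSenti=2$ needs two meetings with a sentinel, and while $g_1$ is the only sentinel these two meetings are both with $g_1$, forcing $f$ to travel all the way to $g_{2R}$ (and hence catch it) in between. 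Thus no fast agent goes idle before catching $g_{2R}$, so all $2R-2$ labels land in $g_{2R}.bag$, and $g_{2R}$ switches to $\senti$ as well.

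The main obstacle will be precisely this last step: once $g_1$ has become a sentinel, the fast agents stop interacting with it through the ``catch'' rule (case 2) and start interacting through the sentinel rule (case 4), whose label-comparison branch can divert an agent to mode $\idle$. I expect the delicate part to be arguing rigorously that the comparison performed at $g_1$ --- which may use a label $f$ picked up from $g_{2R}$ as a \emph{slow} agent rather than as a sentinel --- still cannot terminate $f$'s swinging before $f$ has deposited its label in $g_{2R}.bag$; the $f.countSenti$ accounting above is exactly what closes this gap. I would also confirm the auxiliary facts that fast--fast meetings (case 3) merely let agents pass through without diverting them, and that, until $g_{2R}$ itself becomes a sentinel, no fast agent can accumulate a second sentinel meeting without an intervening catch of $g_{2R}$.
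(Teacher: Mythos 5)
Your proposal is correct and follows essentially the same route as the paper's proof: fast agents swing between the two slow end agents until one of $g_1$, $g_{2R}$ fills its bag and becomes a sentinel, and then the $countSenti$ accounting --- switching to mode $\idle$ requires two sentinel meetings, and between two consecutive meetings with the unique sentinel a fast agent can only be reversed by catching the still-slow end --- forces the other end's bag to fill as well. The only deviation is cosmetic: the paper handles the two cases ($g_{2R}$ becomes sentinel first, or $g_1$ does) by two separate arguments (the second by contradiction), whereas you collapse them into one by symmetry, which is legitimate precisely because your argument never invokes the label comparison in case 4 of {\tt Event D}, the only asymmetric ingredient.
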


\begin{proof}
	Observe that in the first $t_1$ rounds, neither $g_1$ nor $g_{2R}$ switches to mode $\senti$ and after round $t_1$, none of agents $g_2, \dots, g_{2R-1}$ is a slow agent.
	In view of Lemma \ref{lem-catch-slow-fast}, a fast agent always meets a slow agent, if the former one follows the latter one.
	When agent $g_1$ or $g_{2R}$ in a slow mode is met by a fast agent, then the fast agent only changes its progress direction after the meeting.
	Hence, agents $g_2, g_3, \dots, g_{2R-1}$ in fast modes swing between $g_1$ and $g_{2R}$, until at least one of the agents $g_1$ and $g_{2R}$ adds $2R-2$ labels of fast agents to its bag and then switches to mode $\senti$.

	First, suppose that $g_{2R}$ has switched to mode $\senti$ and $g_{1}$ is still a slow agent.
	Let $f$ denote any fast agent and let $\tau$ denote the earliest round in which $f$ meets $g_{2R}$ and $g_{2R}$ is already in mode $\senti$.
	We show that $f.countSenti$ cannot become 2 in round $\tau$.
	Indeed, if $f.countSenti$ became 2 in round $\tau$, then $f$ would have met a sentinel $s$ before round $\tau$.
	In view of the assumption, $g_{2R}$ is the only sentinel in the first $\tau$ rounds.
	Therefore, $s$ has to be $g_{2R}$.
	Then $\tau$ could not be the earliest round in which $f$ meets $g_{2R}$, and $g_{2R}$ is in mode $\senti$.
	This is a contradiction.
	If $f.countSenti$ becomes 1 in round $\tau$, then $f$ remains fast but changes its progress direction.
	After round $t_1$, none of the agents $g_2, \dots, g_{2R-1}$ is a slow agent, so none of them could cause agent $f$ to change its progress direction.
	Hence, in view of Lemma \ref{lem-catch-slow-fast}, agent $f$ catches $g_1$.
	As $f$ is any fast agent, $g_{1}$ meets all $2R-2$ fast agents sooner or later and switches mode to $\senti$.
	
	Second, suppose that $g_{1}$ switches to mode $\senti$ and $g_{2R}$ never switches to mode $\senti$.
	Hence, there exists at least one fast agent $g_j$, where $1<j<2R$, whose label is never added to the bag of $g_{2R}$.
	This means that before $g_j$ meets $g_{2R}$, it has stopped swinging, switched to mode $\idle$
	and stayed put with $g_1$.
	In view of case 4 in {\tt Event D}, agent $g_j$ only switches to mode $\idle$ after $g_j.countSenti\ge 2$.
	Hence, agent $g_j$ only switches to mode $\idle$ after it met $g_1$ at least twice, since $g_{1}$ is the only agent in mode $\senti$.
	However, after $g_1$ switches to mode $\senti$ and $g_j$ meets $g_1$ for the first time, $g_j$ cannot meet $g_1$ for the second time without meeting $g_{2R}$.
	This is because after the fast agent $g_j$ left $g_1$, the only slow agent that can change the progress direction of $g_j$ is $g_{2R}$.
	Hence $g_j$ must meet $g_{2R}$ and $g_{2R}$ must add the label of $g_j$ to its bag. This is a contradiction.
	This proves that $g_{2R}$ must  eventually switch to mode $\senti$ as well.
\end{proof}

In view of Lemma \ref{lem-both-switch-to-senti}, both $g_1$ and $g_{2R}$ eventually switch to mode $\senti$, so round $t_2$ exists.
Furthermore, since no sentinel appears in the first $t_1$ rounds, we have $t_1<t_2$.
When a fast agent $f$ meets a sentinel and $f.countSenti$ is increased to be at least 2 at the meeting, $f$ has to identify whether this sentinel is $g_1$ or $g_{2R}$ and then take different actions correspondingly.
We will use the following Lemma.

\begin{lemma}
	\label{lem-comp-labels}
	Consider any fast agent $g_i$ which meets the sentinel $g_1$ (resp. $g_{2R}$)  at node $u$ in round $t$.
	Suppose that the number of agents at node $u$ in round $t$ is smaller than  $2R-1$.
	If $g_i.countSenti$ is increased to a value $y\geq 2$ in round $t$, then $g_i.compLabel$ stores the label of $g_{2R}$ (resp. $g_1$) in this round.
\end{lemma}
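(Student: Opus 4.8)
The plan is to show that, for a fast agent swinging between the two extreme agents $g_1$ and $g_{2R}$, the meetings at which its direction reverses---and hence the only meetings at which its variable $compLabel$ is overwritten with a label---strictly alternate between $g_1$ and $g_{2R}$. Consequently the value that such an agent carries into a meeting with $g_1$ is always $g_{2R}.label$, and symmetrically. I would prove only the case in which $g_i$ meets $g_1$; the case of $g_{2R}$ is symmetric.

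First I would record the structural facts I need, all of which follow from the earlier results. By Lemma~\ref{lem-t1} every agent among $g_2,\dots,g_{2R-1}$ has switched to a fast mode by round $t_1$, by Claim~\ref{claim-no-senti-2-2R-1} none of them ever becomes a sentinel, and no agent ever returns to a slow mode after round $1$. Hence, from round $t_1$ on, the only agents that are slow or in mode $\senti$ are $g_1$ and $g_{2R}$, which bound all the agents on the left and on the right (a fast agent never passes either of them, since by Lemma~\ref{lem-catch-slow-fast} it catches the agent it follows and reverses there). The only meetings that can reverse the progress direction of a fast agent, or alter its $compLabel$, are meetings with a slow agent (Event~D, case~2) or with a sentinel (Event~D, case~4); a meeting with fast agents only (case~3) changes neither, and agents in mode $\idle$ only ever sit together with a sentinel. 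Thus the only agents that can reverse $g_i$ or rewrite its $compLabel$ are $g_1$ and $g_{2R}$.

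Next I would follow $compLabel$ along the run of $g_i$ that ends in round $t$. Since $g_i$ reaches the leftmost agent $g_1$ in round $t$, it is progressing left in round $t$, so the most recent reversal of its direction turned it from progressing right to progressing left. Because $g_1$ is leftmost and $g_{2R}$ rightmost, such a right-to-left reversal can only have happened at $g_{2R}$. At that meeting with $g_{2R}$---whether $g_{2R}$ was then slow (case~2) or already a sentinel (case~4)---the assignment $g_i.compLabel := g_{2R}.label$ is executed; in the sentinel subcase this is forced to be the non-idle branch, since $g_i$ is still a fast agent in round $t$. Between that reversal and round $t$, agent $g_i$ progresses monotonically toward $g_1$ without reversing, so it meets no slow agent and no sentinel (all of which sit at the two ends), and meets only fast agents, which leave $compLabel$ untouched. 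Therefore $g_i.compLabel$ equals $g_{2R}.label$ at the moment $g_i$ meets $g_1$ in round $t$, as claimed.

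I would close by checking the hypotheses. The assumption that fewer than $2R-1$ agents are present at $u$ places the round-$t$ meeting in the subcase of case~4 that increments $g_i.countSenti$ and compares $compLabel$ with the sentinel's label, which is the situation the statement refers to; the assumption $y\ge 2$ guarantees that $g_i$ met a sentinel strictly before round $t$, so that at least one earlier reversal occurred and the ``previous reversal at $g_{2R}$'' argument indeed applies. The main obstacle is precisely this last bookkeeping: ensuring that the reversal immediately preceding round $t$ was at $g_{2R}$ and not at $g_1$, and that no intervening event rewrote $compLabel$ to $g_1.label$. Both are delivered by the observations that reversals can occur only at the two boundary agents and that $g_i$ travels monotonically (meeting only mode-preserving fast agents) between consecutive reversals.
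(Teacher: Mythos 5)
Your proof is correct in its core mechanism and takes a genuinely different route from the paper's. The paper argues \emph{forward}: it fixes the round $\tau$ in which $g_i.countSenti$ becomes $1$, notes that at that meeting $g_i.compLabel$ is set to the label of the sentinel met and the direction reverses, and then maintains by an alternation argument the invariant that $g_i$'s subsequent meetings strictly alternate between $g_1$ and $g_{2R}$, with $compLabel$ rewritten to the host's label at each one, up to the round when $g_i$ idles. You argue \emph{backward} from round $t$: the most recent reversal of $g_i$ was right-to-left, hence could only have occurred at $g_{2R}$, where $compLabel$ was set to $g_{2R}$'s label, and the subsequent monotone trip to $g_1$ involves only fast agents (and no idle agents, which sit only with sentinels), so nothing rewrites it. Both arguments rest on the same structural facts (Lemma~\ref{lem-t1}, Claim~\ref{claim-no-senti-2-2R-1}, Lemma~\ref{lem-catch-slow-fast}); yours avoids the induction, and, notably, does not need the standing assumption that $g_1$ carries the smaller label, which the paper's proof implicitly uses (via its invariant) to rule out $g_i$ idling at $g_{2R}$.

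One step needs shoring up. Your dichotomy for the earlier meeting at the sentinel $g_{2R}$ --- ``idle branch versus non-idle branch'' --- omits the first branch of case~4, triggered when $|\Sigma|=2R-1$: there \emph{every} agent present (the fast ones, the idle ones, and the sentinel itself) switches to a fast mode, and nobody's $compLabel$ is written. So a right-to-left reversal at $g_{2R}$ without the assignment $g_i.compLabel:=g_{2R}.label$ is conceivable on the face of the pseudo-code, and ``$g_i$ is still a fast agent in round $t$'' does not exclude it, nor does it exclude $g_i$ having idled at $g_{2R}$ and having been re-activated by exactly such a meeting. The hypothesis of the lemma closes this hole, and you should say so explicitly: after a $(2R-1)$-agent meeting at $g_{2R}$, all $2R-1$ participants, $g_{2R}$ included, start the same dance from the same node in the same round and hence move in lockstep; their next meeting is with $g_1$, involves all $2R$ agents, does not increment $countSenti$, and causes all agents to stop. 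In particular it cannot be the round $t$ of the lemma (fewer than $2R-1$ agents present, $countSenti$ incremented), so the last reversal before $t$ was not of this kind. With these two or three sentences added, your argument is complete; the paper's own proof makes the parallel implicit assumption at its intermediate meetings with $g_{2R}$, so this is a refinement rather than a divergence.
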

\begin{proof}
	Let $\tau$ denote the round in which $g_i.countSenti$ becomes $1$.
	In round $\tau$, $g_i$ meets a sentinel $s$.
	At the meeting, $g_i.countSenti$ is increased from $0$ to $1$, $g_i.compLabel$ is set to be the label of $s$, and $g_i$ remains fast but changes its progress direction.
	Observe that after round $\tau$, none of the agents $g_2,\dots, g_{2R-1}$ is a slow agent or a sentinel.
	Hence,  after round $\tau$, only a meeting with agent $g_1$ or $g_{2R}$ can change the value of $g_i.compLabel$.
	If $s$ is $g_1$, then agent $g_i$ meets agent $g_{2R}$ before meeting $g_1$ for the next time.
	At the meeting of $g_i$ and $g_{2R}$, $g_i.compLabel$ stores the label of $g_1$ and is updated to be the label of $g_{2R}$.
	If $s$ is $g_{2R}$
	then agent $g_i$ meets agent $g_{1}$ before meeting $g_{2R}$ for the next time.
	At the meeting of $g_i$ and $g_{1}$, $g_i.compLabel$ stores the label of $g_{2R}$ and is updated to be the label of $g_{1}$.
	
	Let $\tau'$ denote the round in which $g_i$ switches to mode $\idle$.
	Notice that $g_i.compLabel$ remains unchanged after round $\tau'$ and that $g_i$ alternates between meeting $g_1$ and $g_{2R}$, between rounds $\tau+1$ and $\tau'$.
	During this period, $g_i.countSenti\ge 1$, and each time $g_i$ meets $g_1$ (resp. $g_{2R}$), $g_i.compLabel$ stores the label of $g_{2R}$ (resp. $g_{1}$) and is updated to the label of $g_1$ (resp. $g_{2R}$) .
	This proves the lemma.
\end{proof}

Let $s$ denote the sentinel at the meeting described in Lemma \ref{lem-comp-labels}.
At this meeting, a fast agent $f$ already knows the labels of both agents $g_1$ and $g_{2R}$, and hence can identify the one with the smaller label
by comparing $f.compLabel$ with the label of $s$. Recall that we assumed (without loss of generality) that the label of $g_1$ is smaller than the label of $g_{2R}$.
If $s.label<f.compLabel$, then $s$ is $g_1$, and $f$ switches to mode $\idle$ and stays with $g_1$; otherwise $s$ is $g_{2R}$, so $f$ sets $f.compLabel$ to be the label of $g_{2R}$ and remains fast but changes its progress direction.
Therefore, all the fast agents $g_2, \dots, g_{2R-1}$ can identify $g_1$ (which is already in mode {\tt sentinel}) and switch to mode $\idle$ at the meeting.
Eventually, $2R-1$ agents gather at the same node where $g_1$ stays put.
This implies the existence of round $t_3$.
Next, we prove that $t_2\le t_3$.

\begin{lemma}
	\label{lem-t-3-t-2}
	By round $t_3$, $g_{2R}$ has switched to mode $\senti$.
\end{lemma}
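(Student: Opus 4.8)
The plan is to argue by contradiction. Suppose that $g_{2R}$ has not yet switched to mode $\senti$ by round $t_3$; I will show that $g_{2R}.bag$ must nevertheless already contain the labels of all $2R-2$ agents $g_2,\dots,g_{2R-1}$, which by the check at the end of {\tt case 2} of {\tt Event D} forces $g_{2R}$ to become a sentinel, a contradiction. Throughout I would rely on the structure established earlier: by Lemma \ref{lem-t1} each agent $g_i$ with $1<i<2R$ is fast from round $t_1$ on, and by Claims \ref{claim-g1-g2-slow-senti} and \ref{claim-no-senti-2-2R-1} the only agents that are ever slow or become sentinels after round $t_1$ are $g_1$ and $g_{2R}$. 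Consequently a label can enter a bag (which happens only in {\tt case 2}, when fast agents catch a slow agent) only if it is the label of one of the $2R-2$ fast agents $g_2,\dots,g_{2R-1}$. Finally, as shown in the discussion preceding this lemma, the gathering counted by $t_3$ takes place at the node of $g_1$, where $g_1$ is in mode $\senti$; in particular $g_1$ has become a sentinel by round $t_3$.

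Since $g_1$ is a sentinel by round $t_3$, its bag reached size at least $2R-2$; but only the $2R-2$ labels of $g_2,\dots,g_{2R-1}$ can lie in a bag, so $g_1.bag$ already equals exactly this set. Hence every fast agent $g_i$ has, at some round $\gamma_i<t_3$, caught the slow agent $g_1$ in an instance of {\tt case 2}, and at that meeting $g_i$ reversed its progress direction so as to head toward $g_{2R}$. After round $\gamma_i$, agent $g_i$ moves toward $g_{2R}$, and since no agent other than $g_1$ and $g_{2R}$ is slow or a sentinel after round $t_1$, a fast agent can reverse its progress direction only upon meeting $g_1$ or $g_{2R}$. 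But $g_i$ is again located at the node of $g_1$ in round $t_3>\gamma_i$, so between rounds $\gamma_i$ and $t_3$ it must have reversed direction at least once while away from $g_1$, which can only happen at a meeting with $g_{2R}$. Therefore $g_i$ meets $g_{2R}$ at some round strictly between $\gamma_i$ and $t_3$.

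At that meeting $g_{2R}$ is, by the contradiction hypothesis, still slow, so the meeting is an instance of {\tt case 2} and the label of $g_i$ is inserted into $g_{2R}.bag$ at a round smaller than $t_3$. As this applies to each of the $2R-2$ agents $g_2,\dots,g_{2R-1}$, all of their labels belong to $g_{2R}.bag$ before round $t_3$, and the check at the end of {\tt case 2} then makes $g_{2R}$ switch to mode $\senti$ no later than the round in which the last such label is inserted, i.e.\ strictly before $t_3$ --- contradicting the assumption. This proves that $g_{2R}$ has switched to $\senti$ by round $t_3$; together with the fact that $g_1$ is already a sentinel by $t_3$, it yields $t_2\le t_3$. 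I expect the delicate step, which I would justify most carefully, to be the middle one: that $g_i$'s return to the node of $g_1$ by round $t_3$ forces an intervening meeting with $g_{2R}$. This is precisely where one must invoke that, after round $t_1$, the only agents able to reverse a fast agent's progress direction are $g_1$ and $g_{2R}$, so that no spurious reversal strictly between them is possible.
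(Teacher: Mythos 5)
Your argument reaches the correct conclusion by a route genuinely different from the paper's. The paper works directly with the rounds $\tau_i$ at which each fast agent $g_i$ switches to mode $\idle$: going idle requires $g_i.countSenti\ge 2$, hence two meetings with the sentinel $g_1$, and between those two meetings only $g_{2R}$ can have turned $g_i$ around, so $g_i$'s label entered $g_{2R}$'s bag by round $\tau_i$; the paper then invokes the identification $t_3=\max_i\tau_i$. You instead argue by contradiction using only the bag mechanism: fullness of $g_1$'s bag yields a case-2 catch of $g_1$ by each fast agent at some round $\gamma_i$, and the agent's presence back at $g_1$'s node in round $t_3$ forces an intervening reversal, which can only occur at a meeting with the still-slow $g_{2R}$. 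Both proofs run on the same engine --- after round $t_1$ only $g_1$ and $g_{2R}$ can reverse a fast agent --- but yours bypasses the $countSenti$/$compLabel$/$\idle$ machinery, and the unproven premise you import (the $t_3$-gathering happens at a node where $g_1$ is already a sentinel) is weaker than the one the paper imports ($t_3=\max_i\tau_i$, which implies yours).

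There is, however, one step that does not follow from what you invoke, and it is not the step you flagged as delicate. From ``$g_1$ has become a sentinel \emph{by} round $t_3$'' you infer $\gamma_i<t_3$; but ``by round $t_3$'' allows $g_1$ to turn sentinel at the $t_3$-meeting itself, in which case some $\gamma_i=t_3$ and your (otherwise sound) return-trip argument never gets off the ground. This is not pedantic: for $R=2$ the two fast agents $g_2,g_3$ turn fast together at the same node and thereafter move in lockstep, and if the adversary's port labeling sends this pair toward $g_1$ first, then the round in which they catch $g_1$ is simultaneously the first round in which all $2R-1$ agents $g_1,g_2,g_3$ are collocated --- i.e., it is $t_3$ under the literal definition --- while $g_1$ turns sentinel only at that very meeting and $g_{2R}$ has never been met by a fast agent, so its bag is empty and the lemma's conclusion itself fails under that reading. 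To be fair, the paper's proof has exactly the same vulnerability: in that execution no agent ever switches to $\idle$, so the ``recalled'' identity $t_3=\max_i\tau_i$ is vacuous (the algorithm still gathers, via the $|\Sigma|=2R-1$ branch of case 4). So your proposal is on par with the paper's level of rigor, but the genuinely delicate point is showing that $g_1$ turned sentinel \emph{strictly} before $t_3$ (equivalently, pinning down where the earliest $(2R-1)$-gathering can occur), not the intervening-meeting step you singled out.
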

\begin{proof} 
	Let $\tau_i$, for $1<i<2R$, denote the earliest round when $g_i$ switches to mode $\idle$.
	
	\begin{claim}
		By round $\tau_i$, $g_{2R}$ has met $g_i$ and added the label of $g_i$ to its bag.
	\end{claim}
	
	In order to prove the claim, notice that both $g_1$ and $g_{2R}$ switch to mode $\senti$, in view of Lemma \ref{lem-both-switch-to-senti}.
	If $g_{2R}$ has already switched to mode $\senti$ by round $\tau_i$, then the claim immediately follows.
	Otherwise, $g_1$ is in mode $\senti$ in round $\tau_i$, while $g_{2R}$ is still a slow agent. 
	Let $t'$ denote the earliest round when $g_1$ switches to mode $\senti$.
	Obviously, $t'< \tau_i$.
	Since $g_i.countSenti\ge 2$ and $g_1$ is the only agent in mode $\senti$ in round $\tau_i$, $g_i$ has met $g_1$ at least twice after round $t'$.
	When $g_i$ meets $g_1$ for the first time after round $t'$, $g_i$ increases $g_i.countSenti$ to $1$, and then it leaves $g_1$ as a fast agent and progresses towards $g_{2R}$.
	At this time, none of the agents $g_2, \dots, g_{2R-1}$ is a slow agent, so $g_{2R}$ is the only agent that can change the progress direction of $g_i$, so that $g_i$ can meet $g_1$ later for the second time after round $t'$.
	Therefore, by round $\tau_i$, $g_{2R}$ has met $g_i$ and added the label of $g_i$ to its bag. This proves the claim. $\diamond$
	
	Recall that $t_3 = \max_{1<i<2R}\{\tau_i\}$.
	In view of the above claim, it follows that $g_{2R}$ has switched to mode $\senti$ by round $t_3$.
\end{proof}

In view of Lemma \ref{lem-t-3-t-2}, by round $t_3$, both agents $g_1$ and $g_2$ have switched to mode $\senti$.
So, we have $t_2\le t_3$.
In round $t_3$, all agents $g_1,\dots,g_{2R-1}$ are at the same node, become fast agents and progress towards $g_{2R}$, while $g_{2R}$ stays put.
Eventually, all $2R$ agents gather at the same node where $g_{2R}$ stays put.

This concludes the proof that gathering is eventually achieved.
In order to complete the proof of correctness of Algorithm {\tt Small Teams Unoriented}, we need to show that the algorithm is correctly formulated, i.e., that the five cases enumerated in {\tt Event D} are the only
ones that can happen during a meeting after round $z$.

An agent after round $z$ can be in one of the following four types of modes: slow, fast, $\senti$ or $\idle$.
It follows from the description of the algorithm that the following statements are all true after round $z$.
\begin{enumerate}[label=\roman*)]
	\item At most two slow agents are at the same node.
	\item Only agents $g_1$ and $g_{2R}$ can become sentinel.
	\item Once a sentinel appears, none of the agents $g_2, \dots, g_{2R-1}$ is in a slow mode.
	\item Only fast agents can switch to mode $\idle$.
	\item A fast agent switches to mode $\idle$ only if it meets a sentinel.
	\item When $g_1$ meets $g_{2R}$, $g_1$ is in a fast mode and $g_{2R}$ is in mode $\senti$.
\end{enumerate}
Consider any node $u$ on the line, where a meeting happens in a round $t>z$.
Let $\Sigma$ denote the set of agents at node $u$ in round $t$.

It follows that 
\begin{enumerate}[label=\alph*)]
	\item $\Sigma$ contains at most two slow agents, in view of statement i).
	\item $\Sigma$ cannot contain a sentinel and a slow agent at the same time, in view of statements ii) and iii).
	\item $\Sigma$ contains at most one sentinel, in view of statement vi).
	\item If $\Sigma$ contains at least one agent in mode $\idle$, then $\Sigma$ has to contain a single sentinel, in view of statements iv), v) and vi).
	\item A slow agent cannot be in $\Sigma$ together with an agent in mode $\idle$, in view of statements b) and d).
\end{enumerate}

Therefore, cases 1-2 in {\tt Event D} are the only possible cases that involve slow agents; cases 1-4 are the only possible cases that involve fast agents; cases 4-5 are the only possible cases that involve agents in mode $\senti$ and $\idle$.
This completes the proof of the correctness of Algorithm {\tt Small Teams Unoriented}.

It remains to estimate the complexity of the algorithm.
Let $\maxDist(t)$ denote the distance between $g_1$ and $g_{2R}$ in round $t$.
For example, $\maxDist(0)$ is $D$ and $\maxDist(1)$ is $D+2$.

The following lemma gives an upper bound on $\maxDist(t)$ for any $t\leq t_2$ (where $t_2$ is the first round when both $g_1$ and $g_{2R}$ are already in mode {\tt sentinel}).

\begin{lemma}
	\label{lem-maxdist}
	$\maxDist(t)\le D+2\lceil t/z \rceil$, where $0\le t\le t_2$.
\end{lemma}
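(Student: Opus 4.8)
The plan is to introduce a coordinate on the line, bound separately how far $g_1$ can travel in its leftward progress direction and how far $g_{2R}$ can travel in its rightward progress direction by round $t$, and then add these two displacements to the initial gap $D$. First I would fix coordinates so that the base of $g_1$ is at $0$ and the base of $g_{2R}$ is at $D$, with the minus (left) direction decreasing the coordinate. Since $g_1$ is the leftmost and $g_{2R}$ the rightmost agent and they diverge, $\maxDist(t)$ equals $x_{2R}(t)-x_1(t)$, where $x_1(t)$ and $x_{2R}(t)$ denote their coordinates in round $t$. By Claim \ref{claim-g1-g2-slow-senti}, for every $t$ with $0\le t\le t_2$ each of $g_1,g_{2R}$ is either still a slow agent or has already frozen in mode $\senti$; in the latter case its position is constant from the round it became a sentinel on, so it suffices to bound the travel of a slow agent.

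The core estimate concerns a single slow agent executing procedure {\tt Dance}. I would argue that its dancing source advances exactly one step in its progress direction after every phase, that a phase lasts $z$ rounds (the length of $\alpha=Tr(\ell)$), and that within any one phase the agent dances only on the single edge leading from its dancing source one step further in the progress direction. Indexing phase $j$ as rounds $(j-1)z+1,\dots,jz$, a round $t\ge 1$ lies in phase $j=\lceil t/z\rceil$; during that phase the dancing source sits $j-1$ steps from the base and the agent itself is at either endpoint of the current edge, hence at most $j$ steps from the base, all measured in the progress direction. Consequently a slow agent is displaced from its base by at most $\lceil t/z\rceil$ steps in its progress direction and by $0$ steps toward the opposite side, since in the first phase it dances on the edge between the base and its progress neighbor and thereafter never returns past its base.

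Applying this to $g_1$ (progress direction left) gives $x_1(t)\ge -\lceil t/z\rceil$, and to $g_{2R}$ (progress direction right) gives $x_{2R}(t)\le D+\lceil t/z\rceil$. If either agent has already switched to mode $\senti$ at some round $t'\le t$, its position is then frozen at a value already satisfying the corresponding bound with $\lceil t'/z\rceil\le\lceil t/z\rceil$, so both inequalities persist for all $0\le t\le t_2$ (and the case $t=0$ is immediate, with both agents at their bases). Subtracting,
\[
\maxDist(t)=x_{2R}(t)-x_1(t)\le \bigl(D+\lceil t/z\rceil\bigr)-\bigl(-\lceil t/z\rceil\bigr)=D+2\lceil t/z\rceil .
\]

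The step I expect to be the main obstacle — indeed essentially the only delicate point — is the round-to-phase bookkeeping. I must pin down that a phase occupies exactly $z$ rounds starting from round $1$, that after phase $j$ the dancing source has moved exactly $j-1$ steps (which relies on $\alpha=Tr(\ell)$ containing an odd number of bits $1$, so each phase ends at the opposite endpoint of its edge), and that within a phase the agent never strays to the non-progress side of its dancing source. All of these facts follow from the structure of {\tt Dance} and the label transformation $Tr$ already established, but they must be invoked carefully so that the off-by-one at the phase boundaries (for instance rounds $t=jz$ and $t=jz+1$) comes out exactly as $\lceil t/z\rceil$ rather than one more or one less.
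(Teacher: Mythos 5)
Your proposal is correct and takes essentially the same route as the paper: both arguments bound each of $g_1$ and $g_{2R}$ separately, using the facts that a slow agent's dancing source advances one step in its progress direction per $z$-round phase (so its displacement from its base after $t$ rounds is at most $\lceil t/z\rceil$, entirely on the progress side) and that a sentinel stays put, then add the two displacements to $D$. The only difference is presentational — your version makes the coordinate setup and the phase/round bookkeeping explicit, where the paper states these facts more tersely.
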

\begin{proof}
	Let $u$ and $v$ denote the bases of $g_1$ and $g_{2R}$, and let $u(t)$ and $v(t)$ denote the nodes that are reached in round $t$ by $g_1$ and $g_{2R}$, respectively.
	By definition, the distance between $u$ and $v$ is $D$.

	In the first $t_2$ rounds, each of $g_1$ and $g_{2R}$ is first in a slow mode and then in mode {\tt sentinel}.
	The dancing source of a slow agent moves one step in the progress direction of the agent every $z$ rounds.
	A sentinel stays put at its current node.
	In round $t$, $u(t)$ is left of $u$, at a distance at most $\lceil t/z\rceil$.
	Symmetrically, $v(t)$ is right of $v$, at a distance at most $\lceil t/z\rceil$.
	Therefore, $u(t)$ and $v(t)$ are at a distance at most $D+2\lceil t/z\rceil$.
	This proves the lemma.
\end{proof}

\begin{lemma}
	\label{lem-dist-t-1}
	$\maxDist(t_1)\le 2D+1$.
\end{lemma}
\begin{proof}
	Notice that $D+2\lceil t/z \rceil$ is a non-decreasing function of $t$.
	In view of Lemma \ref{lem-t1}, we have $t_1\leq \lceil D/2 \rceil z$.
	Hence, $\maxDist(t_1)\le D+ 2\lceil t_1/z \rceil \leq D+2\lceil D/2 \rceil \leq 2D+1$.
\end{proof}


The next lemma gives a bound on the round $t_2$ which is the first round when both $g_1$ and $g_{2R}$ are already in mode {\tt sentinel}.
\begin{lemma}\label{t_2}
	\label{lem-t-2}
	$t_2 \le \lceil D/2 \rceil z + 1344D+782$.
\end{lemma}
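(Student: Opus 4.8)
The plan is to bound the window $t_2-t_1$ by tracking the $2R-2$ middle agents after round $t_1$. By the definition of $t_1$ together with Lemma~\ref{lem-both-switch-to-senti}, at round $t_1$ every agent $g_i$ with $1<i<2R$ is fast and lies strictly between the two extreme agents $g_1$ and $g_{2R}$, both of which are still slow (no sentinel exists before $t_2$, and $t_1<t_2$). Each fast agent therefore swings back and forth between $g_1$ and $g_{2R}$, and for an extreme agent to reach bag size $2R-2$ and switch to mode $\senti$ it must be caught, while still slow, by all $2R-2$ fast agents. I would show that every fast agent catches \emph{both} extremes within two consecutive ``legs'' of its swing, and then bound the total duration of two legs using the catching estimate of Lemma~\ref{lem-catch-slow-fast}.

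For the timing, consider leg~1, starting at round $t_1$. A fast agent heading toward $g_{2R}$ (resp.\ $g_1$) follows that extreme and is at distance at most $\maxDist(t_1)\le 2D+1$ from it, by Lemma~\ref{lem-dist-t-1}. Hence, by Lemma~\ref{lem-catch-slow-fast}, it catches this extreme by round $s_1:=t_1+48(2D+1)+7=t_1+96D+55$; at the catch it reverses its progress direction and begins leg~2 toward the opposite extreme. Leg~2 starts at some round $r\le s_1$, at which moment the agent sits at the node of the extreme it just caught, so its distance to the opposite extreme equals $\maxDist(r)\le D+2\lceil r/z\rceil\le \Delta_1$, where $\Delta_1:=D+2\lceil s_1/z\rceil$ (Lemma~\ref{lem-maxdist}). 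Applying Lemma~\ref{lem-catch-slow-fast} again, leg~2 finishes by round $r+48\Delta_1+7\le s_1+48\Delta_1+7$. Once all fast agents complete their two legs, every fast label has been deposited in both extremes' bags, so both $g_1$ and $g_{2R}$ are sentinels; thus $t_2\le s_1+48\Delta_1+7$.

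It then remains to evaluate the constants. By Lemma~\ref{lem-t1}, $t_1\le \lceil D/2\rceil z$, and since $R>1$ forces $\len\ge 2$ we have $z=2\len+4\ge 8$. Using the subadditivity $\lceil (A+B)/z\rceil\le \lceil A/z\rceil+\lceil B/z\rceil$ with $A=t_1$ and $B=96D+55$, and $z\ge 8$, I get $\lceil s_1/z\rceil\le \lceil D/2\rceil+\lceil (96D+55)/8\rceil\le \lceil D/2\rceil+12D+7$, whence $\Delta_1\le D+2\lceil D/2\rceil+2(12D+7)\le 26D+15$. Therefore $t_2\le s_1+48(26D+15)+7=(t_1+96D+55)+1248D+727=t_1+1344D+782\le \lceil D/2\rceil z+1344D+782$, exactly as claimed.

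The main obstacle is the bookkeeping behind ``two legs suffice and each catch lands on a \emph{slow} extreme''. I would argue that the endpoint reached in leg~1 differs from the one reached in leg~2 (a catch reverses the progress direction), so two legs visit both extremes; and if an extreme is already a sentinel upon arrival, its bag already has size $2R-2$, which—since there are exactly $2R-2$ fast agents—means this very agent's label was already deposited, so the required contribution has occurred anyway. One must also verify that no fast agent enters mode $\idle$ prematurely and leaves the swing before depositing into both bags; this uses that $\idle$ is entered only after two sentinel meetings (case~4 of {\tt Event~D}), which cannot both precede the catch of the not-yet-sentinel extreme. Finally, there is a mild circularity in invoking Lemma~\ref{lem-maxdist} at rounds $r\le s_1$, since that lemma is stated only up to $t_2$; I would remove it by contradiction: assuming $t_2$ exceeds the claimed bound $B=s_1+48\Delta_1+7$, the estimate of Lemma~\ref{lem-maxdist} holds throughout $[0,B]$, the two-leg argument then forces both sentinels to appear by $B$, contradicting $t_2>B$. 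The estimate of Lemma~\ref{lem-catch-slow-fast} stays valid if the followed extreme becomes a stationary sentinel mid-leg, since a motionless target is only caught sooner.
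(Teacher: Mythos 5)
Your proof is correct and takes essentially the same route as the paper's: two successive applications of Lemma~\ref{lem-catch-slow-fast} (one per ``leg''), with the first-leg distance bounded via Lemma~\ref{lem-dist-t-1}, the turn-around distance via Lemma~\ref{lem-maxdist}, and $t_1\le\lceil D/2\rceil z$ via Lemma~\ref{lem-t1}, arriving at exactly the same constants $1344D+782$. The only differences are presentational: you substitute the bound $2D+1$ up front instead of carrying the exact distance $x$ as a parameter, and you make explicit some edge cases (a target that has already become a sentinel, premature switching to mode $\idle$, and the applicability of Lemma~\ref{lem-maxdist} before round $t_2$) that the paper handles implicitly.
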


\begin{proof}
	Recall that in round $t_1$, both agents $g_1$ and $g_{2R}$ are in slow modes.
	Let $u'$ and $v'$ denote the nodes which agents $g_1$ and $g_{2R}$ reach in round $t_1$.
	By definition, the distance between $u'$ and $v'$ is $\maxDist(t_1)$.
	Let $g_i$ denote any fast agent between $u$ and $v$ in round $t_1$.
	Hence, we have $1<i<2R$.
	
	First assume that $g_i$ progresses left in round $t_1$.
	Let $\tau_i\ge t_1$ denote the round in which $g_i$ meets $g_1$.
	In view of Lemma \ref{lem-catch-slow-fast}, round $\tau_i$ exists.
	Notice that, in round $\tau_i$, the distance between $g_i$ and $g_{2R}$ is the same as the distance between $g_1$ and $g_{2R}$.
	The distance between $g_1$ and $g_{2R}$ in round $\tau_i$ is $\maxDist(\tau_i)$.
	In view of Lemma \ref{lem-catch-slow-fast}, $g_i$ meets $g_{2R}$ by round $\tau_i+48\cdot \maxDist(\tau_i)+7$.
	Hence, agent $g_i$ has met each of $g_1$ and $g_{2R}$ at least once by round $\tau_i+48\cdot \maxDist(\tau_i)+7$.
	Next, we compute the upper bound on $\tau_i+48\cdot \maxDist(\tau_i)+7$.
	
	\begin{claim}
		$\tau_i+48\cdot \maxDist(\tau_i)+7 \le \lceil D/2 \rceil z + 1344D+782$.
	\end{claim}
	
	To prove the claim, let $0\le x \le \maxDist(t_1)$ denote the distance between $g_1$ and $g_i$ in round $t_1$.
	In view of Lemma \ref{lem-catch-slow-fast}, we have $\tau_i\le t_1+48x+7$.
	In view of Lemma \ref{lem-t1}, we have $t_1\leq\lceil D/2 \rceil z$.
	In view of Lemma \ref{lem-maxdist}, we have $\maxDist(\tau_i)\le D+2\lceil \tau_i/z \rceil$.
	In view of Lemma \ref{lem-dist-t-1}, we have $x\le \maxDist(t_1)\le 2D+1$.
	Therefore, the upper bound of 	$\tau_i+48\cdot \maxDist(\tau_i)+7$ can be established as follows (using the fact that $z\geq 8$):
	\begin{align*}
		&\tau_i+48\cdot \maxDist(\tau_i)+7 \\
		\le & t_1+48x+7+48\cdot\maxDist(\tau_i)+7 \\
		< & \lceil D/2 \rceil z + 48x+7+48\cdot\maxDist(\tau_i)+7 \\
		\le & \lceil D/2 \rceil z + 48x+14+48\cdot(D+2\lceil \tau_i/z \rceil) \\
		\le & \lceil D/2 \rceil z + 48x+48D + 14 + 96 \lceil (t_1+48x+7)/z \rceil \\
		< & \lceil D/2 \rceil z + 48x+48D + 14 +96 (\lceil D/2 \rceil+6x+1) \\
		\le & \lceil D/2 \rceil z+48x+48D + 14 +48D+48+576x+96 \\
		= &\lceil D/2 \rceil z+624x+96D+158\\
		\le & \lceil D/2 \rceil z+624(2D+1)+96D+158\\
		=& \lceil D/2 \rceil z + 1344D+782
	\end{align*}
	This proves the claim. $\diamond$
	
	If $g_i$ progresses right in round $t_1$, then let $\tau'_i\ge t_1$ denote the round in which $g_i$ catches $g_{2R}$.
	Similarly, it follows that agent $g_i$ has met each of $g_1$ and $g_{2R}$ at least once by round $\tau'_i+48\cdot \maxDist(\tau'_i)+7$, where $\tau'_i+48\cdot \maxDist(\tau'_i)+7\le \lceil D/2 \rceil z + 1344D+782$.
	
	As $g_i$ is any fast agent between $u$ and $v$ in round $t_1$, it follows that by round $\lceil D/2 \rceil z + 1344D+782$, all the agents $g_{2}, \dots, g_{2R-1}$ have met each of $g_1$ and $g_{2R}$ at least once.
	Therefore, $t_2\le \lceil D/2 \rceil z + 1344D+782$.
\end{proof}

The next lemma estimates the distance between $g_1$ and $g_{2R}$ in round $t_2$.

\begin{lemma}\label{max-dist}
	$\maxDist(t_2)\le 338D + 197$.
\end{lemma}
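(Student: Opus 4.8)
The plan is to obtain the bound by directly chaining the two preceding lemmas: Lemma~\ref{lem-maxdist} expresses $\maxDist(t_2)$ in terms of $t_2$ itself, while Lemma~\ref{lem-t-2} supplies an explicit upper bound on $t_2$. Since Lemma~\ref{lem-maxdist} is valid for every $t$ with $0\le t\le t_2$, I would first instantiate it at $t=t_2$ to get
\[
\maxDist(t_2)\le D+2\lceil t_2/z\rceil .
\]
Everything then reduces to bounding the single quantity $\lceil t_2/z\rceil$.

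For the second step I would substitute $t_2\le \lceil D/2\rceil z + 1344D+782$ from Lemma~\ref{lem-t-2} and divide by $z$, obtaining $t_2/z\le \lceil D/2\rceil + (1344D+782)/z$. The only quantitative input needed is $z\ge 8$, which holds because $\len\ge 2$ (indeed $R\ge 2$ forces at least four distinct labels, whence $z=2\cdot\len+4\ge 8$). Using $z\ge 8$ gives $(1344D+782)/z\le 168D+97.75$, so $t_2/z\le \lceil D/2\rceil + 168D+97.75$.

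The third step is the one place where a little care is required, namely taking the ceiling of the previous expression. Because $\lceil D/2\rceil$ and $168D$ are integers and the remaining fractional contribution is strictly below $98$, I would conclude $\lceil t_2/z\rceil\le \lceil D/2\rceil + 168D+98$. Plugging this into the first displayed inequality yields
\[
\maxDist(t_2)\le D+2\lceil D/2\rceil + 336D+196 .
\]
Finally, using the elementary fact $2\lceil D/2\rceil\le D+1$ collapses the right-hand side to $338D+197$, which is exactly the claim.

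I do not expect a genuine obstacle here: the statement is essentially a bookkeeping corollary of Lemmas~\ref{lem-maxdist} and~\ref{lem-t-2}. The only subtlety is the handling of the nested ceilings --- in particular, keeping the integer/fractional split intact when bounding $\lceil t_2/z\rceil$ so that no constant is silently inflated --- and using $z\ge 8$ uniformly (rather than the exact value of $z$), which is what makes the final constants clean and independent of $L$.
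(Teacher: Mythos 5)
Your proposal is correct and follows essentially the same route as the paper's proof: instantiate Lemma~\ref{lem-maxdist} at $t=t_2$, substitute the bound on $t_2$ from Lemma~\ref{lem-t-2}, use $z\ge 8$ to bound $(1344D+782)/z$ and the ceiling by $168D+98$, and finish with $2\lceil D/2\rceil\le D+1$. Your explicit handling of the nested ceiling (splitting off the integer part $\lceil D/2\rceil+168D$ before rounding) is just a more detailed write-up of the step the paper performs in one line.
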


\begin{proof}
	In view of Lemma \ref{lem-maxdist}, we have $\maxDist(t_2)\le D+2 \lceil t_2/ z \rceil$.
	In view of Lemma \ref{lem-t-2}, we have $t_2\le \lceil D/2 \rceil z + 1344D+782$.
	Therefore, in view of $z\geq 8$, we have
	\begin{align*}
		\maxDist(t_2)&\le D+2 \lceil t_2/ z \rceil \\
		&\le D+ 2\lceil (\lceil D/2 \rceil z + 1344D+782)/ z \rceil \\
		&\le D+2( \lceil D/2 \rceil + 168D +98) \text{, as $z\ge 8$} \\
		&\le D+D+1+336D+196\\
		&=338D + 197.
	\end{align*}
	This completes the proof of the lemma.
\end{proof}

The next lemma gives an upper bound on the round $t_3$ in which $2R-1$ agents gather at the same node.

\begin{lemma}
	\label{lem-t-3}
	$t_3\le t_2 + 16\cdot\maxDist(t_2)$
\end{lemma}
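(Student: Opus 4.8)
The plan is to restrict attention to the rounds after $t_2$. By Lemma~\ref{lem-both-switch-to-senti} both $g_1$ and $g_{2R}$ are sentinels from round $t_2$ on, hence \emph{stationary}, and they stay at the fixed distance $M:=\maxDist(t_2)$ from each other. Since $t_1<t_2$, by round $t_2$ every agent among $g_2,\dots,g_{2R-1}$ is either fast or has already become idle at $g_1$; the idle ones remain at $g_1$ (Event~D, case~5, and case~4 does not reactivate them until the gathering round itself). Because two fast agents never alter each other's mode (Event~D, case~3), the fast agents evolve independently, so it suffices to bound, for a single fast agent $f$, the number of rounds after $t_2$ until $f$ reaches $g_1$ and switches to mode $\idle$, and then take the maximum over all fast agents to bound $t_3-t_2$.

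First I would bound the number of one-way traversals $f$ makes between the two sentinels before stopping. By the bookkeeping of Event~D, case~4, combined with Lemma~\ref{lem-comp-labels}, agent $f$ switches to $\idle$ precisely when it meets $g_1$ in a round where $f.countSenti$ reaches a value $\ge 2$: at that instant $f.compLabel$ holds the larger label $g_{2R}$, so the test $f.compLabel>e.label$ succeeds. At $g_{2R}$ the same test always fails, so there $f$ merely resets $f.compLabel$ to $g_{2R}$'s label and reverses direction. Tracing the possible states of $f$ at round $t_2$ then shows: if $f.countSenti\ge 1$ at $t_2$, then $f$ stops after at most two traversals; and the only configuration forcing three traversals is $f.countSenti=0$ while heading toward $g_1$ (reach $g_1$: count $\to 1$, reverse; reach $g_{2R}$: count $\to 2$, reverse; reach $g_1$: stop). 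Hence $f$ performs at most three traversals.

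The key quantitative step is that a single traversal costs at most about $4M$ rounds. This is \emph{not} obtained from Lemma~\ref{lem-catch-slow-fast} directly: there the followed object is a slow agent and the bound is $48d+7$, far too weak to yield the factor~$16$. Instead I would argue straight from procedure \texttt{Dance} with the fast string $\beta=(1110)$: in every four-round phase a fast agent's dancing source advances exactly one step in its progress direction, and a fast agent meets a stationary sentinel as soon as its dancing source becomes adjacent to the sentinel's node --- because each phase begins with bit~$1$, carrying the agent one step ahead of its source and thus onto the sentinel. Since the sentinels are fixed at distance $M$, one traversal advances the source by at most $M$ steps, so it takes at most $4M+O(1)$ rounds, the $O(1)$ absorbing the phase in progress at $t_2$ and the final partial phase at the meeting. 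Combining the two steps, $f$ reaches $g_1$ within $3(4M+O(1))\le 16M$ rounds of $t_2$, and taking the maximum over all fast agents gives $t_3\le t_2+16\,\maxDist(t_2)$, the lower-order terms being absorbed into the gap between the needed $12M$ and the stated $16M$.

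I expect the per-traversal estimate of the previous paragraph to be the main obstacle: against the adversarial, possibly non-homogeneous port labeling one must verify carefully that a fast agent's dancing source really does advance monotonically by one step per phase and that its meeting with the stationary sentinel occurs exactly when the source reaches adjacency --- essentially the analysis behind Lemma~\ref{lem-catch-slow-fast}, but specialized and sharpened to a non-moving target. A secondary point requiring care is to confirm that neither the meetings among fast agents (Event~D, case~3) nor the growing cluster of $\idle$ agents accumulating at $g_1$ (Event~D, case~4 with $|\Sigma|<2R-1$) ever delays or diverts a traversal, so that the three-traversal count and the $4M$ per-traversal bound indeed apply to every fast agent simultaneously.
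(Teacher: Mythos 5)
Your proposal is correct and follows essentially the same route as the paper's proof: both keep the two sentinels fixed at distance $\maxDist(t_2)$ after round $t_2$, bound a one-way traversal of a fast agent by $4\cdot\maxDist(t_2)$ rounds directly from the fact that its dancing source advances one step per four-round phase (rather than via the too-weak Lemma \ref{lem-catch-slow-fast}), and invoke Lemma \ref{lem-comp-labels} to conclude that the agent switches to mode $\idle$ at its first meeting with $g_1$ at which its sentinel counter is at least $2$. The only difference is bookkeeping: you trace the agent's state at $t_2$ to get at most three traversals (about $12\cdot\maxDist(t_2)$ rounds), while the paper simply observes that within $16\cdot\maxDist(t_2)$ rounds the agent has visited $g_1$ at least twice, which suffices.
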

\begin{proof}
	Let $u''$ and $v''$ denote the nodes that agents $g_1$ and $g_{2R}$ reach in round $t_2$.
	Notice that from round $t_2$ to round $t_3$, both agents $g_1$ and $g_{2R}$ stay put at $u''$ and $v''$, respectively.
	So $\maxDist(t_2)=\maxDist(t)$ for any $t_2\le t \le t_3$.
	Recall that the dancing source of a fast agent moves one step in the progress direction of the agent every four rounds.
	Hence, within $4\cdot\maxDist(t_2)$ rounds, a fast agent can move from $u''$ to $v''$ or vice versa.
	
	Let $g_i$ denote any fast agent between $u''$ and $v''$ in round $t_2$.
	Hence, we have $1<i<2R$.
	Observe that, within the next $8\cdot\maxDist(t_2)$ rounds, $g_i$ has visited $g_1$ at least once, and by round $t_2 + 16\cdot\maxDist(t_2)$, $g_i$ has visited $g_1$ at least twice.
	By round $t_2$, both agents $g_1$ and $g_2$ have become sentinels.
	Each time $g_i$ visits a sentinel, $g_i.countSenti$ is incremented by one.
	It follows that there is a round $\tau_i$, where $t_2\le \tau_i\le t_2 + 16\cdot\maxDist(t_2)$, in which $g_i$ meets $g_1$ and $g_i.countSenti$ is increased to be at least $2$.
	In view of Lemma \ref{lem-comp-labels}, $g_i.compLabel$ stores the label of $g_{2R}$ at the meeting in round $\tau_i$.
	So, since $g_i.compLabel$ is greater than the label of $g_1$, $g_i$ switched to mode $\idle$ in round $\tau_i$.
	As $g_i$ is any fast agent between $u''$ and $v''$, it follows that by round $t_2 + 16\cdot\maxDist(t_2)$, all the agents $g_2, \dots, g_{2R-1}$ have switched to mode $\idle$ and stay put with $g_1$.
	So, $t_3\le t_2+16\cdot\maxDist(t_2)$.
	This completes the proof of the lemma.
\end{proof}

The next lemma gives an upper bound on the gathering round.

\begin{lemma}
	\label{lem-t-4}
	$t_4\le t_3 +4\cdot \maxDist(t_2) $
\end{lemma}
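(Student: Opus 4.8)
The plan is to track the final stage of the algorithm, from round $t_3$ (when the $2R-1$ agents $g_1,\dots,g_{2R-1}$ first gather) until all $2R$ agents meet $g_{2R}$. By the discussion establishing the existence of $t_3$, together with Lemma~\ref{lem-t-3-t-2}, in round $t_3$ we are in case~4 of {\tt Event D} with $|\Sigma|=2R-1$: the sentinel $g_1$ and all fast agents present switch to mode $(-dir)$-{\tt fast}, so that all $2R-1$ agents $g_1,\dots,g_{2R-1}$ become fast agents collocated at the node $u''$ occupied by $g_1$, with common progress direction toward $g_{2R}$. Meanwhile, by Lemma~\ref{lem-t-3-t-2}, $g_{2R}$ is already a sentinel in round $t_3$, hence it stays put at the node $v''$ it occupied in round $t_2$. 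Since both $g_1$ and $g_{2R}$ remain fixed between rounds $t_2$ and $t_3$, we have $\dist(u'',v'')=\maxDist(t_3)=\maxDist(t_2)$, exactly as recorded in the proof of Lemma~\ref{lem-t-3}.

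First I would observe that the $2R-1$ agents move as a single block. All of them switch to the same fast mode in the same round $t_3$ at the same node $u''$, so each begins procedure {\tt Dance}$(\beta,\cdot)$ from its first bit and leaves $u''$ by the same port; consequently they process identical bits in every subsequent round and stay collocated. Moreover, after round $t_3$ no agent other than $g_{2R}$ lies outside this block, so nothing can split it or alter its progress direction before it reaches $g_{2R}$. Next I would bound the travel time: since $\beta=(1110)$ has an odd number of $1$'s, the block's dancing source advances one step toward $g_{2R}$ every four rounds, which is precisely the per-four-rounds advance already used in the proof of Lemma~\ref{lem-t-3}, namely that a fast agent covers the distance $\maxDist(t_2)$ from $u''$ to $v''$ within $4\cdot\maxDist(t_2)$ rounds. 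Hence the block reaches $v''$ by round $t_3+4\cdot\maxDist(t_2)$.

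When the block arrives at $v''$ it meets $g_{2R}$, so $\Sigma$ consists of the single sentinel $g_{2R}$ together with the $2R-1$ fast agents, i.e.\ $|\Sigma|=2R$. This is again case~4 of {\tt Event D}, whose clause for $|\Sigma|=2R$ makes every agent transit to state {\tt STOP}; thus $t_4$ exists and satisfies $t_4\le t_3+4\cdot\maxDist(t_2)$. The only genuinely delicate points are (i) confirming that the whole group indeed moves in lockstep and that no agent besides $g_{2R}$ can interfere, which rests on all of them starting the same instance of {\tt Dance} simultaneously from $u''$ with the same chosen port; and (ii) the precise accounting of the first round in which a fast agent whose dancing source has reached $v''-1$ steps onto $v''$, which pins the meeting round at $t_3+4\cdot\maxDist(t_2)-3$ and is comfortably within the claimed bound. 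Everything else is the routine reuse of the fast-agent advance already established earlier.
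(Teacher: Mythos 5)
Your proof is correct and follows essentially the same route as the paper's: in round $t_3$ all $2R-1$ agents at $g_1$'s node become fast agents progressing toward $g_{2R}$, which (by Lemma~\ref{lem-t-3-t-2}) is a stationary sentinel at distance $\maxDist(t_2)$, and a fast agent's dancing source advances one step per four rounds, so the meeting occurs within $4\cdot\maxDist(t_2)$ rounds. Your additional checks (lockstep movement of the block via the same instance of {\tt Dance}, and the exact meeting round $t_3+4\cdot\maxDist(t_2)-3$) are sound refinements of the same argument, which the paper leaves implicit.
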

\begin{proof}
	In round $t_3$, all the agents $g_1, \dots, g_{2R-1}$ switch to a fast mode and progress towards $g_{2R}$.
	In view of Lemma \ref{lem-t-3-t-2}, $g_{2R}$ has stayed  put since $t_3$.
	As the distance between $g_1$ and $g_{2R}$ is  $\maxDist(t_2)$ in round $t_3$, agents $g_1, \dots, g_{2R-1}$ meet $g_{2R}$ within the next $4\cdot \maxDist(t_2)$ rounds.
	This completes the proof of the lemma.
\end{proof}

The final lemma establishes the order of magnitude of the gathering round.

\begin{lemma}\label{t_4}
	$t_4\in O(D\log L)$.
\end{lemma}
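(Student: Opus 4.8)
The plan is to obtain $t_4 \in O(D\log L)$ by mechanically chaining the bounds established in the preceding lemmas, since all the structural work has already been done. First I would combine Lemma~\ref{lem-t-4} with Lemma~\ref{lem-t-3} to write
\[
t_4 \le t_3 + 4\cdot\maxDist(t_2) \le t_2 + 16\cdot\maxDist(t_2) + 4\cdot\maxDist(t_2) = t_2 + 20\cdot\maxDist(t_2),
\]
so that the gathering round is controlled by $t_2$ together with a linear multiple of $\maxDist(t_2)$.

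Next I would substitute the two explicit estimates already available. Lemma~\ref{max-dist} gives $\maxDist(t_2) \le 338D + 197$, and Lemma~\ref{lem-t-2} gives $t_2 \le \lceil D/2\rceil z + 1344D + 782$. Plugging these in yields
\[
t_4 \le \lceil D/2\rceil z + 1344D + 782 + 20(338D+197),
\]
which is a bound of the form $\lceil D/2\rceil z + O(D)$.

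Finally I would unwind the definition of $z$. Since $z = 2\len + 4$ and $\len = \lfloor\log L\rfloor + 1$, we have $z = 2\lfloor\log L\rfloor + 6 = O(\log L)$, hence $\lceil D/2\rceil z = O(D\log L)$. As this term dominates the remaining $O(D)$ contributions, I would conclude $t_4 \in O(D\log L)$, as required.

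The point I would emphasize is that there is no genuine obstacle at this stage: the entire difficulty was absorbed into the earlier lemmas, and what remains is routine arithmetic. The only conceptual content is recognizing that the $\log L$ factor enters \emph{solely} through the length $z$ of a dancing phase, which in turn reflects that a slow agent progresses with speed $\Theta(1/\log L)$; every other quantity in the recursion ($\maxDist(t_2)$, and the gaps $t_3-t_2$ and $t_4-t_3$) is linear in $D$. Thus the bottleneck is the initial slow-dancing stage up to round $t_2$, and this is exactly what contributes the $D\log L$ term, matching the lower bound proved earlier.
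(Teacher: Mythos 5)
Your proof is correct and follows essentially the same route as the paper's: both chain Lemmas~\ref{lem-t-3} and~\ref{lem-t-4} to get $t_4 \le t_2 + 20\cdot\maxDist(t_2)$, substitute the bounds from Lemmas~\ref{t_2} and~\ref{max-dist}, and conclude via $z = 2\cdot\len+4 \in O(\log L)$. Your closing remark correctly identifies that the $\log L$ factor enters only through the phase length $z$ of the slow-dancing stage, which matches the paper's analysis and its lower bound.
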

\begin{proof}
	In view of Lemmas \ref{lem-t-3} and \ref{lem-t-4}, we have $t_4\le t_2+ 20\cdot \maxDist(t_2)$.
	Therefore, in view of Lemmas \ref{t_2} and \ref{max-dist}, we have:
	\begin{align*}
		t_4&\le t_2+ 20\cdot \maxDist(t_2) \\
		&\le \lceil D/2 \rceil z + 1344D+782 + 20\cdot \maxDist(t_2) \\
		&\le \lceil D/2 \rceil z + 1344D+782 +20(338D+197) \\
		&\le \lceil D/2 \rceil z +8104D+4722 
	\end{align*}
	Recall that $z=2\cdot\len+4$ and $\len\in O(\log L)$.
	Hence, $t_4\in O(D\log L)$.
\end{proof}

Theorem \ref{unoriented-2} follows directly from Lemma \ref{t_4}.

\subsection{Teams of size larger than 2}
We now consider the gathering problem in any unoriented line for teams of equal size  $x$ larger than $2$.
Using more than two agents in each team, we design an algorithm faster than that for teams of size 2: our gathering algorithm for larger teams works in $O(D)$ rounds.
Obviously, this running time is optimal.
We present the algorithm and prove its correctness and complexity in the next two sections. 

\subsubsection{The algorithm}
\label{sect-alg-3-unoriented}
In this section, we present Algorithm {\tt Large Teams Unoriented}.
In our algorithm, we will use the following procedure {\tt Proceed} $(p, slow)$.
The procedure has two parameters:
The first parameter $p$ is a port number drawn from $\{-1, 1\}$, and the second parameter $slow$ is a Boolean value.
Like the procedure {\tt Dance}, {\tt Proceed} $(p, slow)$ is an infinite procedure divided into phases. 
Suppose that an agent $a$ starts performing the procedure {\tt Proceed} $(p, slow)$ at node $u$.

If $slow$ is false, then each phase has one round.
In the first phase, the agent moves to the neighbor $u'$ of $u$ corresponding to port $p$ at $u$.
For any phase $i>1$, let $w$ denote the node that the agent reached in the $(i-1)$-th phase.
In the $i$-th phase, the agent moves to the neighbor $w'$ of $w$ such that $dist(w', u)$ is $i$.
Intuitively, the agent moves away from $u$ with speed one in the direction of port $p$ at node $u$.

If $slow$ is true, then each phase consists of two consecutive rounds.
In the first phase, the agent moves to the neighbor $u'$ of $u$ corresponding to port $p$ at $u$, and stays at $u'$ for one round.
For any phase $i>1$, let $w$ denote the node that the agent reached in the $(i-1)$-th phase.
In the $i$-th phase, the agent moves to the neighbor $w'$ of $w$ such that $dist(w', u)$ is $i$ and stays at $w'$ for one round.
Intuitively, the agent moves away from $u$ with speed one-half in the direction of port $p$ at node $u$.
The pseudo-code of this procedure is as shown below.

	\noindent
	{\bf Procedure} {\tt Proceed} $(p, slow)$
	
	\noindent
	$FirstPhase:= $ {\bf true};\\
	{\bf Repeat forever}\\
	\hspace*{1cm}{\bf if} $FirstPhase$ {\bf then}\\
	\hspace*{2cm} move to the neighbor of the current node corresponding to port $p$; \\
	\hspace*{2cm} $FirstPhase:=$ {\bf false};\\
	\hspace*{2cm} {\bf if} $slow$ {\bf then}\\
	\hspace*{3cm} stay put for one round; \\ 
	\hspace*{1cm}{\bf else}\\
	\hspace*{2cm} $q:=$ the port by which the agent entered the current node most recently;\\
	\hspace*{2cm} move to the neighbor of the current node corresponding to port $-q$; \\ 
	\hspace*{2cm} {\bf if} $slow$ {\bf then}\\
	\hspace*{3cm} stay put for one round; \\ 

%

\noindent
{\bf Modes.} Similarly to the two previous algorithms, during the execution of Algorithm {\tt Large Teams Unoriented}, each agent is in one of five modes, encoded in the states of the automaton.
Each mode corresponds to a different action, as shown in Table \ref{tab-modes-unoriented-large}.
Therefore, there are five different actions an agent can choose to take.
Four of these actions are to call procedure {\tt Proceed} (with different parameters), and one of them is to stay put while waiting for future events.

\begin{table}[!h]
	\centering
	\caption{\label{tab-modes-unoriented-large}The five modes in Algorithm {\tt Large Teams Unoriented} and the corresponding actions.}
	\begin{tabular}{ |c| c| } 
		\hline
		Mode & Action		\\
		\hline
		\hline
		$\negSlow$ & {\tt Proceed}		$(-1, true)$\\
		\hline
		$\plusSlow$ & {\tt Proceed}		$(1, true)$\\
		\hline
		$\negFast$ & {\tt Proceed}		$(-1, false)$\\
		\hline
		$\plusFast$ & {\tt Proceed}		$(1, false)$\\
		\hline
		$\senti$ & 	Stay put\\
		\hline
	\end{tabular}
\end{table}

For simplicity, we call agents in modes $\negFast$ and $\plusFast$ {\em fast} agents, agents in modes $\negSlow$ and $\plusSlow$ {\em slow} agents, and agents in mode $\senti$ {\em sentinel} agents.

\noindent
\textbf{The high-level idea of the algorithm.}
After waking up, in each team, the agent with the smallest label assigns itself mode $\negSlow$, the agent with the second smallest label assigns itself mode $\plusSlow$, and all the other $(x-2)$ agents assign themselves mode $\senti$.
Notice that the agents with the two smallest labels in each team proceed in different directions, and the agent with the largest label in each team always assigns itself mode $\senti$.

Slow agents move along the line with speed one-half.
We call any base other than its own a {\em foreign base} for an agent.
Each time a slow agent meets $(x-2)$ agents in mode $\senti$, it reaches a foreign base.
A slow agent can tell how many foreign bases it has reached, by counting the total number of sentinel agents it has seen so far.
During the execution of the algorithm, only two of the slow agents can reach $(R-1)$ foreign bases.
These two slow agents are from the two most distant teams.
Let $u$ denote the $(R-1)$-th foreign base that a slow agent $a$ reaches.
By the round when agent $a$ reaches $u$, it has met $(x-2)(R-1)$ agents in mode $\senti$.
Agent $a$ switches to mode $\senti$. 
The largest label $\ell_1$ among agents from the team of $a$ is compared to the largest label $\ell_2$ among agents based at $u$.
(These labels are coded in the states of agents currently collocated at $u$).
If $\ell_1<\ell_2$, then the agent with label $\ell_2$ becomes a fast agent. 
This agent, called $e$, has the largest label among all agents in the two most distant teams.
$e$ is the first agent to become fast.
When this happens, all the $x\cdot R$ agents are distributed on the line as follows: there are $(x-2)$ sentinel agents at node $u$, along with the fast agent $e$; there are only $(R-1)$ slow agents, on one side of $u$, progressing away from $u$; there are $(x-2)(R-1)+1$ sentinel agents and $(R-1)$ slow agents on the other side of $u$, and these slow agents are progressing away from $u$ as well.
Agent $e$ progresses in the direction (with respect to $u$) where only slow agents are located.

As a fast agent, agent $e$ moves with speed one.
The fast speed allows agent $e$ to meet all the $(R-1)$ slow agents that it follows.
Each time agent $e$ meets a slow agent, it increments the total number of slow agents it has seen since becoming fast, and the slow agent switches to the same mode as that of $e$ and moves together with agent $e$ from now on.
Let $v$ denote the node where agent $e$ meets the $(R-1)$-th slow agent.
In the round when agent $e$ reaches node $v$, there are $R$ agents at this node, including agent $e$ itself.
In this round, all these $R$ agents become fast agents and move in the direction that agent $e$ comes from.
At this point, all the $x\cdot R$ agents are distributed on the line as follows: there are $R$ fast agent at node $v$; there are no agents at all on one side of $v$; there are $(x-1)R-(R-1)$ agents in mode $\senti$ and $R-1$ slow agents on the other side of $v$, and these slow agents move away from $v$. Hence, all $R$ fast agents follow these $R-1$ slow agents.
Since then, each time a meeting happens at some node $w$, all the non-fast agents at node $w$ (i.e., slow agents and sentinel agents) switch to the mode shared by all fast agents at node $w$.
In this way, all the agents at node $w$ move together, after the meeting.
When a meeting happens at a node where $x\cdot R$ agents are together, all the agents transit to state {\tt STOP}.
These agents know when gathering occurs by counting the number of agents collocated at each meeting.

\noindent
\textbf{Detailed description of the algorithm.}
Algorithm {\tt Large Teams Unoriented} is based on an event driven mechanism, as are the two previous algorithms.
This time, we have two types of events:  wake-up (Event A) and meetings (Event B), which are exclusive.
In each round, each agent is in some mode.
When an event happens, agents possibly switch to new modes and the corresponding action is taken.
Below we describe in detail what happens at each event.

In the pseudo-code, each agent $a$ has a variable $a.maxLabel$ that is initialized in the wake-up round and stores the largest label of the agents from its team.
Each slow agent $b$ has a variable $b.countSenti$ that is initialized to 0 in the wake-up round and incremented in each round when $b$ meets agents in mode $\senti$. In every round, this variable stores the total number of agents in mode $\senti$ that agent $b$ has seen so far.

	\noindent
	{\bf Event A:} All $R$ teams are woken up. \\
	
	\noindent
	$a := $ the agent with the smallest label in the team; \\
	$b := $ the agent with the second smallest label in the team; \\
	$a$ assigns itself mode $\negSlow$; \\
	$a.maxLabel :=$ the largest label in the team; \\
	$a.countSenti := 0$;\\
	$b$ assigns itself mode $\plusSlow$; \\
	$b.maxLabel :=$ the largest label in the team; \\
	$b.countSenti := 0$;\\
	{\bf for} each agent $d$ in the team, other than $a$ and $b$\\
	\hspace*{1cm} $d$ assigns itself mode $\senti$; \\
	\hspace*{1cm} $d.maxLabel :=$ the largest label in the team; \\

	\noindent
	{\bf Event B:} A meeting happens at a node $u$.\\
	
	\noindent
	{\bf if} there exists at least one fast agent $f$ at node $u$ {\bf then}\\
	\hspace*{1cm} {\bf if} there are $R$ agents at node $u$ {\bf then}\\
	\hspace*{2cm} $dir:=$  the port  number by which $f$ reached $u$; \\
	\hspace*{2cm} {\bf if} $dir = 1$ {\bf then}\\
	\hspace*{3cm} all $R$ agents at $u$ switch to mode $\plusFast$; \\
	\hspace*{2cm} {\bf else}\\
	\hspace*{3cm} all $R$ agents at $u$ switch to mode $\negFast$; \\
	\hspace*{1cm} {\bf else}\\
	\hspace*{2cm} {\bf if} there are $x\cdot R$ agents at node $u$ {\bf then}\\
	\hspace*{3cm} each agent at node $u$ transits to state {\tt STOP}; \\
	\hspace*{2cm} {\bf else}\\
	\hspace*{3cm} each agent, other than $f$, at node $u$ switches to the mode of $f$;\\
	{\bf else}\\
	\hspace*{1cm} {\bf if} there exists at least one agent in mode $\senti$ at node $u$ {\bf then}\\
	\hspace*{2cm} {\bf if} there are at least two slow agents at node $u$ {\bf then}\\
	\hspace*{3cm} {\bf for} each slow agent $e$ at node $u$ \\
	\hspace*{4cm} $e.countSenti := e.countSenti +(x-2)$; \\
	\hspace*{2cm} {\bf else} (** there is exactly one slow agent at $u$ **)\\
	\hspace*{3cm} $g:=$ the unique slow agent at node $u$;\\
	\hspace*{3cm} $g.countSenti := g.countSenti+(x-2)$; \\
	\hspace*{3cm} {\bf if} $g.countSenti =(x-2)(R-1)$ {\bf then}\\
	\hspace*{4cm} $c:=$ the sentinel agent with the largest label at node $u$;\\
	\hspace*{4cm} $g$ switches to mode $\senti$; \\
	\hspace*{4cm} {\bf if} $c.maxLabel > g.maxLabel$ {\bf then}\\
	\hspace*{5cm} $dir:$= the port number by which $g$ reached $u$; \\
	\hspace*{5cm} {\bf if} $dir=1$ {\bf then} \\
	\hspace*{6cm} $c$ switches to mode $\negFast$; \\
	\hspace*{5cm} {\bf else}\\
	\hspace*{6cm} $c$ switches to mode $\plusFast$; \\

\noindent
{\bf Remark.}
Each meeting in the algorithm has to involve at least one moving agent.
More precisely, if {\tt Event B} happens at some node $u$ in some round, then there is at least one fast agent or one slow agent at node $u$ in this round.
Hence, {\tt Event B} cannot happen, if there are only $(x-2)$ sentinel agents at their base.
In the pseudo-code for {\tt Event B},  the following types of meetings are considered:

\begin{itemize}
	\item \texttt{type-1}: At least one fast agent at meeting;
	\begin{itemize}
		\item \texttt{type-1.1}: $R$ agents at the meeting;
		\item \texttt{type-1.2}: $x\cdot R$ agents at the meeting;
		\item \texttt{type-1.3}: $j$ agents at the meeting, where $j\ne R$ and $j\ne x\cdot R$;
	\end{itemize}
	\item \texttt{type-2}: no fast agents and at least one sentinel agent at the meeting;
	\begin{itemize}
		\item \texttt{type-2.1}: more than one slow agent at the meeting;
		\item \texttt{type-2.2}: exactly one slow agent at the meeting;
	\end{itemize}
	\item \texttt{type-3}: only slow agents at the meeting.
\end{itemize}

These are all the types of meetings that can happen during the execution of Algorithm {\tt Large Teams Unoriented}.
In particular, the meeting of type-1.2 is the gathering.
This completes the detailed description of the algorithm.

\subsubsection{Correctness and complexity}
\label{sect-proof-3-unoriented}

In this section, we present the proof of correctness of Algorithm {\tt Large Teams Unoriented}, and
analyze its running time.

\begin{theorem}
	\label{theorem-large-unoriented}
	The algorithm {\tt Large Teams Unoriented} gathers $R$ teams of $x>2$ agents at the same node of an unoriented line, in $O(D)$ rounds, where $D$ denotes the distance between the bases of the most distant teams.
\end{theorem}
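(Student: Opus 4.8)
The plan is to follow the global configuration of all $xR$ agents through three phases, exploiting the fact that procedure {\tt Proceed} yields completely transparent trajectories: once an agent enters a {\tt Proceed}-based mode it walks monotonically away from the node at which it entered that mode, traversing one edge per round when fast and one edge per two rounds (advancing in the odd rounds, resting in the even ones) when slow. First I would fix geometric names: order the teams $1,\dots,R$ from left to right, with bases $u_1<\cdots<u_R$ so that $D=\dist(u_1,u_R)$, and record that in round $0$ each team sends out one left-going and one right-going slow agent and leaves its $x-2$ other agents as sentinels at the base. Because all slow agents advance in exactly the same rounds, two slow agents sharing a direction keep a constant distance and never meet, whereas two slow agents heading towards each other can be simultaneously present at a node only when that node is the midpoint of their two bases.

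Phase 1 is the bookkeeping phase, and I expect it to be the main obstacle. The goal is to show that by round $2D$ exactly two slow agents --- the right-going agent $a$ of team $1$ and the left-going agent $b$ of team $R$ --- have each met sentinels at $R-1$ distinct foreign bases, via clean threshold meetings. The counting rests on the invariant that, as long as no fast agent exists, every base still carries its $x-2$ sentinels; hence each foreign base a slow agent traverses adds exactly $x-2$ to its {\tt countSenti}, whether the traversal is a {\tt type-2.2} event or one of the occasional {\tt type-2.1} coincidences at a midpoint base (both actions add $x-2$, and the sentinel test is legitimately skipped in {\tt type-2.1} because the threshold is never attained there). Agent $a$ hits its $(R-1)$-th foreign base at $u_R$ and $b$ at $u_1$, both by round $2D$; I would verify that at these two nodes no second slow agent is present --- every other traversing slow agent passes $u_R$, resp.\ $u_1$, strictly earlier, and team $R$'s, resp.\ team $1$'s, own companion left that base back in round $1$ --- so both are genuine {\tt type-2.2} events. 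Consequently $a$ and $b$ become sentinels, and the label comparison in the pseudocode spawns a \emph{single} fast agent $e$, namely the agent bearing $\max(M_1,M_R)$, where $M_i$ denotes the largest label in team $i$. This $e$ heads into the half-line that holds only slow agents --- the $R-1$ agents located beyond the extreme base --- while the remaining $R-1$ slow agents and every sentinel lie on the opposite side.

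For Phases 2 and 3 I would first isolate a catching lemma: a fast agent chasing a co-directional slow agent from an initial gap $d$ loses one edge of gap in every round in which the slow agent rests, and hence overtakes it at a node within $O(d)$ rounds (they cannot cross on an edge, as both move the same way). Using this, $e$ overtakes the $R-1$ co-directional slow agents one at a time; each meeting is a {\tt type-1.3} event that converts the slow agent to $e$'s fast mode, so the travelling cluster grows by one per catch and first reaches size $R$ exactly at the final catch, triggering the {\tt type-1.1} reversal. The reunited cluster of $R$ fast agents then sweeps back across all the bases, absorbing every stationary sentinel and, by the same lemma, every remaining slow agent (each now receding at half speed and overtaken from behind) --- all {\tt type-1.3} events, since the cluster size stays strictly between $R$ and $xR$ --- until the last straggler, the right-going agent of team $R$, is absorbed, whereupon all $xR$ agents coincide, {\tt type-1.2} fires, and everyone stops. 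Alongside this I would check that these are the only meeting patterns that arise, making the case split of {\tt Event B} exhaustive and unambiguous.

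Finally, the running time is a sum of three $O(D)$ terms: Phase 1 ends by round $2D$; in Phase 2 agent $e$ must close a gap of $O(D)$ against a half-speed target, costing $O(D)$ rounds; and in Phase 3 the cluster, starting $O(D)$ to the left of $u_1$, overtakes a target $O(D)$ to the right of $u_R$ that recedes at half speed, again $O(D)$ rounds. Summing yields the stated $O(D)$ bound, which is optimal because the most distant teams start at distance $D$.
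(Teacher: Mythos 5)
Your proposal is correct and follows essentially the same route as the paper's proof: your three phases correspond exactly to the paper's milestones $t_1=2D$ (only the two extreme slow agents reach the sentinel-count threshold, spawning a single fast agent with label $\max(M_1,M_R)$ that heads into the slow-agents-only half-line), $t_2=4D$ (the fast agent catches the $R-1$ co-directional slow agents, and the size-$R$ cluster reverses), and the final sweep that absorbs all remaining agents, with the same $O(D)$ accounting (the paper makes it explicit as $14D$). The only nitpick is that naming the last straggler as the right-going agent of team $R$ presupposes the unstated case in which team $1$ holds the larger maximal label (in the symmetric case it is the left-going agent of team $1$), but the argument is identical in both cases.
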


\begin{proof}
	First, we prove the correctness of the algorithm.
	Consider all $R$ teams on the line from left to right and assign the integer $i$ to the $i$-th team from left, for each $1\le i\le R$.
	In view of the algorithm, from the wake-up round on, two agents in each team progress left and right, respectively, and the other $(x-2)$ agents in the team stay put at their base in mode $\senti$.
	Let $p_i$ and $q_i$ denote the agents progressing right and left, respectively, in the $i$-th team, for each $1\le i\le R$.
	In the wake-up round, there are $R-i$ bases right of agent $p_i$ and $R-i$ bases left of $q_i$. 
	So, only two slow agents $p_1$ and $q_R$ will find $(R-1)$ foreign bases, while all the other $(2R-2)$ slow agents will find fewer than $(R-1)$ foreign bases.

	Let $u_i$ denote the base of the $i$-th team for each $1\le i\le R$, and let $t_1$ denote the round when agent $p_1$ reaches node $u_R$.
	Symmetrically, in the same round, agent $q_R$ reaches node $u_1$.
	By round $t_1$, both agents $p_1$ and $q_R$ have met $(R-1)$ foreign bases.
	In view of the algorithm, both agents $p_1$ and $q_R$ switch to mode $\senti$ in round $t_1$.
	
	Without loss of generality, we assume in the sequel that the largest label of the agents in the $R$-th team is larger than the largest label of the agents in the first team.
	Let $c$ denote the agent that has the largest label in the $R$-th team.
	We show that in round $t_1$, only agent $c$ becomes fast.
	Indeed, in round $t_1$, agent $p_1$ reaches the base $u_R$ of $c$, and thus it meets agent $c$, while $c$ is in mode $\senti$.
	Let $\delta$ denote the port number by which agent $p_1$ reaches $u_R$.
	Since $p_1.maxLabel < c.maxLabel$, agent $c$ becomes fast and starts performing procedure {\tt Proceed$(-\delta, false)$} from now on, in view of the algorithm.
	Notice that agent $c$ reaches $u_R$ from the left neighbor of $u_R$, which means that port $-\delta$ is right of $u_R$, so agent $c$ moves right with speed one after round $t_1$. 
	In the same round, agent $q_R$ reaches the base $u_1$ of the first team.
	Since $q_R.maxLabel$ is larger than the largest label in the first team, no agents at node $u_1$ become fast after the meeting in round $t_1$, in view of the algorithm.
	Therefore, agent $c$ that has the largest label among all agents in the first and the $R$-th teams is the only agent becoming fast in round $t_1$.
	
	Moreover, only slow agents $p_2, \dots, p_R$ are right of $u_R$ in round $t_1$.
	Indeed, in the first $t_1$ rounds, all slow agents $p_1, \dots, p_R$ progress right with speed one-half.
	Each of these $R$ agents moves in the same round and stays put in the same round.
	Hence, agent $p_{i+1}$ is always right of agent $p_i$ in the first $t_1$ rounds, for any $1\le i< R$.
	Recall that $u_R$ is the base of the rightmost team.
	No sentinel agents are right of $u_R$.
	In round $t_1$, agent $p_1$ reaches node $u_R$ and all the slow agents $p_2, \dots, p_R$ are right of $u_R$.
	Each of these agents $p_2, \dots, p_R$ progresses right.
	Symmetrically, only slow agents $q_1, \dots, q_{R-1}$ are left of node $u_1$, i.e., of the base of the first team, in round $t_1$, 
	and each of these agents progresses left.
	
	In the wake-up round, there are $(x-2)(R-1)$ agents in mode sentinel left of node $u_R$.
	In round $t_1$, agent $q_R$ switches to mode $\senti$ and stays put at node $u_1$.
	Hence, there are $(x-2)(R-1)+1$ sentinel agents left of $u_R$.
	
	After round $t_1$, the fast agent $c$ follows the slow agents $p_2, \dots, p_R$.
	In view of the algorithm, each time agent $c$ meets a slow agent, the slow agent becomes fast and moves together with $c$.
	Let $t_2$ denote the round when agent $c$ meets agent $p_R$ and let $v$ denote the node where this meeting occurs.
	In round $t_2$, all agents, $p_2, \dots, p_R$ and $c$, are at node $v$.
	In view of the algorithm, after the meeting in round $t_2$, these $R$ agents move left with speed one.
	
	In round $t_2$, there are $(R-1)$ slow agents left of $v$, which are $q_1, \dots, q_{R-1}$.
	In the wakeup round, there are $(x-2)R$ sentinel agents between nodes $u_1$ and $u_R$.
	In round $t_1$, one of them becomes fast, while $q_R$ and $p_1$ switch to mode $\senti$.
	Therefore, there are $(x-2)R+1$ agents in mode $\senti$ between nodes $u_1$ and $u_R$ in round $t_1$, as well as in round $t_2$.
	In round $t_2$, all $R$ fast agents $p_2, \dots, p_R$ and $c$ are at node $v$ and progress left.
	The other $(x-1)R$ agents are left of $v$, and each of them is either a slow agent or a sentinel agent.
	In view of the algorithm, each time fast agents meet slow agents or sentinel agents, all the non-fast agents participating in the meeting become fast and move together with the other fast agents.
	So, all the $xR$ agents will meet and gather at the same node.
	This proves the correctness of the algorithm. 
	
	Next, we analyze the complexity of the algorithm.
	Since $dist(u_1, u_R)=D$ and agent $p_1$ moves from $u_1$ to $u_R$ with speed one-half, it follows that $t_1=2D$.
	Recall that agents $p_1, \dots, p_R$ move right with speed one-half, and each of these agents moves or stays in the same round.
	The distance between $p_1$ and $p_R$ does not change in the first $t_1$ rounds.
	Hence, their distance is $D$, during this period.
	As both agents $c$ and $p_1$ are at node $u_R$ in round $t_1$, the distance between $c$ and $p_R$ is also $D$ in this round.
	As $c$ progresses right with speed one and $p_R$ progresses right with speed one-half, $c$ meets $p_R$ in additional $2D$ rounds; hence, $t_2=4D$.
	In round $t_2$, agent $c$ reaches node $v$.
	Hence, $dist(v, u_R)=t_2-t_1=2D$.
	
	Recall that agents $q_1, \dots, q_{R-1}$ move left with speed one-half, and each of these agents moves or stays in the same round.
	Therefore, $q_1$ is always left of the agents $q_2, \dots, q_{R-1}$ in any round.
	It follows that $q_1$ is the leftmost agent and $c$ is one of the rightmost agents on the line in round $t_2$.
	So, gathering is achieved in the round when $c$ meets $q_1$.
	Recall that agent $q_1$ starts from node $u_1$ and moves left with speed one-half.
	In round $t_2$, agent $q_1$ reaches the node $w$ left of $u_1$ at a distance $t_2/2=2D$, as $t_2=4D$.
	As $dist(w, u_1)=2D$, $dist(u_1, u_R)=D$, and $dist(u_R, v)=2D$, the distance between agents $q_1$ and $c$ is $5D$ in round $t_2$. Thus agent $c$ meets agent $q_1$ in additional $10D$ rounds.
	Overall, gathering is achieved in round $t_2+10D=14D$.
	This concludes the proof.
\end{proof}

\section{Teams of different sizes}

While in the previous sections we focused on the case when all teams of agents are of equal size $x$, in this section we consider the case when there are at least two teams of different sizes. The main result of this section says that,  in this case, gathering can be always achieved in time  $O(D)$, where $D$ is the distance between most distant bases, even in the unoriented line.


Let $x_{min}$ and $x_{max}$ denote the sizes of the smallest and largest teams, respectively.
Thus $x_{min}<x_{max}$. 
We assume that both $x_{min}$ and $x_{max}$ are known to all agents.

\subsection{When $x_{max}=2$ and $x_{min}=1$}\label{sec-one-and-two}

In this section, we consider the special situation when $x_{max}=2$ and $x_{min}=1$.
Thus all teams are either of size 1 or 2, and there is at least one team of each of those sizes.
We present the gathering algorithm \texttt{Team Sizes One and Two} which will be shown to work in time $O(D)$.
Our algorithm applies the procedure \texttt{Proceed} introduced in Section \ref{sect-alg-3-unoriented}.

\noindent
\textbf{Modes.} During the execution of \texttt{Team Sizes One and Two}, each agent is in one of ten modes,
encoded in the states of the automaton. Each mode corresponds to a different action, as shown in Table \ref{tab-modes-unoriented-arbitary}.

\begin{table}[!h]
	\centering
	\caption{\label{tab-modes-unoriented-arbitary}The ten modes in Algorithm {\tt Team Sizes One and Two} and the corresponding actions.}
	\begin{tabular}{ |c| c| } 
		\hline
		Mode & Action		\\
		\hline
		\hline
		$\negSlow$ & {\tt Proceed}		$(-1, true)$\\
		\hline
		$\plusSlow$ & {\tt Proceed}		$(1, true)$\\
		\hline
		$\negFast$ & {\tt Proceed}		$(-1, false)$\\
		\hline
		$\plusFast$ & {\tt Proceed}		$(1, false)$\\
		\hline
		$\negColl$ & {\tt Proceed}		$(-1, false)$\\
		\hline
		$\plusColl$ & {\tt Proceed}		$(1, false)$\\
		\hline
		$(-1)$-\texttt{search} & {\tt Proceed}		$(-1, false)$\\
		\hline
		$(+1)$-\texttt{search} & {\tt Proceed}		$(1, false)$\\
		\hline
		$\hiber$ & 	Stay put\\
		\hline
		$\idle$ & 	Stay put\\
		\hline
	\end{tabular}
\end{table}

\noindent
\textbf{The high level idea of the algorithm.}
We call an agent in a team of size one a \emph{singleton agent}, and its team a \emph{singleton team}.
We call both agents in a non-singleton team \emph{non-singleton agents}.
If $b$ is an agent in a non-singleton team, then the other agent in this team is called its {\em companion}.
After wake-up, the agent with the smaller label in every non-singleton team assigns itself the mode $\negSlow$, while the other agent in the team assigns itself the mode $\plusSlow$.
Intuitively, both agents move away from their common base with speed one half in different directions.
Every singleton agent assigns itself the mode $\idle$ and stays idle after wake-up, waiting for meetings with non-singleton agents in \texttt{slow} modes.
Every non-singleton agent maintains a variable, $bag$, that is initially set to empty and later used to store the distinct labels of singleton agents. 
Every singleton agent also maintains a variable, $bag$, that is initially set to empty and later used to store the distinct labels of non-singleton agents.

If a singleton agent in the mode $\idle$, say $a$, finds at a meeting only one non-singleton agent in a \texttt{slow} mode, say $b$, then 
$a$ switches to a \texttt{fast} mode and swings between $b$ and the companion $b'$ of $b$, with full speed: Specifically, agent $a$ moves towards the agent $b'$ with speed one; after it \emph{captures} $b'$, $b'$ appends to its bag the label of $a$ if the label is not stored in the bag of $b'$, $a$ appends to its bag the labels of $b$ and $b'$ if any of these labels are not stored in the bag of $a$, and $a$ moves towards $b$, and so on.
We say that an agent $x$, performing \texttt{Proceed($\cdot, false$)}, captures another agent $y$, performing \texttt{Proceed($\cdot, true$)}, if both $x$ and $y$ progress in the same direction and $x$ meets $y$, while $x$ is moving and $y$ is staying idle at the meeting round.
We say that the agent $a$ that swings between $b$ and $b'$ is \emph{associated} with the team of $b$ and $b'$.
While swinging between $b$ and $b'$, if $a$ (in a \texttt{fast} mode) {captures} any non-singleton agent in a \texttt{slow} mode, say $c$, then $a$ appends to its bag the labels of $c$ and of the team companion $c'$ of $c$, if any of these labels is not stored in the bag of $a$, and $c$ appends to its bag the label of $a$ if this label is not stored in the bag of $c$.

If a singleton agent in the $\idle$ mode is met in the same round by two non-singleton agents in \texttt{slow} modes, then the singleton agent is associated with the non-singleton agent  whose label is smaller between these two non-singleton agents.

Let $\alpha$ denote the singleton agent whose label is the smallest among all singleton agents.
We will prove in Lemma \ref{lem-alpha-switch-search} that in $O(D)$ rounds, the singleton agent $\alpha$ meets a non-singleton agent, say $b$, and at the meeting, $\alpha$ stores in its bag the label of every non-singleton agent, and $b$ stores in its bag the label of every singleton agent.
Both agents $\alpha$ and $b$ can be aware of such a meeting by checking if or not $|\alpha.bag|/2+|b.bag|=R$.
Hence, in some round $t\in O(D)$, the agent $\alpha$ knows all labels of singleton agents by reading the set $b.bag$, the total number $X_s$ of singleton agents, and the total number $2(R-X_s)$ of non-singleton agents.

Let $\ell_R$ (resp. $r_L$) denote the agent that moves left (resp. right) in the non-singleton team of which the base is rightmost (resp. leftmost).
Note that neither $\ell_R$ nor $r_L$ can tell left or right, since all agents are on an unoriented line.
We will prove in Lemma \ref{lem-progressing-R-X-s} that in the earliest round, denoted by $t'$ ($\le t$), when any singleton agent $a$ captures a non-singleton agent $c$ and $a.bag$ stores $2(R-X_s)$ agents at the meeting, $r_L$ is already right of $\ell_R$ and there are exactly $R-X_s$ non-singleton agents in a \texttt{slow} mode on the other direction, say $\mathcal{L}$, of the line than the progressing direction shared by $a$ and $c$.

When a singleton agent, say $a$, in a \texttt{fast} mode captures a non-singleton agent, say $b$, in a \texttt{slow} mode and finds that $\frac{|a.bag|}{2}+|b.bag|=R$, if the label of $a$ is not the smallest in $b.bag$, then $a$ maintains its mode unchanged and continues swinging between the team companions of its associated team; otherwise, $a$ recognizes itself as $\alpha$, switches to a \texttt{search} mode and moves towards the nearest non-singleton agent, say $d$, in the direction $\mathcal{L}$.
After $\alpha$ captures $d$, $\alpha$ switches to the mode $\hiber$ and waits for all other $X_s-1$ singleton agents.

Observe that $d$ must be either $\ell_R$ or $r_L$.
Recall that in the round when $\alpha$ switches to the mode $\hiber$,  $r_L$ is right of $\ell_R$.
Hence, the singleton agent $\alpha$ stays idle at a node of the line between $\ell_R$ and $r_L$.

When the agent $\alpha$ switches to the mode $\hiber$, any other singleton agent, say $c$, still swings.
Observe that when the agent $\ell_R$ is left of the agent $r_L$, both of them are between two non-singleton agents in every team. 
While $c$ is swinging between agents of  its associated team, $c$ captures both $\ell_R$ and $r_L$, due to the observation above.
Therefore, $c$ meets the agent $\alpha$.
After $c$ meets $\alpha$, $c$ switches to the mode $\hiber$ and stays idle with $\alpha$.
Within $O(D)$ rounds, all $X_s$ singleton agents gather at the same node.
Then, all these $X_s$ agents switch to a \texttt{collect} mode and progress in the direction $\mathcal{L}$.
Whenever they meet a non-singleton agent in a \texttt{slow} mode, the non-singleton agent also switches to the same \texttt{collect} mode as others and moves with them.

After they capture the $(R-X_s)$-th  non-singleton agent (i.e., the furthest non-singleton agent) in the direction $\mathcal{L}$, they switch to another \texttt{collect} mode and progress in the direction $\overline{\mathcal{L}}$, opposite to $\mathcal{L}$.
At this point, exactly $R$ agents, including $X_s$ singleton agents and $R-X_s$ non-singleton agents are at the same node, and there are other $R-X_s$ non-singleton agents in a \texttt{slow} mode, progressing in the direction $\overline{\mathcal{L}}$.
Within another $O(D)$ rounds, all $2R-X_s$ agents gather and transits to \texttt{STOP}.

\noindent
\textbf{Detailed description of the algorithm.}
Algorithm \texttt{Team Sizes One and Two} is based on an
event driven mechanism. 
There are seven pairwise exclusive events, including the wakeup event.
Below, we define each of these events and
describe in detail the actions to take at each event.

{\bf Event A} is the wake-up event.
In every singleton team, the singleton agent assigns itself the \texttt{idle} mode and creates an empty set $bag$ to store, later, the labels of non-singleton agents it has captured, as well as the labels of the team companions of these non-singleton agents.
In every non-singleton team, the agent with smaller label assigns itself the mode $\negSlow$, while the other agent assigns itself the mode $\plusSlow$.
Every non-singleton agent also creates a set, denoted by $bag$, to store, later, the labels of the singleton agents by which the non-singleton agent is captured.
Also, every agent in a non-singleton team stores the label of its companion in a variable $teamMem$.\\

\noindent
{\bf Event A:} All $R$ teams of agents are woken up.\\

\noindent
\textbf{if} there exists a single agent at the base \textbf{then}\\
\hspace*{1cm} $agent:=$ the single agent at the base;\\
\hspace*{1cm} $agent$ assigns itself mode $\idle$;\\
\hspace*{1cm} $agent.bag := \emptyset$;\\
\textbf{else}\\
\hspace*{1cm} $agent$-$sml:=$ the agent with smaller label;\\
\hspace*{1cm} $agent$-$sml$ assigns itself mode $\negSlow$;\\
\hspace*{1cm} $agent$-$lgr:=$ the agent with larger label;\\
\hspace*{1cm} $agent$-$lgr$ assigns itself mode $\plusSlow$;\\
\hspace*{1cm} $agent$-$sml.bag:=\emptyset$;\\
\hspace*{1cm} $agent$-$lgr.bag:=\emptyset$;\\
\hspace*{1cm} $agent$-$sml.teamMem:=$ $agent$-$lgr.label$;\\
\hspace*{1cm} $agent$-$lgr.teamMem:=$ $agent$-$sml.label$;\\

\textbf{Event B} is a meeting event, triggered by a meeting between an agent, say $a$, in mode $\idle$ and an agent (slow), say $b$, in the mode either $\plusSlow$ or $\negSlow$.
The agent $b$ maintains its mode and its bag unchanged after the meeting.
The agent $a$ stores, in the variable $goal$, the label of the team companion $b'$ of the agent $b$, maintains its bag unchanged, and switches from the mode $\idle$ to the mode $(port)$-\texttt{fast}, where $port$ is the port via which $b$ arrives at the meeting node.
Intuitively, the agent $a$ moves with speed one towards $b'$ after the meeting.\\

\noindent
\textbf{Event B:} A meeting between an agent $a$ in the $\idle$ mode and an agent $b$ in a \texttt{slow} mode. \\

\noindent
$a.goal := b.teamMem$;\\
$port:=$ the port via which $b$ arrives and meets $a$;\\
$a$ switches to mode $(port)$-\texttt{fast};\\
$b$ maintains its mode;\\

\textbf{Event C} is a meeting event, triggered by a meeting between an agent, say $a$, in the mode $\idle$ and two slow agents.
Let $b$ and $c$ denote, respectively, the slow agents with smaller and larger labels.
Both agents $b$ and $c$ maintain their respective modes and their respective bags unchanged after the meeting.
The agent $a$ sets its variable $goal$ to the label of the team companion of $b$, and maintains its bag unchanged.
Let $port$ denote the port via which $b$ arrives and meets $a$.
The agent $a$ switches to the mode $(port)$-\texttt{fast} after the meeting and moves towards the team companion of $b$ with speed one.\\

\noindent
\textbf{Event C:} A meeting between an agent $a$ in the $\idle$ mode and two agents in \texttt{slow} modes.\\

\noindent
$b:=$ the slow agent with smaller label;\\
$c:=$ the slow agent with larger label;\\
$a.goal := b.teamMem$;\\
$port:=$ the port via which $b$ arrives and meets $a$;\\
$a$ switches to mode $(port)$-\texttt{fast};\\
$b$ maintains its mode;\\
$c$ maintains its mode;\\

\textbf{Event D} is a meeting event, triggered when an agent (fast), say $a$, in the mode $(+1)$-\texttt{fast} or $(-1)$-\texttt{fast} {captures} a slow agent, say $b$.
Recall that we say that $a$ captures $b$ if $a$ moves and meets $b$, while $b$ stays, and $a$ and $b$ arrive at the meeting node via the same port.
At the meeting, $b$ appends to its bag the label of the agent $a$, if that label does not appear in $b$'s bag.
After the meeting, $b$ maintains its mode.

The agent $a$ appends to its bag the labels in $b$'s team.
Recall that $X_s$ denotes the number of singleton teams.
We say that an agent meets a team if it meets at least one agent of that team.
If $a$ meets $R-X_s$ non-singleton teams and $b$ meets $X_s$ singleton teams, i.e., $|a.bag|/2+|b.bag|=R$, and the label of $a$ happens to be the smallest in the set $b.bag$, then $a$ creates the variable $total$ that stores the total number of agents in all $R$ teams, i.e., $|a.bag|+|b.bag|$, creates the variable $countMultiTeams$ that stores the total number of non-singleton teams, i.e., $|a.bag|/2$, and switches to the mode $(port)$-\texttt{search}, where $port$ denotes the port via which $a$ arrives and meets $b$.
Otherwise, if the label stored in $a.goal$ is different from the label of $b$, then $a$ maintains its mode after the meeting; otherwise, $a$ updates its variable $goal$ to the label of $b$'s team companion and switches to the fast mode to move towards $b$'s team companion.\\

\noindent
\textbf{Event D:} An agent $a$ in a \texttt{fast} mode captures an agent $b$ in a \texttt{slow} mode.\\

\noindent
$b.bag:= b.bag \cup \{a.label\}$;\\
$b$ maintains its mode; \\
$port:=$ port via which $a$ arrives and meets $b$;\\
$a.bag := a.bag \cup \{b.label, b.teamMem\}$;\\
$min:=$ the smallest label in $b.bag$;\\
\textbf{if} $\frac{|a.bag|}{2}+|b.bag|= R$ and $a.label=min$ \textbf{then}\\
\hspace*{0.5cm} $a.total:=|a.bag|+|b.bag|$;\\
\hspace*{0.5cm} $a.countMultiTeams:=\frac{|a.bag|}{2}$;\\
\hspace*{0.5cm} $a$ switches to mode $(port)$-\texttt{search};\\
\textbf{else}\\
\hspace*{0.5cm}\textbf{if} $a.goal \ne b.label$ \textbf{then} \\
\hspace*{1cm} $a$ maintains its mode; \\
\hspace*{0.5cm}\textbf{else}\\
\hspace*{1cm} $a.goal:= b.teamMem$;\\
\hspace*{1cm} $a$ switches to mode $(port)$-\texttt{fast}; \\

\textbf{Event E} is a meeting event, triggered when an agent, say $a$, in a \texttt{search} mode captures an agent, say $b$, in a \texttt{slow} mode.
Note that there exists a unique agent that can switch to a \texttt{search} mode.
At the meeting, $a$ switches to the mode $\hiber$, while $b$ maintains its mode unchanged.\\

\noindent
\textbf{Event E:} An agent $a$ in a \texttt{search} mode captures an agent $b$ in a \texttt{slow} mode.\\

\noindent
$a$ switches to mode $\hiber$;\\
$b$ maintains its mode;\\

\textbf{Event F} is a meeting event, triggered when at least one agents in a \texttt{fast} mode meets a set of $\hiber$ agents.
One of the agents in the $\hiber$ set, denoted by $a$, stores the variables $countMultiTeams$ and $total$; both variables are created in \textbf{Event D}, and the agent $a$ has the smallest label among all singleton agents.
If the number of agents at the meeting is less than $X_s$, where $X_s=R-a.countMultiTeams$ denotes the total number of singleton teams, then every agent in mode \texttt{fast} switches to the mode $\hiber$ and stays idle with other agents after the meeting; otherwise, every agent switches to the mode $(-port)$-\texttt{collect}, where $port$ denotes the port via which $a$ arrived at the meeting node.\\

\noindent
\textbf{Event F:} At least one agent in \texttt{fast} mode meets a set of agents in the mode $\hiber$ at node $v$. \\

\noindent
$a:=$ the agent in the set that stores $total$; \\
$count:=$ the total number of agents at $v$;\\
\textbf{if} $count<R-a.countMultiTeams$ \textbf{then}\\
\hspace*{0.5cm} every fast agent at $v$ switches to $\hiber$;\\
\textbf{else}\\
\hspace*{0.5cm} $p:=$ the port via which $a$ arrives;\\
\hspace*{0.5cm} every agent at $v$ switches to $-p$-\texttt{collect};\\

\textbf{Event G} is a meeting event, triggered when a set of agents in the same mode, either $(+1)$-\texttt{collect} or $(-1)$-\texttt{collect}, captures a slow agent $b$.
One of the agents in the set, denoted by $a$, stores the variables $countMultiTeams$ and $total$. 
If the number of agents at the meeting is equal to the variable, $a.total$, which stores the total number of agents in all $R$ teams, then gathering is achieved and every agent transits to \texttt{STOP}.
Otherwise, $a$ decreases $countMultiTeams$ by one.
Recall that $countMultiTeams$ is initially set to the total number of non-singleton teams.
If $countMultiTeams$ is now equal to $0$, then $b$ is the last non-singleton agent in this progressing direction; so every agent, including $b$, switches to mode $(port)$-\texttt{collect}, where $port$ denotes the port via which the set of agents in the same \texttt{collect} mode arrive and meet $b$; otherwise, every agent, other than $b$, maintains its mode.
After the meeting, $b$ becomes an agent in a mode \texttt{collect} and moves together with other agents. \\

\noindent
\textbf{Event G:} A set of agents in the same $\texttt{collect}$ mode captures an agent $b$ in a \texttt{slow} mode at a node $v$.\\

\noindent
$a:=$ the agent that stores $total$; \\
$count:=$ the total num of agents at $v$;\\
\textbf{if} $a.total=count$ \textbf{then}\\
\hspace*{0.5cm} every agent at $v$ transits to \texttt{STOP};\\
\textbf{else}\\
\hspace*{0.5cm} $tmp:= a.countMultiTeams$; \\
\hspace*{0.5cm} $a.countMultiTeams:=tmp-1$; \\
\hspace*{0.5cm} \textbf{if} $a.countMultiTeams=0$ \textbf{then}\\
\hspace*{1cm} $p:=$ port via which $a$ arrives and meets $b$;\\
\hspace*{1cm} every agent at $v$ switches to $p$-\texttt{collect};\\ 
\hspace*{0.5cm} \textbf{else} \\
\hspace*{1cm} every agent ($\ne b$) at $v$ maintains its mode; \\
\hspace*{1cm} $b$ switches to mode of $a$ and moves with $a$;\\

%

This completes the detailed description of
Algorithm \texttt{Team Sizes One and Two}.
During the execution of the algorithm, a singleton agent in the mode $\idle$ can meet at most two slow agents in one round; this is because all $R$ teams are woken up simultaneously at different bases, and at most two slow agents can be at the same node.
Hence, no meetings between agents in the mode \texttt{idle} and slow agents, other than \texttt{Event B} and \texttt{Event C}, can happen.
Since there cannot exist more than one slow agent at the same node, progressing in the same direction, a \texttt{fast} agent can capture only one slow agent in one round, as shown by \texttt{Event D}, a \texttt{search} agent can capture only one slow agent in one round, as shown by \texttt{Event E}, and a set of agents in the same \texttt{collect} mode can capture only one slow agent in one round, as shown by \texttt{Event G}.
Finally, \texttt{Event F} is triggered by meetings between \texttt{fast} agents and $\hiber$ agents.
There are also meetings other than those listed above.
For instance, two slow agents can meet at the same node, while progressing in different directions.
When any of such meetings happens, each agent at the meeting maintains its mode and its variables unchanged.

\noindent
\textbf{Correctness and complexity.}
Recall that $X_s$ denotes the number of singleton teams and  that $\alpha$ denotes the singleton agent whose label is the smallest among all singleton agents.

The high-level idea of the proof is as follows:
We show in Lemmas \ref{lem-t-asso} and \ref{lem-t-x}, respectively, that gathering cannot be achieved until every singleton agent is associated with some non-singleton team or the agent $r_L$ is right of the agent $\ell_R$. 
In the remaining part of the proof, the following observation is crucial:

\begin{observation}\label{obser}
	While every non-singleton agent moves in a \texttt{slow} mode, if $r_L$ is right of $\ell_R$, then
	\begin{itemize}
		\item[i)] there are $R-X_s$ non-singleton agents that are progressing right (resp. left) and right (resp. left) of the singleton agent $b$, for any non-singleton agent $b$ progressing left (resp. right), and
		\item [ii)] there is at least one non-singleton agent from every non-singleton team, including $\ell_R$ and $r_L$, between (inclusive) a non-singleton agent and its companion of every non-singleton team.
	\end{itemize}
\end{observation}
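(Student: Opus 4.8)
The plan is to reduce both parts of the observation to a single one-dimensional inequality obtained by placing the non-singleton bases on a coordinate axis. First I would fix the round $t$ at which the hypothesis holds and exploit the fact that, while every non-singleton agent is still in a slow mode, all such agents woke up in round $0$ and have since executed the identical speed-one-half movement pattern of procedure \texttt{Proceed}$(\cdot, true)$. Consequently, at round $t$ every left-progressing non-singleton agent sits at distance $s = s(t)$ to the left of its base and every right-progressing one sits at distance $s$ to the right of its base, where $s$ is the common displacement determined solely by $t$. Writing the non-singleton bases in increasing coordinate order as $p_1 < \dots < p_{R-X_s}$, so that $r_L$ starts at $p_1$ and $\ell_R$ starts at $p_{R-X_s}$, I would set $D_{ns} := p_{R-X_s} - p_1$ and observe that the hypothesis ``$r_L$ is right of $\ell_R$'' is exactly the statement $p_1 + s > p_{R-X_s} - s$, i.e.\ $2s > D_{ns}$.

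With this inequality in hand, part (i) is immediate: a left-progressing agent based at $p_k$ occupies coordinate $p_k - s$, while each of the $R-X_s$ right-progressing agents occupies some $p_j + s$, and $p_j + s > p_k - s$ holds for every $j$ because $p_k - p_j \le D_{ns} < 2s$; the symmetric computation handles a right-progressing agent $b$. Since each of the $R-X_s$ non-singleton teams contributes exactly one right-progressing (resp.\ left-progressing) agent, this simultaneously yields the correct count and the claimed relative position. For part (ii), I would consider the team based at $p$ with its two agents at $p - s$ and $p + s$, and any other team based at $q$: if $q \le p$ then its right-progressing agent sits at $q + s \in [p-s, p+s]$ (the lower bound again using $p - q \le D_{ns} < 2s$), and if $q \ge p$ its left-progressing agent sits at $q - s \in [p-s, p+s]$; specializing $q = p_1$ and $q = p_{R-X_s}$ then places $r_L$ and $\ell_R$, respectively, inside the interval, as required.

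The only genuinely delicate point I anticipate is the justification that all non-singleton slow agents share the common displacement $s(t)$ at the round in question. This rests on two facts that must be stated carefully: that all teams are woken in round $0$ (so there is no relative delay), and that the hypothesis ``while every non-singleton agent moves in a slow mode'' guarantees none of these agents has yet switched modes, so each has executed precisely the same prefix of \texttt{Proceed}$(\cdot, true)$ and hence taken the same number of unit steps away from its base. Once the equal-displacement claim is pinned down, the remainder is the elementary coordinate bookkeeping above, with the single inequality $2s > D_{ns}$ doing all the work; I would take care to preserve the strict inequality in part (i) (agents lying strictly to one side) and the non-strict containment in part (ii) (endpoints allowed), matching the wording ``between (inclusive).''
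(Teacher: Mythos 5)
Your proof is correct. The paper gives no proof of this statement---it is presented as an Observation whose truth is treated as self-evident---and your argument is exactly the reasoning it implicitly relies on: since all non-singleton agents woke up in round $0$ and execute \texttt{Proceed}$(\cdot,true)$ synchronously, they all share a common displacement $s$ from their bases while in slow mode, the hypothesis ``$r_L$ right of $\ell_R$'' is equivalent to $2s$ exceeding the distance between the extreme non-singleton bases, and both i) and ii) then follow by the coordinate bookkeeping you carry out.
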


Applying ii) of Observation \ref{obser}, we show in Lemma \ref{lem-alpha-switch-search} that while $\alpha$ is swinging between agents of the team associated with $\alpha$, it will capture a non-singleton agent such that, at the meeting, the union of the label sets stored in the bags of both agents is the set of labels of all $2R-X_s$ agents.
As a result, the agent $\alpha$ switches to a \texttt{search} mode at the meeting.
Applying i) of Observation \ref{obser}, we prove in Lemma \ref{lem-alpha-switch-hiber} that, while moving in a \texttt{search} mode with speed one, the agent $\alpha$ captures either $\ell_R$ or $r_L$ and switches to the mode $\hiber$ at the meeting.
Since then, the agent $\alpha$ stays in the mode $\hiber$ between $\ell_R$ and $r_L$, waiting for other $X_s-1$ singleton agents.
Applying ii) of Observation \ref{obser} again, we show in Lemma \ref{lem-hiber-all} that, while swinging between agents of its associated team, every other singleton agent meets $\alpha$ and switches to the mode $\hiber$.
Once all $X_s$ singleton agents gather at a node between $\ell_R$ and $r_L$, they switch to a \texttt{collect} mode.
When this happens, there are $R-X_s$ non-singleton agents, respectively, at each side of these $X_s$ singleton agents.
The set of singleton agents first capture $R-X_s$ non-singleton agents on one side of the line and then capture the remaining $R-X_s$ non-singleton agents on the other side of the line.
In the end, gathering is achieved after capturing either the leftmost or the rightmost non-singleton agent.

Let $t_{asso}$ denote the earliest round when every agent finds its associated team.
It might happen that different singleton agents are associated with the same team.

\begin{lemma}\label{lem-t-asso}
	The round $t_{asso}$ exists; $t_{asso}$ is an odd number; we have $t_{asso}\le 2D-1$; and by the round $t_{asso}$, the number of labels stored in the bag of every non-singleton is less than $X_s$. 
\end{lemma}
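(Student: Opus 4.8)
The plan is to read off $t_{asso}$ as the maximum, over all singleton bases, of the first round at which some non-singleton agent in a \texttt{slow} mode arrives at that base, since such an arrival is exactly the event (Event B or Event C) that associates the singleton sitting there. First I would fix the geometry of the slow sweeps: by the definition of \texttt{Proceed}$(\cdot,true)$, a slow agent moves monotonically away from its base, occupying the node at distance $k$ from its base in rounds $2k-1$ (arrival) and $2k$ (idle), and it never reverses or leaves its trajectory before the much later \texttt{collect} phase, because Events B, C, D all leave its mode and motion intact. Since the two agents of a non-singleton team leave via ports $-1$ and $+1$, they sweep the two opposite directions and their union covers the whole line, so every singleton base is eventually reached; this already yields existence of each per-singleton association round, hence of $t_{asso}$. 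For the quantitative bound, fix a singleton base $s$ and let $d$ be the distance to the nearest non-singleton base; the slow agent of that team heading toward $s$ reaches $s$ in round $2d-1$, and no slow agent can do so earlier. As all bases lie in an interval of length $D$, we have $d\le D$, so $s$ is associated by round $2d-1\le 2D-1$, and taking the maximum gives $t_{asso}\le 2D-1$.

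For the parity claim I would argue that an association can occur only on an odd round. The only events that turn an idle singleton into a \texttt{fast} agent are B and C, both triggered by a meeting with a slow agent; a meeting of an idle agent with a \emph{fast} agent is not among the listed events and leaves both modes unchanged, so fast agents never associate anyone. A slow agent first becomes collocated with the still-idle singleton $s$ precisely when it newly arrives at $s$, namely in round $2d-1$, which is odd. (If two slow agents arrive simultaneously and trigger Event C, they arrive at a common odd round; had one arrived strictly earlier, it would already have associated $s$ via Event B.) Thus every per-singleton association round is odd, and their maximum $t_{asso}$ is odd as well.

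For the bag bound I would track exactly when a non-singleton's $bag$ can grow. Before the \texttt{collect} phase, the only event that inserts a label into a non-singleton's $bag$ is Event D, in which a \emph{fast} singleton captures the slow agent; Events B, C, and E leave non-singleton bags untouched. Hence every label ever placed into a non-singleton $bag$ is that of a singleton which had already switched to a \texttt{fast} mode and moved. Now choose any singleton $s^{*}$ that becomes associated exactly in round $t_{asso}$ (at least one exists, by the definition of $t_{asso}$ as the earliest round at which all singletons are associated). Agent $s^{*}$ is idle, and so has captured nobody, in every round up to and including $t_{asso}-1$; and in round $t_{asso}$ it executes only Event B or C, which modifies no bag and postpones its first \texttt{fast} move to round $t_{asso}+1$. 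Consequently the label of $s^{*}$ appears in no non-singleton $bag$ at round $t_{asso}$, so every non-singleton $bag$ omits at least one of the $X_s$ singleton labels and therefore has size at most $X_s-1<X_s$.

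I expect the main obstacle to be the bookkeeping that justifies the phrase ``association happens exactly at the odd first-arrival round'': one must confirm that the sweeping slow agents keep their monotone trajectories and slow modes throughout rounds $0,\dots,t_{asso}$ (so that the first arrival at $s$ really is round $2d-1$), and that meetings of fast agents with still-idle singletons are genuinely inert. All three quantitative parts, and especially the parity and the strict inequality $|bag|<X_s$, hinge on this single structural fact, so pinning it down carefully is where the real care lies.
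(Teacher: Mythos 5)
Your proposal is correct and follows essentially the same route as the paper's proof: clean slow-sweep dynamics up to $t_{asso}$, association of each singleton at the (odd) first-arrival round of a \texttt{slow} agent at its base, the bound $2d-1\le 2D-1$ from the half-speed sweep, and the observation that the last-associated singleton's label is missing from every non-singleton bag, giving $|bag|<X_s$. The one step you defer as ``bookkeeping'' --- that no \texttt{slow} agent can leave its \texttt{slow} mode before round $t_{asso}$ --- is exactly what the paper closes by noting that the \texttt{search}/\texttt{hibernate}/\texttt{collect} cascade (the only mechanism that can disturb a \texttt{slow} trajectory, via Event G) can begin only after some non-singleton's bag contains all $X_s$ singleton labels, which requires every singleton to have already been associated and to have moved as a \texttt{fast} agent, hence strictly after $t_{asso}$.
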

\begin{proof}
	In view of \texttt{Event D}, for any singleton agent $a$ and any non-singleton agent $b$, the label of $a$ is stored in the bag of $b$, if $a$ in a \texttt{fast} mode captures $b$ in a \texttt{slow} mode; recall that in the round when $a$ captures $b$, $a$ moves and $b$ stays.
	Therefore, in any round when some singleton agent $c$ switches from the mode $\idle$ to a \texttt{fast} mode, every non-singleton agent has not yet stored in its bag the label of $c$, and thus the agent $\alpha$ has not switched to a \texttt{search} mode. 
	Before $\alpha$ switches to a \texttt{search} mode, every singleton agent either finds its associated team or stays put in the mode $\idle$ at its base, and both non-singleton agents from each non-singleton team move away from their base with speed one half in a \texttt{slow} mode and progress in different directions.
	The distance between the bases of a singleton team and any non-singleton team is bounded by $D$.
	Hence, every singleton agent must meet some non-singleton agent, while the former stays put in the mode $\idle$.
	This proves the existence of $t_{asso}$.
	
	In the round $t_{asso}$, at least one non-singleton agent moves.
	Every agent in a \texttt{slow} mode moves only in odd rounds.
	Hence, $t_{asso}$ is an odd number.
	
	The distance between the bases of a singleton team and any non-singleton team is at most $D$; therefore, we have $t_{asso}\le 2D-1$.
	
	Let $f$ denote the singleton agent that switches from the mode $\idle$ to a \texttt{fast} mode in round $t_{asso}$.
	By the round $t_{asso}$, every non-singleton agent has not stored in its bag the label of $f$; hence, the bag size of every non-singleton agent is less than $X_s$ in round $t_{asso}$.
\end{proof}

Let $t_x$ denote the earliest round when the agent $r_{L}$ is right of $\ell_R$. 

\begin{lemma}\label{lem-t-x}
	The round $t_x$ exists; $t_x$ is an odd number; we have $t_x\le D+1$; and by the round $t_x$, the number of labels stored in the bag of every singleton agent is less than $2(R-X_s)$.
\end{lemma}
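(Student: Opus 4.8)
The plan is to treat the four assertions separately, handling the purely kinematic parts (1)--(3) first and reserving the bag bound (4) for a structural argument. For (1)--(3), I would first observe that up to round $t_x$ both $\ell_R$ and $r_L$ are non-singleton agents in \texttt{slow} modes: a slow agent captured in Event D merely \emph{maintains} its mode, and the only events that turn a non-singleton agent into a \texttt{collect} agent (Events F and G) presuppose machinery that cannot be in place before the crossing. Hence up to $t_x$ each of $\ell_R, r_L$ advances exactly one step toward the other in every odd round and stays put in every even round. Writing $d_0 \le D$ for the distance between the leftmost and rightmost non-singleton bases, the gap between $r_L$ and $\ell_R$ equals $d_0$ at round $0$ and decreases by $2$ in each odd round; letting $k$ denote the number of steps a slow agent has taken (so $k$ is reached in odd round $2k-1$), a direct computation shows $r_L$ first lies strictly right of $\ell_R$ after round $d_0$ when $d_0$ is odd and after round $d_0+1$ when $d_0$ is even (and after round $1$ in the degenerate case $d_0=0$ of a single non-singleton team). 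This gives existence, oddness, and the bound $t_x \le d_0+1 \le D+1$ simultaneously.

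For (4), the idea is to show that before the crossing no singleton can have captured a representative of \emph{both} extreme teams. First I would record the swing structure: a singleton $a$ associated with a non-singleton team $T$ oscillates between the two agents of $T$ (its \texttt{goal} is always one of them, and it reverses exactly when it reaches one), so at every round $a$'s position lies in the span of $T$. Next I would note that, from outside an extreme team, the only capturable agent is its \emph{inner} agent --- $r_L$ for the leftmost team and $\ell_R$ for the rightmost --- since each extreme team's other agent (its companion, moving away from the centre) stays forever beyond the span of any non-extreme team. Catching $\ell_R$ (a left-mover) forces $a$'s right turning point to lie strictly right of $\ell_R$, and catching $r_L$ forces $a$'s left turning point to lie strictly left of $r_L$; for a singleton associated with an extreme team this condition is exactly $2k>d_0$, i.e. it cannot hold before the crossing, so such a singleton is missing the opposite extreme team's labels. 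Consequently every singleton's bag omits at least one full team and has size at most $2(R-X_s)-2<2(R-X_s)$ up to round $t_x$. This is the contrapositive of the interleaving guaranteed by Observation \ref{obser}(ii), which only becomes available once $r_L$ is right of $\ell_R$.

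The main obstacle is the case of a singleton associated with a \emph{middle} team $T$. A naive speed/span estimate only forces the two inner captures to occur after $k>\tfrac14 d_0$, which can precede the crossing at $k\approx\tfrac12 d_0$, so the continuous argument above is insufficient here. What actually rules out a premature full bag is the discrete phase structure of procedure \texttt{Proceed}: a fast agent moves in every round while a slow agent moves only in odd rounds, so a fast agent chasing a same-direction slow agent captures it only on a phase-aligned round and otherwise \emph{slips past} it without a capture (exactly what happens to $r_L$ relative to a central singleton). I would isolate this as a small sub-lemma stating that, before $r_L$ lies right of $\ell_R$, no singleton associated with a non-extreme team can capture the inner agent of the far extreme, arguing by the parity of the meeting round together with the requirement in Event D that the captured slow agent be stationary at the meeting. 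Combining the extreme-team case with this sub-lemma yields (4); the delicate point is the parity bookkeeping, and it is precisely what makes the bound $t_x\le D+1$ --- rather than something closer to $\tfrac12 D$ --- the right statement.
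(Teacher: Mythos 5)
Your handling of assertions (1)--(3) is essentially the paper's own computation: with $d\le D$ the distance between the extreme non-singleton bases, the crossing occurs at round $d$ (for $d$ odd) or $d+1$ (for $d$ even), giving oddness and the bound $t_x\le D+1$; the paper secures existence by first establishing assertion (4) (full bags are impossible before the crossing, hence $\alpha$ never enters a \texttt{search} mode and $\ell_R,r_L$ stay slow), while you assert the mode-stability directly, which is mildly circular but repairable along the paper's lines. The real issue is assertion (4).

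There you correctly isolate the hard case --- a singleton associated with a middle team, which the paper's proof disposes of in a single unsupported sentence --- but your proposed repair rests on a ``slip-past'' mechanism that does not exist. In a same-direction chase the gap between a fast agent and a slow agent strictly ahead of it is unchanged in odd rounds and decreases by exactly one in each even round; hence the first collocation necessarily occurs in an even round, when the slow agent is stationary and both agents last entered via the same port, and this is precisely a capture triggering Event D. A fast agent can fail to capture a same-direction slow agent only when the two are already collocated at the start of a run (e.g., the slow agent happens to sit at the fast agent's turning point); it can never slip past one it approaches from strictly behind. Concretely, place non-singleton bases at $0$, $30$, $130$ and one singleton at $31$ (so $R=4$, $X_s=1$, $D=130$, $t_x=131$). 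The singleton associates with the middle team in round $1$ and turns at its two team agents in rounds $4$, $12$, $36$, $108$; on the rightward run starting at round $36$ (singleton at node $12$, $r_L$ at node $18$) it captures $r_L$ at round $48$, and on the leftward run starting at round $108$ (singleton at node $84$, $\ell_R$ at node $76$) it captures $\ell_R$ at round $124$. Both are legitimate Event D captures occurring before $t_x=131$, so your sub-lemma (no middle-team singleton can capture the far extreme's inner agent before the crossing) is false; moreover, after round $124$ this singleton's bag contains all $2(R-X_s)=6$ non-singleton labels, so the same configuration also defeats the one-sentence justification given in the paper's proof of assertion (4). In short, your instinct that the middle-team case is the crux is right, but parity bookkeeping cannot close it: the configuration above shows the claim fails as stated, so no proof along either your route or the paper's can succeed without modifying the statement or the algorithm.
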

\begin{proof}
	If only one of $R$ teams is non-singleton, then both $r_L$ and $\ell_R$ belong to the same team, and $r_L$ is right of $\ell_R$ in the first round.
	Hence, $t_x$ exists, we have $t_x=1\le 2D$, and the bag size of every singleton agent is zero in the first round, which is fewer than $2(R-X_s)=2$.
	
	Otherwise, there are at least two non-singleton teams.
	Let $d$ denote the distance between the bases of the teams of  $r_L$ and $\ell_R$.
	If $d$ is odd, then $t_x$ (if it exists) equals to $2\frac{d-1}{2}+1=d$; otherwise, $t_x$ (if it exists) is $2\frac{d}{2}+1=d+1$.
	Hence, $t_x$  (if it exists)  is always an odd number.
	Let $a$ denote any singleton agent.
	If $a$ is not yet associated with any team in the round $t_x$, then the bag of $a$ is empty;
	otherwise, while $a$ is swinging between the two agents in its associated team, either $a$ has not caught $r_L$ and its companion or $a$ has not caught $\ell_R$ and its companion, by round $t_x$; therefore, the bag of $a$ stores less than $2(R-X_s)$ labels.
	This implies that the agent $\alpha$ has not switched to a \texttt{search} mode by the round $t_x$, if $t_x$ exists.
	Before the agent $\alpha$ switches to a \texttt{search} mode, the agent $r_L$ (resp. $\ell_R$) progresses right (resp. left) in a \texttt{slow} mode.
	Since the distance between the bases of $r_L$ and $\ell_R$ is upper bounded by $D$, $r_L$ must be right of $\ell_R$ in some round before $\alpha$ switches to a \texttt{search} mode; this proves the existence of the round $t_x$.
	Since $d\le D$, we have $t_x\le D+1$.
\end{proof}


\begin{lemma}\label{lem-progressing-R-X-s}
	In the round when a singleton agent, say $a$, whose bag stores $2(R-X_s)$ labels captures a non-singleton agent, progressing left (resp. right), there are $R-X_s$ non-singleton agents that are right (resp. left) of $a$ and progressing right (resp. left). 
\end{lemma}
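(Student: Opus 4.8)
The plan is to reduce the statement directly to part (i) of Observation \ref{obser}, after checking that its two hypotheses --- that every non-singleton agent is still in a \texttt{slow} mode, and that $r_L$ is right of $\ell_R$ --- both hold in the round, call it $t'$, described in the lemma. I treat only the case in which $a$ captures a non-singleton agent $b$ progressing left; the other case is symmetric.

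First I would establish that in round $t'$ every non-singleton agent is still slow. Inspecting the events, a non-singleton agent leaves a \texttt{slow} mode only through \texttt{Event G}, that is, when it is captured by a set of agents already in a \texttt{collect} mode. Such a collect set arises only after \texttt{Event F} has switched a gathered set of $X_s$ hibernating singletons to a \texttt{collect} mode, which in turn requires $\alpha$ to have reached mode $\hiber$ and hence to have earlier switched to a \texttt{search} mode via \texttt{Event D}. Since a singleton's bag is modified only while it swings in a \texttt{fast} mode (through \texttt{Event D}), the fact that $|a.bag|$ attains the value $2(R-X_s)$ at the capture in round $t'$ shows that $a$ is itself still swinging in round $t'$; consequently not all $X_s$ singletons have yet gathered into the hibernating set, no collect set has been formed, and no non-singleton has left its \texttt{slow} mode by round $t'$.

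Next I would handle the position hypothesis. Because capturing one agent of a non-singleton team records in $a.bag$ both that agent and its companion, the equality $|a.bag|=2(R-X_s)$ means that $a$ has met at least one agent of each of the $R-X_s$ non-singleton teams. By Lemma \ref{lem-t-x}, the bag of every singleton holds fewer than $2(R-X_s)$ labels up to and including round $t_x$, so $t'>t_x$. As $t_x$ is the first round in which $r_L$ lies right of $\ell_R$, and since (by the previous paragraph) both $r_L$ and $\ell_R$ are slow throughout rounds $t_x,\dots,t'$ --- moving steadily right and left, respectively, so that the gap between them never shrinks --- the agent $r_L$ is still right of $\ell_R$ in round $t'$.

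With both hypotheses verified, I would finish by applying part (i) of Observation \ref{obser} to the left-progressing non-singleton agent $b$, obtaining $R-X_s$ non-singleton agents that progress right and lie right of $b$. A capture places $a$ at the node of $b$, and $a$ progresses left as well, so these $R-X_s$ right-progressing agents lie right of $a$, which is exactly the claim. I expect the main obstacle to be the \texttt{slow}-mode verification of the second paragraph: it rests on a careful reading of the event structure to confirm that the collect phase --- the only mechanism removing a non-singleton from a \texttt{slow} mode --- cannot begin until every singleton, $a$ included, has stopped swinging, and hence must occur strictly after round $t'$.
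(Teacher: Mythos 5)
Your proposal is correct and follows essentially the same route as the paper's own proof: use Lemma \ref{lem-t-x} to conclude that the capture occurs after round $t_x$, note that $r_L$ remains right of $\ell_R$ from round $t_x$ on, and then invoke Observation \ref{obser} i). The only difference is that you explicitly verify the Observation's standing hypothesis that every non-singleton agent is still in a \texttt{slow} mode (by tracing the event structure back from \texttt{Event G} to \texttt{Event F} to the hibernation of $\alpha$), a point the paper's three-line proof leaves implicit; this is added rigor rather than a different approach.
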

\begin{proof}
	In view of Lemma \ref{lem-t-x}, only after the round $t_x$, can a singleton agent whose bag stores $2(R-X_s)$ labels capture a non-singleton agent.
	From the round $t_x$ on, $r_L$ is right of $\ell_R$.
	This implies the lemma, in view of  Observation \ref{obser} i).
\end{proof}


Lemma \ref{lem-basic-capture} shows in how many rounds an agent $a$ that moves with speed one can capture an agent that moves with half speed and progresses in the same direction as $a$.

\begin{lemma}\label{lem-basic-capture}
	Let $a$ denote the agent that performs 
	\texttt{proceed}($\cdot, false$) and let $b$ denote the agent that starts performing \texttt{proceed}($\cdot, true$) after wake-up.
	If $a$ and $b$ progress left (resp. right) and $a$ is left (resp. right) of $b$ at a distance $\ell$ in the round $t$, for any even number $t\ge 2$, then $a$ captures $b$ in round $t+2\ell$.
\end{lemma}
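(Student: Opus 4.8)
The plan is to reduce to a single progress direction, record the relative position of the two agents that the hypothesis fixes, and then track the distance between them round by round, exploiting the fact that the fast mode and the slow mode advance on different schedules. Without loss of generality I assume both agents progress to the left; the right-progressing case is symmetric. Since $a$ executes \texttt{Proceed}($\cdot, false$) it advances one step in its progress direction in \emph{every} round (regardless of when it entered its fast mode), whereas $b$ executes \texttt{Proceed}($\cdot, true$) and, having started at wake-up, advances one step in odd rounds and stays idle in even rounds; in particular, because $t$ is even, $b$ is idle in round $t$ and in every subsequent even round. For a capture to be possible the faster agent must approach the slower one from the rear along the common progress direction, which is exactly the configuration supplied by the hypothesis: in round $t$ the agent $b$ lies at distance $\ell$ ahead of $a$ in the progress direction. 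The enter-via-$q$/exit-via-$-q$ rule of \texttt{Proceed} guarantees that each agent keeps a straight trajectory, so both the notion of ``behind'' and the value $\ell$ remain well defined throughout.

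First I would let $\delta_k$ denote the distance between $a$ and $b$ in round $t+k$, with $\delta_0=\ell$, and determine how it evolves. On every odd round $t+1,t+3,\dots$ both agents take one step forward, so $\delta_k$ is unchanged; on every even round $t+2,t+4,\dots$ only $a$ moves while $b$ stands idle, so $\delta_k$ decreases by exactly one. As $t$ is even, the even rounds following $t$ are precisely $t+2,t+4,\dots,t+2\ell$, and there are $\ell$ of them up to round $t+2\ell$; hence $\delta$ drops from $\ell$ to $0$, giving $\delta_{2\ell}=0$ and $\delta_{2\ell-1}=1$. It then remains to confirm that round $t+2\ell$ is a legitimate capture and that no earlier meeting occurs. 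No earlier meeting is possible because $\delta_k\ge 1$ for all $k<2\ell$; and on the intermediate odd rounds the two agents move in the same direction while at distance $\ge 1$, so they traverse disjoint edges and never swap, which rules out a crossing inside an edge. Finally, in round $t+2\ell$ agent $a$ moves one step into the node where $b$ is standing idle, and since both reach that node from the same side (the rear, in the common progress direction) they enter via the same port; this is exactly a capture in the sense of \textbf{Event D}, occurring in round $t+2\ell$ as claimed.

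The main obstacle is the parity bookkeeping: the conclusion hinges on $t$ being even, so that the idle rounds of $b$ coincide precisely with the rounds in which the gap contracts, and on verifying that the terminal step is a genuine capture ($a$ moving, $b$ idle, identical entry port) rather than an odd-round co-location or an edge crossing. Once these parity facts and the straight-trajectory property of \texttt{Proceed} are established, the count of exactly $\ell$ gap-reducing rounds is immediate, and the symmetric right-progressing case follows by relabeling ports, which completes the proof.
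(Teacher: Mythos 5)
Your proof is correct and takes essentially the same approach as the paper's own: the gap of $\ell$ stays fixed on odd rounds (both agents move) and shrinks by exactly one on each even round (only $a$ moves), so it closes precisely at the even round $t+2\ell$, in which $a$ steps onto the node where $b$ stands idle --- a genuine capture --- with your added checks (no earlier meeting, no edge crossing, same entry port) being harmless extra care. One remark: like the paper's proof, you implicitly place the fast agent $a$ \emph{behind} $b$ relative to their common progress direction (the only configuration in which capture is possible, and the one actually used when the lemma is applied), whereas the lemma's literal wording (``$a$ is left of $b$'' when both progress left) says the opposite; this directional slip is in the paper's statement, not a flaw in your argument.
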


\begin{proof}
	Without loss of generality, assume that both $a$ and $b$ progress left.
	Hence, in round $t$, $a$ is left of $b$ at a distance $\ell$.
	Between rounds $t$ and $t+2\ell$, $a$ moves for $2\ell$ rounds and $b$ moves for $\ell$ rounds.
	Before the round $t+2\ell$, $a$ is left of $b$, and in the round $t+2\ell$, both $a$ and $b$ are at the same node.
	Moreover, since $t$ is even, $t+2\ell$ is also even.
	Therefore, in round $t+2\ell$, $a$ moves and $b$ stays idle, and thus $a$ meets $b$ in round $t+2\ell$.
\end{proof}

Lemma \ref{lem-double-capture} shows how many rounds it takes a singleton agent $a$ to capture every agent in the team associated with $a$  at least once, as $a$ swings.

\begin{lemma}\label{lem-double-capture}
	Let $a$ denote any singleton agent swinging in round $t$ between two non-singleton agents $b$ and $b'$ in the team associated with $a$.
	\begin{itemize}
		\item[i)] If $t$ is even, then $a$ catches each of $b$ and $b'$ at least once, between  round $t$ and round $9t$;
		\item[ii)] If $t$ is odd, then $a$ catches each of $b$ and $b'$ at least once, between round $t$ and round $9t+3$.
	\end{itemize}
\end{lemma}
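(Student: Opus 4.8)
The plan is to reduce everything to two applications of Lemma~\ref{lem-basic-capture}, one for each leg of the swing. Set up coordinates with the right direction positive and assume without loss of generality that at round $t$ the agent $a$ is progressing toward $b'$ (the symmetric case, with $a$ heading toward $b$ first, is identical after swapping roles). Let $\ell_1$ be the distance from $a$ to $b'$ and $\ell_2$ the distance from $a$ to $b$ at round $t$, so that $a$ sits between them with $\ell_1+\ell_2=\Delta(t)$, where $\Delta(t)$ is the current separation of the two companions. Since $b$ and $b'$ start at a common base and each moves half a step per round in opposite directions (both remaining slow throughout the swinging regime), we have $\Delta(t)=2\lceil t/2\rceil$; in particular $\Delta(t)=t$ when $t$ is even and $\Delta(t)=t+1$ when $t$ is odd. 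I would record $\ell_1,\ell_2\ge 0$ and $\ell_1,\ell_2\le\Delta(t)$ for the final estimate.

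For even $t$ I would apply Lemma~\ref{lem-basic-capture} directly from round $t$, so $a$ catches $b'$ at round $t_1=t+2\ell_1$. The key bookkeeping step is to recompute the gap to $b$ at that moment: during the $2\ell_1$ rounds of the first leg, $b'$ drifts $\ell_1$ further steps in the positive direction (placing the capture node $2\ell_1$ to the right of $a$'s round-$t$ position) while $b$ drifts $\ell_1$ steps in the negative direction, giving a new $a$--$b$ distance of $3\ell_1+\ell_2$. Since $t_1$ is again even, a second application of Lemma~\ref{lem-basic-capture} yields the capture of $b$ by round $t_1+2(3\ell_1+\ell_2)=t+8\ell_1+2\ell_2$. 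Using $8\ell_1+2\ell_2=2\Delta(t)+6\ell_1\le 8\Delta(t)=8t$, both captures occur by round $9t$, which is part~i).

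For odd $t$ the only obstacle is that Lemma~\ref{lem-basic-capture} requires an even starting round, so I would first advance one round: in round $t+1$ (even) the fast agent $a$ has stepped once toward $b'$ while the slow agents stayed put, so the gap to $b'$ becomes $\ell_1-1$ and the gap to $b$ becomes $\ell_2+1$. Repeating the two-capture argument from round $t+1$ gives capture of $b'$ at round $t+2\ell_1-1$ and, after an analogous recomputation of the turnaround gap (now $3\ell_1+\ell_2-2$), capture of $b$ by round $t+8\ell_1+2\ell_2-5$. Since here $\Delta(t)=t+1$, we have $8\ell_1+2\ell_2\le 8\Delta(t)=8t+8$, and the bound becomes $9t+3$, which is part~ii). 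The main delicate point throughout is the turnaround computation---tracking exactly how far each slow companion has drifted during the first leg, so that the second invocation of Lemma~\ref{lem-basic-capture} is fed both the correct distance and the correct (even) parity; once that is pinned down, the two cases differ only by the one-round parity shift and by the value of $\Delta(t)$.
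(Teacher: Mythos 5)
Your proof is correct and follows essentially the same route as the paper's: both reduce the claim to two successive applications of Lemma~\ref{lem-basic-capture} (one per leg of the swing), recompute the distance to the other companion at the turnaround, and handle odd $t$ by advancing one round so that the even-round hypothesis of Lemma~\ref{lem-basic-capture} holds. The only difference is bookkeeping—you track the exact gaps $\ell_1,\ell_2$ and bound them at the end, whereas the paper bounds each gap crudely by the current companion separation at every step—but this yields the same bounds $9t$ and $9t+3$.
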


\begin{proof}
	Without loss of generality, assume that in round $t$, $a$ is progressing towards $b$.
	
	If $t$ is even, then the distance between $b$ and $b'$ in round $t$ is $t$; hence, the distance between $a$ and $b$ is at most $t$.
	In view of Lemma \ref{lem-basic-capture}, $a$ captures $b$ by the round $t+2t=3t$.
	After capturing $b$, $a$ moves towards $b'$.
	By the round $3t$, the distance between $b'$ and $b$ is at most $3t$.
	In view of Lemma \ref{lem-basic-capture}, $a$ captures $b'$ by the round $3t+2(3t)=9t$.
	This proves the statement i).
	
	If $t$ is odd, then the distance between $b$ and $b'$ in round $t$ is $t+1$; hence, the distance between $a$ and $b$ is at most $t+1$.
	In round $t+1$, the distance between $a$ and $b$ is at most $t$.
	In view of Lemma \ref{lem-basic-capture}, $a$ captures $b$ by the round $t+1+2(t)=3t+1$.
	After capturing $b$, $a$ moves towards $b'$.
	By the round $3t+1$, the distance between $b'$ and $b$ is at most $3t+1$.
	In view of Lemma \ref{lem-basic-capture}, $a$ captures $b'$ by the round $3t+1+2(3t+1)=9t+3$.
	This proves the statement ii).
\end{proof}



Lemma \ref{lem-assumption-forever} is used later to prove that in some round the agent $\alpha$ captures a non-singleton agent such that at the meeting the former agent stores in its bag $2(R-X_s)$ labels, and the later agent stores in its bag $X_s$ labels.

\begin{lemma}\label{lem-assumption-forever}
	Assume that after switching to a \texttt{fast} mode, every singleton agent swings for at least  $54D+9$ rounds  between the two agents in its associated team. By the round $54D+9$,
	\begin{itemize}
		\item  every singleton agent has caught at least one non-singleton agent in every non-singleton team, including the agents $\ell_R$ and $r_L$, and
		\item both $\ell_R$ and $r_L$ have been caught, respectively, by every singleton agent.
	\end{itemize}
\end{lemma}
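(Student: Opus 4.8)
The plan is to fix an arbitrary singleton agent $a$ and show that within one full oscillation of its swing — a complete rightward sweep followed by a complete leftward sweep, or vice versa — it captures at least one agent of every non-singleton team, including $\ell_R$ and $r_L$. The geometric engine is Observation \ref{obser}. First I would record that, once $r_L$ is right of $\ell_R$ (which by Lemma \ref{lem-t-x} happens by round $t_x\le D+1$), the segment bounded by the two agents $b,b'$ of the team $T_a$ associated with $a$ contains, for every non-singleton team, at least one of its two agents, and in particular contains both $\ell_R$ and $r_L$. Concretely, since slow agents move only in odd rounds, after $m=\lceil t/2\rceil$ steps every slow agent sits at $B\pm m$ around its base $B$; the condition that $r_L$ is right of $\ell_R$ is exactly $2m>B_{max}-B_{min}$, the distance between the extreme non-singleton bases being at most $D$. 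From this one checks that a team whose base is strictly to the left of $T_a$ has its rightward-mover inside $[b,b']$ while its leftward-mover lies outside to the left, and symmetrically for a team to the right; this is precisely Observation \ref{obser}.

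Next I would split the capture into the two sweeps. By Lemma \ref{lem-t-asso} agent $a$ is swinging by round $t_{asso}\le 2D-1$; set $t_0=\max(t_{asso},t_x)\le 2D-1$, after which $a$ swings and the containment above holds. During a complete rightward sweep, $a$ leaves the left endpoint $b$ at speed one while every rightward-mover in $[b,b']$ lies ahead of it and moves at speed one-half in the same direction, so Lemma \ref{lem-basic-capture} guarantees that $a$ overtakes each of them in turn (each capture occurring in an even round, with $a$ moving and the target idle), up to and including $b'$; these are exactly the rightward-movers of all teams whose base is at most that of $T_a$, and they include $r_L$. Symmetrically, a complete leftward sweep from $b'$ back to $b$ captures every leftward-mover in the segment, i.e. those of all teams whose base is at least that of $T_a$, including $\ell_R$. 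Head-on passings of oppositely progressing agents yield no capture and do not disturb a sweep, by the definition of capture and the closing remark of the algorithm's description, so each sweep runs to its endpoint unimpeded. Summing the contributions — a team to the left of $T_a$ via its rightward-mover, a team to the right via its leftward-mover, and $T_a$ via both its agents — one rightward plus one leftward sweep suffice for $a$ to catch at least one agent of every non-singleton team, and specifically $\ell_R$ and $r_L$.

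For the timing I would use Lemma \ref{lem-double-capture} from round $t_0$: $a$ catches both endpoints within $9t_0+3$ rounds, and a second catch of the first-caught endpoint follows within a further $2(9t_0+4)$ rounds by Lemma \ref{lem-basic-capture}. Hence by round $27t_0+11$ agent $a$ has completed at least one full rightward sweep (ending in a catch of $b'$) and one full leftward sweep (ending in a catch of $b$), both occurring after $t_x$, so both qualify for the capture argument of the previous paragraph. Since $t_0\le 2D-1$, we get $27t_0+11\le 27(2D-1)+11=54D-16\le 54D+9$, and all these rounds lie inside the window $[\,t_{asso},\,t_{asso}+54D+9\,]$ throughout which the hypothesis guarantees $a$ keeps swinging. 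As $a$ was arbitrary, every singleton agent has by round $54D+9$ caught at least one agent of every non-singleton team including $\ell_R$ and $r_L$; the second bullet is this same fact read from the side of $\ell_R$ and $r_L$.

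The main obstacle I expect is not the arithmetic but the verification that each sweep is a genuine endpoint-to-endpoint sweep taking place entirely after $t_x$, so that Observation \ref{obser} legitimately forces the intermediate captures, and that an incomplete first half-swing (when $a$ may begin mid-segment at round $t_0$) cannot cause any team to be missed. This is exactly why I insist on one full rightward sweep \emph{and} one full subsequent leftward sweep rather than invoking Lemma \ref{lem-double-capture} a single time. The remaining bookkeeping — identifying which endpoint agent belongs to which team via the translate relation $\mathrm{pos}(c)-\mathrm{pos}(b)=B'-B_a$, and the parity check that captures land in even rounds — is routine given Lemmas \ref{lem-basic-capture} and \ref{lem-double-capture}.
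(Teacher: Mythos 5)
Your proof is correct and follows essentially the same route as the paper's: wait until $t_0=\max(t_{asso},t_x)$, then show that two consecutive full sweeps between the two agents of the associated team capture, via Observation \ref{obser} and Lemmas \ref{lem-basic-capture} and \ref{lem-double-capture}, at least one agent of every non-singleton team, including $\ell_R$ and $r_L$, with all sweeps occurring after $t_x$. The only difference is bookkeeping: the paper anchors the two sweeps at the first endpoint-capture round $t^a\le 3t_m+1\le 6D+1$ and bounds completion by $9t^a\le 54D+9$, whereas you anchor at $t_0$ itself and obtain $27t_0+11\le 54D-16$; both land within the required window.
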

\begin{proof}
	%
	Let $t_m=\max(t_{asso}, t_x)$.
	In view of Lemmas \ref{lem-t-asso} and \ref{lem-t-x}, both $t_{asso}$ and $t_x$ are odd, and thus $t_m$ is odd.
	Let $t^a\ge t_m$, for any singleton agent $a$, denote the earliest round, not earlier than $t_m$, when $a$ captures a non-singleton agent, say $b$, in the team associated with $a$.
	
	Recall that $b$ stays idle in even rounds, and thus $t^a$ must be an even number.
	In view of statement i) of Lemma \ref{lem-double-capture}, from the round $t^a$ ($\ge t_m\ge t_x$) to the round $9t^a$, it happens at least once that 1) the singleton agent $a$ moves towards $b'$, starting from $b$, and captures $b'$, 2) and then the singleton agent $a$ moves towards $b$, starting from $b'$, and captures $b$.
	Therefore, by the round $9t^a$, the singleton agent $a$  has caught at least one non-singleton agent in every non-singleton team, including $\ell_R$ and $r_L$, in view of Observation \ref{obser} ii).
	%
	%

	Next, we show an upper bound on $t_a$.
	
	\begin{claim}\label{claim-t-a-upper-bound}
		We have $t^a\le 6D+1$.
	\end{claim}
	
	In order to prove the claim, recall that $t_m$ is an odd number and that $a$ captures $b$ when $a$ moves and $b$ stays.
	Therefore, $a$ cannot capture $b$ in the round $t_m$.
	Instead, $a$ is moving towards $b$ in round $t_m$, in view of the definition of $t^a$.	
	Let $b'$ denote the companion of the agent $b$ in the team associated with $a$.
	
	In round $t_m$, the distance between $b$ and $b'$ is $2\lceil \frac{t_m}{2}\rceil=t_m+1$, since $t_m$ is odd.
	Hence, the distance between $b$ and $a$ is at most $t_m+1$.
	In round $t_m+1$, $a$ moves and $b$ stays, as $t_m+1$ is even, and thus the distance between $b$ and $a$ is at most $t_m$.
	In view of Lemma \ref{lem-basic-capture}, $a$ has caught $b$ by the round $t_m+1+2(t_m)=3t_m+1$, and thus we have $t^a\le 3t_m+1$.
	In view of Lemmas \ref{lem-t-asso} and \ref{lem-t-x}, $t_m\le 2D$, and thus we have $t^a\le 6D+1$. $\diamond$

	In view of Claim \ref{claim-t-a-upper-bound}, we have $9t^a\le 54D+9$, for any singleton agent $a$.
	This concludes the proof.
\end{proof}

In view of \texttt{Event D}, only the agent $\alpha$ can switch to a \texttt{search} mode.
Next, we prove that $\alpha$ must switch to a \texttt{search} mode and give an upper bound on the round when it happens.

\begin{lemma}\label{lem-alpha-switch-search}
	The agent $\alpha$ switches to a \texttt{search} mode before the round $486D+84$.
\end{lemma}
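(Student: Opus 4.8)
The plan is to combine the \emph{conditional} statement of Lemma~\ref{lem-assumption-forever} with the fact that, among all singleton agents, only $\alpha$ is ever able to enter a \texttt{search} mode. First I would record why this uniqueness holds. In \texttt{Event D}, a fast agent $a$ switches to a \texttt{search} mode only when it captures a slow non-singleton agent $b$ with $|a.bag|/2+|b.bag|=R$ \emph{and} $a.label$ equal to the smallest label in $b.bag$. By \texttt{Event D}, a non-singleton bag only ever receives labels of singleton agents, and $\alpha$ has by definition the smallest label among all singleton agents, so no singleton other than $\alpha$ can satisfy the second condition. Moreover a singleton leaves a \texttt{fast} mode only by becoming \texttt{search} (\texttt{Event D}) or \texttt{hiber} (\texttt{Event E} or \texttt{Event F}), and the latter two presuppose that a \texttt{search} agent already exists. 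Hence, as long as $\alpha$ has not switched to a \texttt{search} mode, no singleton has stopped swinging.

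I would then split into two cases. If $\alpha$ switches to a \texttt{search} mode at some round at most $54D+9$, we are immediately done since $54D+9<486D+84$. Otherwise $\alpha$ is still swinging at round $54D+9$, so by the previous paragraph no singleton stops swinging before round $54D+9$; this is exactly the condition under which the proof of Lemma~\ref{lem-assumption-forever} reaches its conclusion by round $54D+9$. Applying that lemma, by round $54D+9$ the bag of $\alpha$ already stores all $2(R-X_s)$ labels of non-singleton agents (so $|\alpha.bag|/2=R-X_s$), and both $\ell_R$ and $r_L$ have been caught by every singleton agent, so each of $\ell_R.bag$ and $r_L.bag$ contains all $X_s$ singleton labels. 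Since bags only grow, these facts persist at every later round.

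Next I would bound the first capture of $\ell_R$ or $r_L$ by $\alpha$ after round $54D+9$. As $54D+9$ is odd and $\alpha$ is swinging between the two agents $b,b'$ of its associated team in that round, statement (ii) of Lemma~\ref{lem-double-capture} (applied with $t=54D+9$) guarantees that $\alpha$ catches each of $b$ and $b'$ at least once between round $54D+9$ and round $9(54D+9)+3=486D+84$. By statement (ii) of Observation~\ref{obser}, both $\ell_R$ and $r_L$ lie between $b$ and $b'$, so in the course of this double swing $\alpha$ captures both of them, and this happens by round $486D+84$. At such a capture, say of $w\in\{\ell_R,r_L\}$, the switching test of \texttt{Event D} is met: $|\alpha.bag|/2+|w.bag|=(R-X_s)+X_s=R$, and $\alpha.label$ is the smallest label in $w.bag$ (which contains only singleton labels). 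Therefore $\alpha$ switches to a \texttt{search} mode by round $486D+84$, which is the claim.

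The step I expect to be the main obstacle is closing the loop between the conditional Lemma~\ref{lem-assumption-forever} and the genuine dynamics, in which $\alpha$ \emph{does} eventually stop swinging. The resolution is the uniqueness argument of the first paragraph: the only way the swinging hypothesis could fail inside the window $[0,54D+9]$ is if $\alpha$ itself had already switched to \texttt{search}, and that is precisely the trivial case. Together with the monotonicity of the bags, this makes the apparent circularity harmless, and the final constant $486D+84$ is then just $9(54D+9)+3$ coming from Lemma~\ref{lem-double-capture}.
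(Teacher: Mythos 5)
Your proof is correct and follows essentially the same route as the paper: the case split at round $54D+9$, the application of Lemma \ref{lem-assumption-forever}, and then Lemma \ref{lem-double-capture}(ii) together with Observation \ref{obser}(ii) to force a capture of $\ell_R$ or $r_L$ during one guaranteed traverse between $b$ and $b'$ by round $9(54D+9)+3=486D+84$, at which point the test in \texttt{Event D} fires. Your explicit resolution of the apparent circularity (only $\alpha$ can ever enter a \texttt{search} mode, and \texttt{hiber} presupposes an existing \texttt{search} agent, so the swinging hypothesis of Lemma \ref{lem-assumption-forever} holds whenever $\alpha$ has not yet switched) is a welcome addition left implicit in the paper, and your only slip---asserting that $\alpha$ captures \emph{both} $\ell_R$ and $r_L$, whereas a capture requires a common progress direction so a single one-directional traverse only guarantees one of them---is harmless, since one such capture suffices for your concluding step.
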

\begin{proof}
	If the agent $\alpha$ switches to a \texttt{search} mode before the round $54D+9$, then the statement of the lemma is true.
	Otherwise, in the round $54D+9$, every singleton agent stores in its bag $2(R-X_s)$ labels of all non-singleton agents, and both $\ell_R$ and $r_L$ store in their respective bags the set of labels of all $X_s$ singleton agents, in view of Lemma \ref{lem-assumption-forever}.
	Hence, from the round $54D+9$ on, once the agent $\alpha$ captures one of non-singleton agents whose bag stores $X_s$ distinct labels, e.g., $\ell_R$ or $r_L$, $\alpha$ is guaranteed to switch to a \texttt{search} mode.
	
	Since $t_{asso}<2D$, the agent $\alpha$ has found its associated team by the round $54D+9$.
	Let $b$ and $b'$ denote two non-singleton agents in the team associated with $\alpha$.
	Without loss of generality, assume that $\alpha$ is progressing towards $b$ in round $54D+9$.
	Observe that $54D+9$ is an odd number.
	In view of statement ii) of Lemma \ref{lem-double-capture}, from the round $54D+9$ to the round $9(54D+9)+3$, it happens at least once that the agent $\alpha$ moves towards $b'$, starting from $b$, and captures $b'$.
	Recall that from the round $54D+9$ on, both $\ell_R$ and $r_L$ are between $b$ and $b'$.
	Therefore, $\alpha$ captures one non-singleton agent $c$ before the round $9(54D+9)+3$ such that at the meeting $\alpha$ stores $2(R-X_s)$ labels and $c$ stores $X_s$ labels.
	As a result, $\alpha$ switches to a \texttt{search} mode at the meeting, before the round $9(54D+9)+3=486D+84$.
\end{proof}

In view of \texttt{Event E}, the agent $\alpha$ switches from a \texttt{search} mode to the mode $\hiber$, once it captures an agent in a \texttt{slow} mode.
Next, we show an upper bound on the round when the agent $\alpha$ switches to the mode $\hiber$.

\begin{lemma}\label{lem-alpha-switch-hiber}
	The agent $\alpha$ has switched to the mode $\hiber$ by the round $1458D+252$.
\end{lemma}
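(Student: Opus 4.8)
The plan is to show that once $\alpha$ enters a \texttt{search} mode---which, by Lemma~\ref{lem-alpha-switch-search}, happens in some round $t_\alpha < 486D+84$---it needs only $O(t_\alpha)$ further rounds to capture a slow agent and thereby switch to $\hiber$ via \texttt{Event E}. First I would fix the round $t_\alpha$ and the non-singleton agent $c$ that $\alpha$ captures in \texttt{Event D} at $t_\alpha$. Since $\alpha$ and $c$ share their progress direction at the capture, the parameter $port$ passed to the \texttt{search} mode points opposite to that direction, so from $t_\alpha$ on $\alpha$ moves with speed one in the direction $\mathcal{L}$ of Lemma~\ref{lem-progressing-R-X-s}. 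By that lemma there are exactly $R-X_s$ non-singleton slow agents lying ahead of $\alpha$ in direction $\mathcal{L}$, all progressing in direction $\mathcal{L}$, so $\alpha$ follows them.

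Next I would bound the distance from $\alpha$ to the nearest of these following agents, call it $d$. At round $t_\alpha$ the companion $c'$ of $c$ progresses in direction $\mathcal{L}$ and, having diverged from $c$ at combined speed one for $t_\alpha$ rounds, sits at distance $\dist(c,c')=2\lceil t_\alpha/2\rceil \le t_\alpha+1$ from $\alpha$ on the $\mathcal{L}$ side. Hence the nearest following agent $d$ is at distance $\ell \le t_\alpha+1$ from $\alpha$. I would then check that $d$ remains in its slow mode until $\alpha$ reaches it: a slow non-singleton agent leaves its slow mode only through \texttt{Event G}, and no \texttt{collect} agent exists before $\alpha$ and all singletons hibernate, while captures by other fast agents (\texttt{Event D}) or by $\alpha$ itself (\texttt{Event E}) leave $d$'s mode unchanged; moreover head-on meetings do not change $\alpha$'s mode, so $\alpha$ keeps moving until it captures a same-direction slow agent, which must be $d$.

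With these facts I would invoke Lemma~\ref{lem-basic-capture}: applying it at the first even round at least $t_\alpha$, with the follower $\alpha$ at distance $\ell$ behind $d$, yields capture within $2\ell \le 2t_\alpha+2$ additional rounds, so $\alpha$ captures $d$ by round at most $3t_\alpha+O(1)$. Substituting $t_\alpha \le 486D+83$ gives a capture round of at most $1458D+252$, and \texttt{Event E} makes $\alpha$ switch to $\hiber$ at that capture, which is the claim.

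The main obstacle I anticipate is the constant bookkeeping: matching the stated bound $1458D+252=3(486D+84)$ requires tracking parity carefully, since slow agents move only in odd rounds and hence the even-round hypothesis of Lemma~\ref{lem-basic-capture} must be met exactly, and the distance $\ell$ must be measured in the right round because the gap between $\alpha$ and $d$ can fluctuate by one during the transition from the \texttt{Event D} capture to the start of the \texttt{search} walk. A secondary point to verify cleanly is that $\alpha$'s first same-direction capture is indeed the nearest following agent and is not skipped over; this follows from the definition of capture, which requires the follower to move while the followed agent stays idle, exactly as in Lemma~\ref{lem-basic-capture}.
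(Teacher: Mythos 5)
Your proposal is correct and follows essentially the same route as the paper's proof: both combine Lemma~\ref{lem-alpha-switch-search}, Lemma~\ref{lem-progressing-R-X-s} and Lemma~\ref{lem-basic-capture} to show that, once $\alpha$ is in a \texttt{search} mode, it captures the nearest slow agent ahead of it within a factor-3 time overhead, and this capture triggers \texttt{Event E} and hence the switch to $\hiber$. The only (immaterial) difference is the anchor for the distance bound: you measure the gap at the switch round $t_\alpha$ via the companion $c'$ of the captured agent $c$, whereas the paper measures it at round $486D+84$ via the two agents $b,b'$ of $\alpha$'s associated team, between which $\alpha$ still lies; both yield the same constant $3(486D+84)=1458D+252$.
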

\begin{proof}
	Recall that in the round when the agent $\alpha$ switches from a \texttt{fast} mode to a \texttt{search} mode, $\alpha$ changes its progressing direction and the bag stored by $\alpha$ has exactly $2(R-X_s)$ labels. 
	In view of Lemma \ref{lem-progressing-R-X-s}, there are $R-X_s$ non-singleton agents ahead of $\alpha$, each progressing in the same direction as $\alpha$.
	Hence, the first non-singleton agent caught by $\alpha$ must be either $r_L$ or $\ell_R$.
	
	If $\alpha$ switches to the mode $\hiber$ before the round $486D+84$, then the lemma holds.
	Otherwise, let $b$ and $b'$ denote the two non-singleton agents in the team associated with $\alpha$.
	Without loss of generality, assume that the agent $\alpha$ moves towards $b$ in the round $486D+84$.

	Observe that $486D+84$ is even and that in the round $486D+84$, the distance between $b$ and $b'$ is exactly $486D+84$.
	Since $\alpha$ is between $b$ and $b'$, the distance between $\alpha$ and $b$ is at most $486D+84$.
	In view of Lemma \ref{lem-basic-capture}, the agent $\alpha$ captures $b$ by the round $486D+84+2(486D+84)=1458D+252$.
	Recall that from round $t_x$ ($\le D+1$) on, both $\ell_R$ and $r_L$ are between (inclusive) the non-singleton agents in every non-singleton team.
	Therefore, the agent $\alpha$ captures either $\ell_R$ or $r_L$ by the round $1458D+252$, and thus $\alpha$ switches to the mode $\hiber$ by the round $1458D+252$.
\end{proof}

In view of \texttt{Event F}, when a singleton agent in a \texttt{fast} mode meets the agent $\alpha$, while the latter is in the mode $\hiber$, the former switches to the mode $\hiber$ as well. 
Next, we prove that every singleton agent switches to the $\hiber$ mode and stays with $\alpha$.

\begin{lemma}\label{lem-hiber-all}
	Every singleton agent, other than $\alpha$, meets $\alpha$ and switches to the mode $\hiber$ by the round $13122D+2268$.
\end{lemma}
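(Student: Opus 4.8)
The plan is to combine Lemma~\ref{lem-alpha-switch-hiber}, which pins $\alpha$ to a fixed node once it enters mode $\hiber$, with the geometric trapping supplied by Observation~\ref{obser}~ii) and the swing-time estimate of Lemma~\ref{lem-double-capture}. First I would record where $\alpha$ comes to rest. By Lemma~\ref{lem-alpha-switch-hiber}, $\alpha$ switches to $\hiber$ by round $t_0:=1458D+252$, and, as established in the proof of that lemma, this happens exactly when $\alpha$ captures either $\ell_R$ or $r_L$; let $w$ be the node at which $\alpha$ then stays. Since from round $t_x$ onward $r_L$ is strictly right of $\ell_R$ and the two keep moving apart ($\ell_R$ progressing left, $r_L$ progressing right), the node $w$ lies strictly between $\ell_R$ and $r_L$ in every round $t\ge t_0$, and $\alpha$ remains at $w$ forever because mode $\hiber$ is stationary.

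Next I would show that every other singleton agent sweeps across $w$ shortly after $t_0$. Fix a singleton agent $c\ne\alpha$. Because $c$ is associated with some team by round $t_{asso}\le 2D-1$ (Lemma~\ref{lem-t-asso}) and, not being $\alpha$, never satisfies the condition in \texttt{Event~D} that triggers a \texttt{search} mode, $c$ keeps swinging between the two agents $b,b'$ of its associated team until it meets a hibernating agent. Hence either $c$ has already met $\alpha$ by round $t_0$ (and we are done), or $c$ is swinging at round $t_0$. Since $t_0=1458D+252$ is even, applying Lemma~\ref{lem-double-capture}~i) with $t=t_0$ shows that between round $t_0$ and round $9t_0=13122D+2268$ the agent $c$ captures each of $b$ and $b'$ at least once. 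In particular $c$ performs a full half-swing, moving monotonically from $b$'s node to $b'$'s node and thereby visiting every node of the current segment $[b,b']$.

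The crucial point is that this segment contains $w$. By Observation~\ref{obser}~ii), in every round after $t_x$ both $\ell_R$ and $r_L$ lie in $[b,b']$, and we already saw that $w\in(\ell_R,r_L)$; hence $w\in[b,b']$ throughout the half-swing. Therefore $c$ arrives at $w$ at some round in $[t_0,9t_0]$, where $\alpha$ is sitting in mode $\hiber$; this triggers \texttt{Event~F}, and $c$ joins the hibernating group at $w$. As $c$ was an arbitrary singleton agent other than $\alpha$, all of them reach $w$ by round $13122D+2268$, which is the claimed bound.

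The main obstacle is the geometric bookkeeping of the third paragraph: one must verify that $c$ literally lands on the single node $w$, not merely on an interval around it. This is handled by noting that during a half-swing $c$ moves one step per round in a fixed direction, so it visits consecutive nodes and cannot ``jump over'' $w\in[b,b']$. A secondary point to record is that $\alpha$ stays at $w$ while the singleton agents accumulate: by \texttt{Event~F}, each arriving fast agent switches to $\hiber$ without displacing those already present, as long as fewer than $X_s$ agents are collocated; only the arrival completing the full set of $X_s$ singletons fires the \texttt{else} branch of \texttt{Event~F} and switches the whole group to a \texttt{collect} mode. Thus every singleton agent other than $\alpha$ does reach $w$ and join $\alpha$ within the stated bound, with the understanding that the final arrival simultaneously initiates the group's transition out of $\hiber$.
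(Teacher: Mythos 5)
Your proof is correct and follows essentially the same route as the paper's: both pin $\alpha$ to a fixed node between $\ell_R$ and $r_L$ via Lemma~\ref{lem-alpha-switch-hiber}, invoke Observation~\ref{obser}~ii) to place that node inside every swinging agent's segment $[b,b']$, and apply Lemma~\ref{lem-double-capture}~i) at the even round $1458D+252$ to force a full traversal over $\alpha$'s node by round $9(1458D+252)=13122D+2268$. Your added remarks (that a fast agent cannot skip over $w$, and that arrivals hibernate without displacing the group) are just explicit versions of details the paper leaves implicit.
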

\begin{proof}
	In view of Lemmas \ref{lem-t-asso} and \ref{lem-t-x}, the agent $\alpha$ switches to the $\hiber$ mode, after the round $\max(t_x, t_{asso})$.
	From the round $t_{asso}$ on, every singleton agent is associated with a non-singleton team.
	From the round $t_x$ on, the agents $\ell_R$ and $r_L$ are between two non-singleton agents in every non-singleton team.
	As the agent $\alpha$ switches to the mode $\hiber$ at a node between $\ell_R$ and $r_L$ and any singleton agent $a$, other than $\alpha$, swings between two non-singleton agents in the team associated with $a$, $a$ must meet $\alpha$ and switches to the mode $\hiber$.
	
	Assume that there are still singleton agents swinging in the round $1458D+252$; otherwise, the lemma holds.
	Let $c$ denote any of these singleton agents and let $b$ and $b'$ denote the non-singleton agents in the team associated with $c$.
	Without loss of generality, assume that $c$ is progressing towards $b$ in the round $1458D+252$.
	Observe that $1458D+252$ is even.
	In view of statement i) of Lemma \ref{lem-double-capture}, from the round $1458D+252$ to the round $9(1458D+252)$, it happens at least once that the agent $\alpha$ moves towards $b'$, starting from $b$, and captures $b'$.
	Since the agent $\alpha$ stays idle between $b$ and $b'$, $c$ must meet $\alpha$ and switch to the $\hiber$ mode, by the round $9(1458D+252)=13122D+2268$.
	This concludes the proof.
\end{proof}

In view of \texttt{Event F}, after all $X_s$ singleton agents are at the same node in the mode $\hiber$, each of them switches to a \texttt{collect} mode and moves in the same direction.
Without loss of generality, assume that they are progressing right.
Recall that $X_s$ singleton agents switched to the mode $\hiber$ at the same node between the agents $\ell_R$ and $r_L$.
Hence, there are $R-X_s$ non-singleton agents, progressing right in a \texttt{slow} mode, ahead of every non-singleton agent.
In view of \texttt{Event G}, each time the set of singleton agents capture a non-singleton agent, the latter switches to the same \texttt{collect} mode as others at the meeting and the agent $\alpha$ decrements its variable $countMultiTeams$ by one.
Recall that $countMultiTeams$ initially stores $R-X_s$.
When $countMultiTeams$ becomes $0$, agents capture the furthest non-singleton agent on the right side and they move together left with speed one.
Finally, all the agents gather after capturing the leftmost non-singleton agent.
The agent $\alpha$ knows when gathering happens, since $\alpha$ stores the total number of agents in its variable $total$.
This proves that algorithm \texttt{Team Sizes One and Two} guarantees gathering.
Its complexity is summarized as Theorem \ref{theorem-1-2}.

\begin{theorem}\label{theorem-1-2}
	Algorithm {\tt Team Sizes One and Two} gathers $R$ teams of agents at the same node of an unoriented line in $O(D)$ rounds, where $D$ denotes the distance between the bases of the most distant teams.
\end{theorem}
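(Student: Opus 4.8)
The plan is to prove Theorem~\ref{theorem-1-2} by combining the chain of lemmas already established (Lemmas~\ref{lem-t-asso} through~\ref{lem-hiber-all}) with a single new ingredient, a complexity bound for the final \texttt{collect} phase, and then summing all the round-count bounds to get $O(D)$. The correctness — that gathering is eventually achieved — is essentially already in hand: Lemma~\ref{lem-hiber-all} guarantees that by round $13122D+2268 = O(D)$ all $X_s$ singleton agents are collocated with $\alpha$ in mode $\hiber$ at a node strictly between $\ell_R$ and $r_L$, and the description of \texttt{Event~F} then makes them all switch simultaneously to a common \texttt{collect} mode. What remains is to verify that the subsequent sweep gathers the $2(R-X_s)$ non-singleton agents within $O(D)$ additional rounds, and that $\alpha$ correctly detects the end via its variable $total = 2R - X_s$.

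First I would freeze the configuration at the start $t_{\mathrm{start}} = O(D)$ of the \texttt{collect} phase. Since every agent moves with speed at most one and $t_{\mathrm{start}} = O(D)$, while the initial bases span an interval of length $D$, the whole configuration lies in an interval of diameter $O(D)$ in round $t_{\mathrm{start}}$. Because $r_L$ is right of $\ell_R$ for all $t \ge t_x$ (Lemma~\ref{lem-t-x}), Observation~\ref{obser}(i) places the gathering node of the $X_s$ singletons between the two groups of $R - X_s$ slow non-singleton agents, one group receding to the right and one to the left. The collect set, moving with speed one in, say, the right direction, then follows the rightward slow agents; by Lemma~\ref{lem-basic-capture} it captures each agent it is following, and \texttt{Event~G} decrements $countMultiTeams$ at every such capture. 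When $countMultiTeams$ reaches $0$ (the last, furthest-right agent is taken), the set reverses into the opposite $(\cdot)$-\texttt{collect} mode and sweeps left, catching the leftward agents likewise. At the final capture all $2R - X_s$ agents are collocated, and since $a.total = 2R - X_s$, \texttt{Event~G} triggers transition to {\tt STOP}.

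The hard part will be showing that this \texttt{collect} phase costs only $O(D)$ rounds even though the number $R - X_s$ of agents to capture on each side can be $\Theta(R)$, hence unbounded in terms of $D$. The key point to argue carefully is that the agents need \emph{not} be captured at independent, additive cost: because all agents are confined to an interval of diameter $O(D)$ and the collect set crosses it at positive relative speed $1/2$, every agent progressing in the collect set's current direction is swept up ``in passing'' during one traversal. Formally, I would apply Lemma~\ref{lem-basic-capture} to only the \emph{furthest} agent in each direction — its initial gap is $O(D)$, so it is caught after $O(D)$ rounds, and every nearer agent is necessarily caught earlier in the same sweep. The leftward agents travel only $O(D)$ further left during the right sweep, so at the reversal the collect set is within $O(D)$ of the leftmost agent; a second application of Lemma~\ref{lem-basic-capture} bounds the left sweep by $O(D)$ as well. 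Summing $t_{asso}\le 2D-1$, $t_x \le D+1$, the bound $13122D+2268$ of Lemma~\ref{lem-hiber-all}, and the two $O(D)$ sweeps yields a total of $O(D)$, which completes the proof.
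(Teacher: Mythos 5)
Your proposal is correct and follows essentially the same route as the paper's proof: both take the round $t_{coll}$ when all singletons switch to a \texttt{collect} mode (bounded via Lemma~\ref{lem-hiber-all}), and then apply Lemma~\ref{lem-basic-capture} only to the furthest non-singleton agent in each sweep direction (the paper's agents $p$ and $q$), the intermediate agents being swept up in passing without slowing the collect set. The paper merely carries out your $O(D)$ bookkeeping with explicit constants, concluding that gathering occurs by round $118106D+20412$.
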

\begin{proof}
	Let $t_{coll}$ denote the round when all $X_s$ singleton agents switch from the $\hiber$ mode to a \texttt{collect} mode.
	Let $p$ (resp. $q$) denote the rightmost (resp. leftmost) non-singleton agent in the round $t_{coll}$.
	
	Let $t_1$ denote the round when $\alpha$ captures $p$.
	Note that $t_1$ must be even.
	In round $t_1$, the distance between $p$ and $q$ is at most $t_1+D$, as the distance between $p$ and its team companion $p'$ is $t_1$ and the distance between $p'$ and $q$ is at most $D$.
	In view of Lemma \ref{lem-basic-capture},  $\alpha$ captures $q$ by the round $t_1+2(t_1+D)=3t_1+2D$.
	Therefore, gathering is achieved by the round $3t_1+2D$.
	
	The distance between the agent $\alpha$ and $p$ in the round $t_{coll}$ is at most $t_{coll}+D$, as the distance between $\alpha$ and $r_L$ is at most $t_{coll}$ and the distance between $r_L$ and $p$ is at most $D$.
	
	In view of Lemma \ref{lem-basic-capture}, if $t_{coll}$ is even, then $t_1\le t_{coll}+2(t_{coll}+D)=3t_{coll}+2D$;
	otherwise, $t_1\le t_{coll}+1+2(t_{coll}+D-1)=3t_{coll}+2D-1$.
	Hence, we have $t_1\le 3t_{coll}+2D$.
	In view of Lemma \ref{lem-hiber-all}, we have $t_{coll}\leq 13122D+2268$, and thus gathering is achieved by the round $3t_1+2D\le 9t_{coll}+8D \leq118106D+20412$.
	This concludes the proof.
\end{proof}

%
%
%



\subsection{When $x_{max}>x_{min}$ are arbitrary distinct positive integers}
In this section, we assume that $x_{max}>x_{min}$ are arbitrary distinct positive integers.
This is the general case of teams of non-equal sizes.

We show that in this general case gathering can also be achieved in time $O(D)$. The proof is by a reduction to Algorithm {\tt Team Sizes One and Two}  presented in Section \ref{sec-one-and-two}.
Let $x_{avg}=(x_{max}+x_{min})/2$.
Note that $x_{avg}$ does not have to be an integer.
For each team on the unoriented line, if the team size is at most $x_{avg}$, then we regard the team as a singleton team; namely, all agents in the team behave exactly the same throughout the algorithm and are virtually regarded as one agent; otherwise, we regard the team as a non-singleton team composed of two agents; specifically, the agent with the smallest label in the team is regarded virtually as a non-singleton agent that calls 
the procedure \texttt{Proceed}($-1$, true) after wake-up, and all the other agents in the team are regarded virtually as the other non-singleton agent; namely, they call the procedure \texttt{Proceed}($1$, true) after wake-up, and behave exactly the same throughout the algorithm.
As a result, every team either has one virtual agent or two virtual agents.
Thus, we can apply Algorithm  \texttt{Team Sizes One and Two} to achieve gathering in time $O(D)$, where $D$ denotes the distance between the bases of the most distant teams. This implies the following theorem that completes our solution of the gathering problem.

\begin{theorem}
	If there are at least two teams of different sizes, then all the agents can be gathered in an unoriented line in $O(D)$ rounds, where $D$ denotes the distance between the bases of the most distant teams.
\end{theorem}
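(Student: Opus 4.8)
The plan is to prove this by a reduction to Algorithm \texttt{Team Sizes One and Two}, whose correctness and $O(D)$ running time were established in Theorem \ref{theorem-1-2}. Set $x_{avg}=(x_{max}+x_{min})/2$. First I would classify each physical team: a team is declared \emph{virtually singleton} if its size is at most $x_{avg}$, and \emph{virtually non-singleton} otherwise. Since $x_{avg}$ is the average of the two distinct integers $x_{min}<x_{max}$, we have $x_{min}<x_{avg}<x_{max}$, so the smallest team is virtually singleton and the largest team is virtually non-singleton; hence the resulting virtual instance contains at least one team of each type, which is exactly the precondition of Algorithm \texttt{Team Sizes One and Two}. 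The number of virtual teams equals the number $R$ of physical teams, and the virtual bases coincide with the physical bases, so the virtual most-distant distance is again $D$.

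Next I would specify the simulation. Each virtually singleton team is simulated by its whole physical team acting as one \emph{virtual agent}; each virtually non-singleton team is simulated by two virtual agents, where the smallest-label physical agent plays the virtual agent that runs \texttt{Proceed}$(-1,\mathit{true})$ (i.e.\ $\negSlow$) and all remaining physical agents of that team jointly play the virtual agent that runs \texttt{Proceed}$(1,\mathit{true})$ (i.e.\ $\plusSlow$). In the wake-up round the agents of a team are collocated, so each can count the team size, compare it with $x_{avg}$, decide its type, and---for non-singleton teams---determine to which of the two virtual agents it belongs and set the virtual $teamMem$ to the other virtual agent. Each physical agent then permanently records in its state a \emph{representative label}, namely the smallest physical label among the agents simulating the same virtual agent; these representatives are pairwise distinct because the simulating groups are disjoint, and each fits in $O(\log L)$ bits, hence within the bounded memory of the automaton. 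From then on, every physical agent runs the code of Algorithm \texttt{Team Sizes One and Two} using its representative label wherever that algorithm refers to ``its label''.

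The heart of the argument is to show that this simulation is faithful round by round. I would argue by induction on rounds that (i) all physical agents simulating a common virtual agent remain collocated and in the same virtual state, and (ii) the set of virtual agents present at any node, together with their virtual modes and representative labels, is exactly the configuration that Algorithm \texttt{Team Sizes One and Two} would produce on the virtual instance. For (i), all such physical agents start together and carry identical virtual state; since at any meeting they see the same set of collocated agents, they group the collocated physical agents by representative label into the same virtual configuration, apply the same (unambiguous) event rule of \texttt{Team Sizes One and Two}, and therefore issue the same move and the same update, so they never split. Given (i), a physical meeting occurs exactly when the corresponding virtual meeting occurs, so (ii) follows by applying the event rules to the reconstructed virtual configuration. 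The one point needing care is the termination test: Algorithm \texttt{Team Sizes One and Two} halts when the number of \emph{agents} present equals the stored total, whereas in the simulation each physical agent instead compares the number of \emph{distinct virtual agents} present (obtained by grouping collocated agents by representative label) with the total number of virtual agents (computable from the bags as $|a.bag|+|b.bag|$). Since each virtual agent is a collocated physical group, all virtual agents meet at one node if and only if all physical agents do, so this rewritten test fires exactly at physical gathering, and all physical agents then transit to \texttt{STOP} simultaneously.

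Finally, because each virtual round is executed in one physical round with identical moves, the execution terminates in the same number of rounds as Algorithm \texttt{Team Sizes One and Two} on the virtual instance, which by Theorem \ref{theorem-1-2} is $O(D)$ with the common value of $D$. I expect the main obstacle to be the careful verification of faithfulness---in particular, confirming that grouping collocated physical agents by representative label lets every physical agent reconstruct the exact virtual configuration seen by Algorithm \texttt{Team Sizes One and Two}, so that the groups never desynchronize and the rewritten termination test coincides with true gathering. The classification argument, the distinctness of representatives, the bounded-memory feasibility, and the time bound are then routine.
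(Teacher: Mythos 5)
Your proposal is correct and follows essentially the same route as the paper: the paper's proof is precisely this reduction to Algorithm \texttt{Team Sizes One and Two}, using the same threshold $x_{avg}=(x_{max}+x_{min})/2$ to classify teams and the same assignment (smallest-label agent alone, all remaining agents jointly) of physical agents to virtual agents. In fact, your write-up is more careful than the paper's brief sketch, since you explicitly handle the representative labels, the round-by-round faithfulness of the simulation, and the needed rewriting of the termination test in terms of virtual rather than physical agents --- details the paper leaves implicit.
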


\section{Discussion}

In this section, we discuss the necessity of two assumptions made in our model. 
The first assumption concerns the knowledge of the size $L$ of the space of labels and of the number $R$ of teams, and the second is about  the simultaneous start of all agents.

The necessity of knowing $L$ is straightforward: the Mealy automaton formalizing the agents must have sufficiently many states to code the labels of the agents, in order to permit interaction between them at the meetings.
Next, we argue that without the knowledge of $R$, the gathering problem cannot be solved. Suppose that  it can, and that some hypothetical algorithm induces all agents to gather at one node $u$ and stop in round $t$. Let $A$ be the set of these agents. Then the adversary can add another team whose base is at distance $2t+1$ from $u$. In the first $t$ rounds, none of the agents from the set $A$ could meet with any agent from the added team. Hence, in the augmented set of agents, agents from $A$ would behave identically as before and hence gather at $u$ and stop in round $t$. This would be incorrect, as the added agents could not be at $u$ in round $t$.

Finally, we argue that, if the start was not simultaneous, then no bound on gathering time could be established, even in the oriented line and even if time counted from the wakeup round of the last agent. Indeed, consider two teams of some size $x$. If all trajectories of all agents were bounded, then the adversary could choose the bases of the teams so far that gathering would never happen. 
So at least one agent, call it $a$, in one of these teams must travel arbitrarily far from its base. Suppose its trajectory is left-progressing.
Thus, before meeting an agent from the other team, agent $a$ can go only at a distance of at most $d$ right of its base.  Consider any positive integer $B$.
The adversary chooses the base of the team of $a$ at distance $d+1$, left of the other base, wakes up agent $a$ in round 0, and wakes up all other agents in round $t$ when agent $a$ is at distance $B$ left of its base. We start counting time in round $t$. However, in round $t$, agent $a$ is at distance $B$ from any other agent, so it takes at least $B/2$ rounds until any meeting involving agent $a$ can occur.

\section{Conclusion}

We gave a complete solution of the feasibility and complexity problem of gathering teams of agents modeled as deterministic automata, both on the oriented and on the unoriented infinite line, showing differences that arise from the orientation feature. To the best of our knowledge, this is the first time gathering of deterministic automata that cannot communicate remotely is considered in an infinite environment. For infinite lines, it is impossible to design an automaton large enough to explore the entire graph, and hence we had to invent new gathering techniques. A natural generalization of our results would be to study gathering teams of automata in arbitrary (connected) infinite graphs. 




\section*{Declarations}

A preliminary version of this article appeared at the 28th International Conference on Principles of Distributed Systems (OPODIS 2024). Compared to the conference version, this manuscript includes full proofs of all lemmas and theorems and a new section in which we present an optimal $O(D)$-time algorithm for gathering of teams of non-identical sizes.

\end{document}